\providecommand{\tabularnewline}{\\}
\theoremstyle{plain}
\newtheorem{thm}{\protect\theoremname}
\theoremstyle{plain}
\newtheorem{lem}[thm]{\protect\lemmaname}
\theoremstyle{plain}
\newtheorem{prop}[thm]{\protect\propositionname}
\pgfplotsset{compat=1.16}
\providecommand{\lemmaname}{Lemma}
\providecommand{\propositionname}{Proposition}
\providecommand{\theoremname}{Theorem}
\begin{document}
\title{Scenarios for the Transition to AGI\thanks{We would like to thank Ajay Agrawal, Tamay Besiroglu, Erik Brynjolfsson,
Tom Davidson, Joe Stiglitz and seminar participants at Anthropic,
Google, GovAI, and OpenAI for helpful conversations and comments.}}
\date{March 2024}
\author{Anton Korinek and Donghyun Suh\\
University of Virginia, Brookings, and GovAI\\
$\phantom{x}$\\
}
\maketitle
\begin{abstract}
We analyze how output and wages behave under different scenarios for
technological progress that may culminate in Artificial General Intelligence
(AGI), defined as the ability of AI systems to perform all tasks that
humans can perform. We assume that human work can be decomposed into
atomistic tasks that differ in their complexity. Advances in technology
make ever more complex tasks amenable to automation. The effects on
wages depend on a race between automation and capital accumulation.
If the distribution of task complexity exhibits a sufficiently thick
infinite tail, then there is always enough work for humans, and wages
may rise forever. By contrast, if the complexity of tasks that humans
can perform is bounded and full automation is reached, then wages
collapse. But declines may occur even before if large-scale automation
outpaces capital accumulation and makes labor too abundant. Automating
productivity growth may lead to broad-based gains in the returns to
all factors. By contrast, bottlenecks to growth from irreproducible
scarce factors may exacerbate the decline in wages. 
\end{abstract}
\noindent \textbf{Keywords:} artificial general intelligence, tasks
in compute space, automation, capital accumulation\medskip{}

\noindent \textbf{JEL Codes:} O33, E24, J23, O41

\newpage{}

\section{Introduction}

Recent advances in AI promise significant productivity gains, but
have also renewed fears about the displacement of labor. A growing
number of both AI researchers and industry leaders suggest that it
is time for humanity to prepare for the possibility that we may soon
reach Artificial General Intelligence (AGI) -- AI that can perform
all cognitive tasks at human levels and thus automate them.\footnote{This includes, for example, the CEOs of the three leading AGI labs,
OpenAI's Sam Altman, Google Deepmind's Demis Hassabis, and Anthropic's
Dario Amodei \citep{hasabis23,time2024AGI}. It also includes the
world's most renowned AI researchers, for example two of the godfathers
of deep learning, Geoffrey Hinton and Yoshua Bengio.} This raises a number of fundamental economic questions. What would
the transition to AGI look like? What would AGI imply for output,
wages, and ultimately human welfare? Would wages rise or collapse?

Our paper introduces an economic framework to think about these questions
and evaluate alternative scenarios of technological progress that
may culminate in AGI. Our starting assumption is that human work can
be decomposed into unchanging atomistic tasks that differ in how complex
they are. Advances in technology make ever more complex tasks amenable
to automation. We capture this by assuming that there is a threshold
of task complexity that can be automated at a given time, captured
by an \emph{automation index}. This index grows exogenously over time,
in line with regularities such as Moore's Law. Although our results
hold more broadly, we suggest that in the Age of AI, a natural measure
of task complexity is the amount of compute (shorthand for computational
resources) required for the execution of a task by machines. Some
tasks, such as adding up numbers in a spreadsheet, can be performed
with minimal computation. In contrast, others require a substantial
amount of computation for machines, despite seeming natural and effortless
for humans, such as navigating a bipedal body over an uneven surface.
We describe how tasks differ in computational complexity using a distribution
function that captures tasks in complexity space or, referring to
our preferred interpretation, \emph{tasks in compute space}.\footnote{Although the computational complexity of tasks is most evident for
cognitive tasks, the automation of physical tasks is also greatly
constrained by the computational complexity involved, as captured,
e.g., by Moravec's paradox \citep{moravec88}. We analyze an extension
that explicitly accounts for cognitive and physical tasks in Section
4.}

\begin{figure}
\centering{}\includegraphics[width=0.75\textwidth]{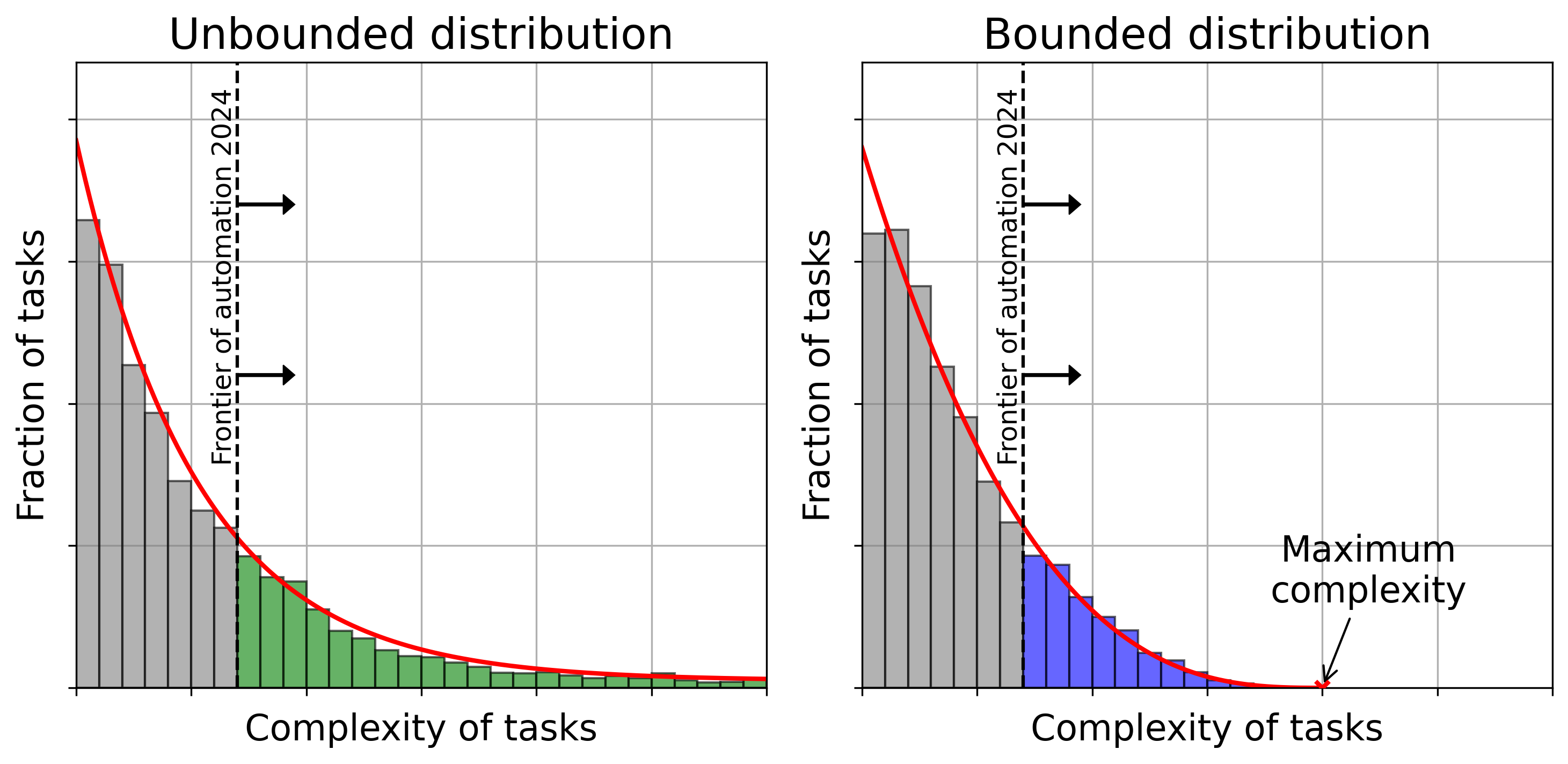}\caption{\label{fig:examples_density}Unbounded and bounded distributions of
tasks in complexity space}
\end{figure}
Throughout the paper, we analyze two opposing cases for the distribution
of tasks in complexity space, which result in sharply different economic
outcomes. First, we consider the possibility that human tasks are
of unbounded complexity, illustrated in the left-hand panel of Figure
\ref{fig:examples_density}. In this case, advances in the automation
index, illustrated by the right-ward movement of the vertical ``frontier
of automation,'' imply that more and more tasks are automated over
time, but that there always remain tasks and by extension jobs that
cannot be automated. Second, we consider a bounded distribution of
task complexity, which reflects that the computational capabilities
of the human brain are finite, as discussed, e.g., in \citet{carlsmith20}.
Bounded distributions result in full automation within finite time
when the frontier of automation crosses the maximum complexity of
tasks performed by humans. An alternative interpretation for tasks
being to complex to automate is that society may choose not to automate
certain tasks even when it is feasible to do so. This may apply, for
example, to some of the tasks performed by priests, judges, or lawmakers.

\medskip{}

We lay out an economic model in which atomistic tasks are gross complements
that are combined to produce final goods. In the spirit of \citet{zeira98}
and \citet{acemoglur18,acemoglur22},\textbf{ }all tasks can be performed
by labor, and automated tasks can be performed by either labor or
capital. However, unlike in the described works, our main focus is
on the edge cases that arise as we come close to full automation. 

Our analysis begins by examining the equilibrium under fixed supplies
of capital and labor. We show that automation can have dramatic impacts
on wages and output even before it reaches all tasks. There exists
a threshold level of the automation index that separates two distinct
regions. As long as the index remains below the threshold, labor remains
scarce relative to capital, and wages remain high. However, once the
automation index surpasses the threshold, the economy enters a second
region, where the scarcity of labor is alleviated, despite the presence
of some tasks that still need to be performed by humans. In this region,
labor and capital become perfect substitutes at the margin so wages
decline starkly to equal the marginal product of capital. The economy
exhibits behavior akin to an $AK$ model.

Next, we characterize the effects of automation on the economy's factor
price frontier (FPF), which reflects all possible combinations of
factor prices that may result from a given level of technology under
different capital/labor ratios. The FPF provides general insights
into the effects of automation that do not depend on specific assumptions
on capital accumulation. We find that for a given level of automation,
wages lie within a bounded interval that expands as the automation
index rises -- but only as long as automation is incomplete. Once
all tasks are automated, the factor price frontier discontinuously
collapses to a single point at which the effective returns to labor
and capital are equalized. For given factor endowments, the effects
of automation on wages are hump-shaped: for low levels of automation,
advances in automation increase wages as the economy becomes more
productive, but for higher levels of automation, wages decline due
to the displacement of labor.

We analyze dynamic settings and show that the effects on wages are
determined by a \emph{race between automation and capital accumulation}.
In addition to the previous two opposing effects on wages from rising
productivity and labor displacement, automation also triggers capital
accumulation that moves the economy up on the factor price frontier,
increasing wages. We characterize an upper bound on output and wages
that is reached in the limit case that the capital stock can instantaneously
adjust to its optimal level whenever automation advances. We show
a powerful analytic result: For any optimizing representative agent
with linearly separable intertemporal preferences, the effects of
automation on output and wages will lie between a lower bound captured
by the constant-capital case and the described upper bound.

When the complexity distribution of tasks is bounded, full automation
is reached in finite time and leads to a collapse in wages, no matter
what savings behavior the representative agent pursues. For unbounded
complexity distributions of tasks, we show that if the tail of remaining
tasks is sufficiently thick, wages will rise forever. By contrast,
if the tail of unautomated tasks is too thin, wages will eventually
collapse.

Next, we simulate a range of scenarios to illustrate our findings
numerically. (Figure \vref{fig:scenarios} shows the main results.)
We start with a ``business-as-usual scenario,'' which captures the
traditional notion that a constant fraction of tasks is automated
each period, similar to \citet{aghional19}. This corresponds to a
Pareto distribution for task complexity together with exponential
growth in the automation index. Since the maximum complexity of tasks
in this scenario is unbounded, true AGI will not be reached in finite
time. In our calibration, both output and wages rise forever in this
scenario, at a pace similar to what advanced countries have experienced
over the past century.

Next we consider two AGI scenarios that span the range of estimates
provided by Geoffrey Hinton, one of the godfathers of deep learning,
who estimated in May 2023 that AGI may be reached within 5 to 20 years---after
declaring that he had ``suddenly switched {[}his{]} views on whether
these things are going to be more intelligent than us.\textquotedblright{}
In our ``baseline AGI scenario'' we assume a bounded task distribution
such that full automation is obtained within 20 year.\footnote{As economists, we hope that the computational complexity of at least
some atomistic tasks that go into writing economics papers is far
into that right tail. Alas, this might be wishful thinking.} In an ``aggressive AGI scenario'' we assume a shorter-tailed distribution
that implies full automation within five years. Our simulation results
imply ten times faster growth than in the business-as-usual scenario,
especially in the aggressive AGI scenario. However, wages collapse
as the economy approaches full automation.

In a fourth scenario, we consider the possibility that there is a
large bout of automation in the near term---for example because AI
rapidly automates cognitive jobs---but that there remains a long
tail of tasks that are harder to automate. As a result of the initial
bout of automation, the economy enters the region in which labor loses
its relative scarcity value, and wages in our simulation collapse.
However, after capital accumulation has caught up sufficiently, labor
becomes sufficiently scarce again so that the economy returns to region
1 and wages rise in line with output growth.

We extend our baseline model to analyze several additional important
considerations. First, we consider the role of fixed factors (such
as minerals or matter) and show that they may pose a bottleneck that
holds back economic growth and worsens the outlook for wages, ultimately
leading to stagnation accompanied by a wage collapse. Next, we add
an innovation sector to analyze the potential for automating technological
progress and show that this lifts the returns of all factors including
wages. We illustrate that sufficient automation may give rise to a
growth singularity whereby output takes off. 

Furthermore, we analyze societal choices to retain certain jobs as
exclusively human even when they can be automated (e.g., priests and
judges), and show that a sufficient volume of such ``nostalgic jobs''
may help to keep labor sufficiently scarce so that wages continue
to grow even when full automation is technically possible. We analyze
the wage-maximizing rate of automation and show that slowing down
automation in an AGI scenario may deliver significant gains to workers
albeit at the cost of forgoing a growing fraction of output.

Next, we evaluate the impact of automation on workers with heterogeneous
skills and susceptibility to being automated. We find that automation
in such a scenario may give rise to an ever-declining fraction of
superstar workers earnings ever-growing wages, whereas the majority
of the labor force is starkly devalued by automation. Finally, we
explore the role of compute as an example of specific capital that
is tailored to automating specific tasks. We observe that in the short
term, compute may earn very high returns, but after an adjustment
period during which sufficient compute has been accumulated (and which
may last long), compute may become just another form of capital that
earns the same return as all other types of capital. 

\paragraph{Related Literature }

The foundational work of \citet{aghional19} explores the impact of
artificial intelligence on economic growth, offering valuable insights
into how technological advances in AI, including AGI, might influence
future economic trajectories. \citet{jones23} underscores the risk
of technological progress, emphasizing existential risk---a concept
crucial in AGI discussions. \citet{davidson23} analyzes a model of
the factors that may lead to a take-off in economic growth if technology
advances near AGI but does not focus on the wage implications. \citet{besiroglual2023}
show how advances in AI may accelerate growth by speeding up R\&D,
and \citet{erdil2023explosive} review the factors by which AGI may
give rise to exponential growth. \citet{trammellk23} provide a useful
survey on the broader implications of advanced artificial intelligence
on economic growth. 

A critical body of literature explores the dynamics between labor
and automation. Seminal works by \citet{acemoglur18,acemoglur22}
and \citet{autor19} provide insights into how automation reshapes
labor markets, focusing on technology as a substitute for individual
worker tasks or how workers and tasks can complement or substitute
for technology. \citet{eloundoual23} and \citet{felten2023languagemodelers}
provide excellent empirical analyses of which tasks are amenable to
automation by the current wave of foundation models. These studies
offer a useful lens for understanding the economic implications of
AI before AGI is reached. Our contribution to these strands of literature
is to look at the limit case of what happens if either all work tasks
are automated or we asymptote towards a world in which all tasks are
automated.

Our paper is also related to a broader literature on AGI and superintelligence
literature. \citet{good1965speculations} was the first to articulate
the potential of an intelligence explosion if AGI is reached. \citet{bostrom14}
provides a comprehensive exploration of superintelligence, highlighting
the potential capabilities of AGI and the profound implications these
might have for society. \citet{yudkowsky2013intelligence} discusses
several of the economic implications of the transition to AGI.

\section{A Compute-Centric Model of Automation\label{sec:A-Compute-Centric-Model}}

\subsection{Tasks in Compute Space}

\shadowbox{\begin{minipage}[t]{0.96\columnwidth}%
\textbf{compute }/k\textschwa m-\textprimstress py\"{u}t/\medskip{}

\textbf{verb:} to determine by calculation

$\phantom{\textbf{verb:}\ }$\emph{The system computed the length
of the shortest path.}

\textbf{noun:} the combined computational resources available for
information pro-

$\phantom{\textbf{noun:}\ }$cessing tasks

$\phantom{\textbf{noun:}\ }$\emph{Modern AI relies on vast amounts
of compute.}\medskip{}

\textbf{Etymology:} Derived from the Latin verb \textquotedbl computare,\textquotedbl{}
meaning \textquotedbl to count, sum up, or reckon together,\textquotedbl{}
the word \textquotedbl compute\textquotedbl{} entered the English
language as a verb in the 16th century. The noun form \textquotedbl compute\textquotedbl{}
gained prominence more recently with the advent of high-performance
digital computers and the increasing need to describe the resources
required for computation.%
\end{minipage}}

\paragraph*{Atomistic Job Tasks}

A central assumption of our analysis is that the work performed by
humans is composed of tasks and sub-tasks -- or unchanging atomistic
task -- that differ in how easily they are automated. In our baseline
model, we focus on cognitive tasks and their potential for automation.
In this setting, an atomistic task is a well-defined computational
assignment that contributes to the accomplishment of a larger job
task. 

These atomistic tasks are fundamental and are significantly smaller
than the tasks that are listed in O{*}Net. Table \ref{tab:top5econ}
lists, for example, the top-5 O{*}Net tasks of economists: to study
data; conduct and disseminate research; compile, analyze and report
data; supervise research; and teach. Each of these O{*}Net ``job
tasks'' involves a wide variety of different atomistic tasks. For
example, the O{*}Net task ``teach theories of economics'' may require
first planning the overall task, recalling different economic theories,
synthesizing a structure, preparing slides, formulating lectures,
synthesizing speech and affect, decoding and responding to student
questions, preparing problem sets, distributing problem sets, grading
problem sets, and so on---all while keeping track of the plan. It
may also require tasks such as recognizing emotional expressions on
students' faces, using theory of mind to evaluate student progress
and dynamically adjust the structure, etc. 

All of these tasks involve a set of basic human brain functions, which
constitute a form of computation. Some of these functions are easily
performed by machines and therefore highly susceptible to cognitive
automation \citep{korinek23llm_wp}, whereas others are more difficult.
What matters for our purposes here is how computation-intensive they
are using machines. 

Recent literature on technology on labor markets observes that innovation
typically gives rise to new job tasks \citep[e.g.,][]{acemoglur18,autor19}.
This holds true when viewed from the perspective of high-level job
tasks such as those captured by O{*}Net. However, when viewed from
an atomistic level that reflects basic brain functions, innovation
merely recombines atomistic tasks in novel ways to produce novel high-level
tasks and jobs. For example, the novel task of ``prompt engineering''
may require atomistic tasks such as defining a desired output, crafting
an initial prompt, entering it, reading the output, evaluating it,
deciding whether to iterate, and finally sharing the output---all
functions that existed long before the invention of generative AI
systems that triggered prompt engineering.

\begin{table}
\begin{itemize}
\item Study economic and statistical data in area of specialization, such
as finance, labor, or agriculture. 
\item Conduct research on economic issues, and disseminate research findings
through technical reports or scientific articles in journals. 
\item Compile, analyze, and report data to explain economic phenomena and
forecast market trends, applying mathematical models and statistical
techniques. 
\item Supervise research projects and students' study projects. 
\item Teach theories, principles, and methods of economics. 
\end{itemize}
\caption{Top-5 Tasks performed by economists (O{*}Net database)\label{tab:top5econ}}
\end{table}

\paragraph*{Task Complexity and Compute Intensity }

Our baseline model emphasizes differences in complexity as a key dimension
when studying the automation potential of tasks. Our preferred interpretation
for what makes tasks difficult to automate is their compute intensity,
which refers to the amount of computational resources required to
perform a specific task. Compute intensity can easily be measured
by the amount of floating point operations (FLOP) that need to be
executed to perform a given task.  The computational complexity for
machines to execute a task often differs starkly from how easy or
difficult it is for humans.\footnote{For example, \citet{moravec88} observed that some tasks that \emph{feel}
easy to execute for humans, such as vision processing, employ a large
amount of dedicated grey matter and are, in fact, computationally
quite intensive. There are also certain tasks that require very little
compute in dedicated machines but that are difficult for the human
brain since it has not evolved for them: for example, arithmetic operations
take just one FLOP on a basic computer but require significant amounts
of grey matter for human brains to perform, likely involving the equivalent
of billions of FLOP.} Still, it is the computational complexity for machines that determines
whether a task can be automated.

\paragraph{Advances in Computing}

One of the main drivers of recent advances in AI has been the increased
availability of computing power. Moore's Law, first described by Gordon
\citet{moore65}, describes that the performance of cutting-edge computer
chips doubles approximately every two years. The regularity has held
for the past sixty years. Additionally, the amount of compute deployed
in cutting-edge AI systems has grown even faster over the past decade,
doubling roughly every six months, as shown in \citet{sevillah22}
and depicted in Figure \ref{fig:computegrowth}. Improvements in algorithms
have further accelerated the growth in capabilities of cutting-edge
AI systems \citep{besiroglual2023}. 

For our analysis below, we assume that there is an automation index
that captures the maximum complexity of tasks that can be automated.
This index grows exponentially at an exogenous rate, mirroring the
type of advances captured by Moore's Law and Figure \ref{fig:computegrowth}.
As the automation index increases, a growing mass of tasks can be
automated.

\begin{figure}
\noindent \begin{centering}
{\small{}\includegraphics[width=0.75\columnwidth]{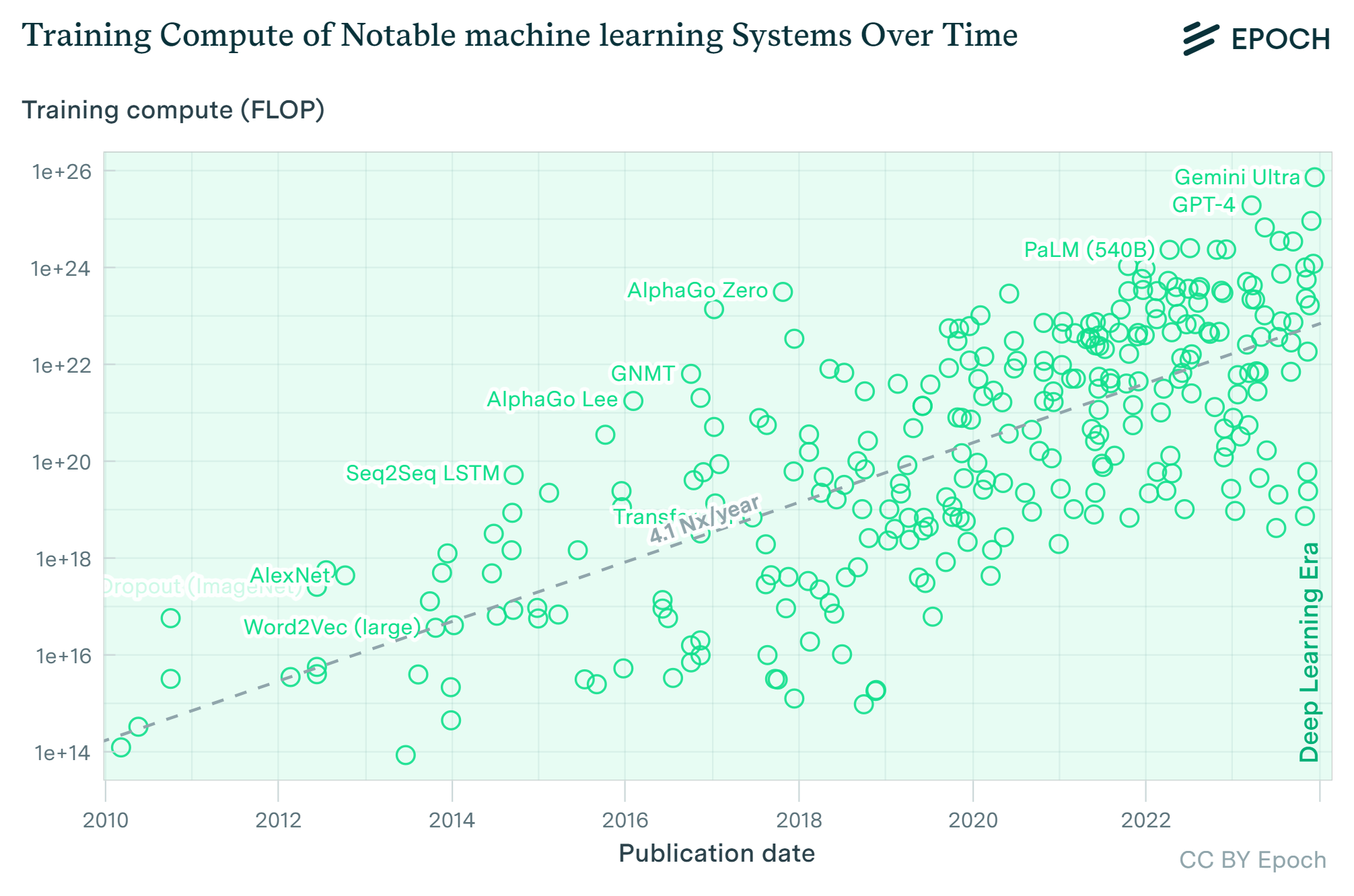}}{\small\par}
\par\end{centering}
{\small{}\caption{\label{fig:computegrowth}Training compute of frontier AI systems
over time\protect \\
(Copyright \textcopyright{} 2024 by Epoch under a CC-BY-4.0 license;
\citealp{sevillah22}.)}
}{\small\par}
\end{figure}

\subsection{Baseline Model }

Consider a representative household in a static economy who is endowed
with $L=1$ units of labor and $K>0$ units of capital. There is a
continuum of tasks that differ in their computational complexity $i$.
The distribution function $\Phi(i)$ reflects the cumulative mass
of tasks with complexity $\leq i$ and satisfies $\Phi(0)=0$ and
$\lim_{i\rightarrow\infty}\Phi(i)=1$. If the distribution function
is differentiable, we call its derivative $\phi(i)$ the density of
tasks of complexity $i$. Examples are shown in Figure \ref{fig:examples_density}.

To produce aggregate output $Y$, we combine all the tasks of different
complexity using a CES aggregator with elasticity of substitution
$\sigma$. 
\begin{equation}
Y=A\left[\int_{i}y(i){}^{\frac{\sigma-1}{\sigma}}d\Phi(i)\right]^{\frac{\sigma}{\sigma-1}}\label{eq:Y}
\end{equation}
where $y(i)$ is the amount of type $i$ tasks employed in the production
of output. We generally assume $\sigma<1$, reflecting that the atomistic
tasks are gross complements.

Each task is performed using capital $k(i)$ and labor $\ell(i)$
according to the production function
\begin{equation}
y(i)=a_{K}(i)k(i)+a_{L}(i)\ell(i)\label{eq:y(i)}
\end{equation}
where the coefficients $a_{K}(i)$ and $a_{L}(i)$ reflect the efficiency
of capital and labor. We assume that the exogenous index $I$ reflects
the state of automation and defines a complexity threshold such that
all tasks below the threshold can be performed with either capital
or labor but all the tasks above the threshold require labor. We normalize
the technological parameters $a_{K}\left(i\right)=a_{L}\left(i\right)=1$
except that $a_{K}\left(i\right)=0$ if $i\geq I$. In other words,
\[
y\left(i\right)=\begin{cases}
k\left(i\right)+\ell\left(i\right) & \text{for }i<I\\
\ell\left(i\right) & \text{for }i\geq I
\end{cases}
\]

\paragraph*{Strategies}

The representative agent supplies her endowments of labor and capital
every period at the prevailing factor prices $w$ and $R$ and makes
no interesting economic decisions. The representative firm in the
economy maximizes profits by hiring capital $k\left(i\right)$ and
labor $\ell\left(i\right)$ for each task at the prevailing factor
prices $w$ and $R$ to produce $y\left(i\right)$, which is then
combined to produce final output. The firm's maximization problem
is
\[
\max_{k(i),\ell(i)}Y-R\int_{i}k(i)d\Phi(i)-w\int_{i}\ell(i)d\Phi(i)\quad\text{s.t.}\quad(1),(2)
\]

\paragraph{Equilibrium}

An equilibrium in the baseline model consists of a set of $\left\{ k\left(i\right),\ell\left(i\right),y\left(i\right)\right\} _{i\geq0}$
and factor prices $w$ and $R$ such that the representative firm
solves its maximization problem and markets for capital and labor
clear, i.e., 
\[
\int_{i}k(i)d\Phi(i)=K\qquad\int_{i}\ell(i)d\Phi(i)=1
\]
Since there are no market imperfections, the described equilibrium
also constitutes the first-best of the economy.

\subsection{Equilibrium: Characterizing Two Regions}

\paragraph*{Scarcity of Labor}

For given factor endowments $\left(K,L\right)$, there are two possible
regimes for the scarcity of labor, depending on the level of the automation
index $I$: If the index is low enough so that labor is relatively
scarce, then the return on labor is greater than the return on capital,
$w>R$, and the scarce labor is employed solely in those tasks that
cannot be automated. 

Conversely, if the state of automation is sufficiently advanced that
only a small fraction of tasks are exclusive to human labor, then
$w=R$ holds, and labor is employed not only in the remaining unautomated
tasks but also in some of the automated tasks. At the margin, capital
and labor are perfect substitutes. 
\begin{lem}[Scarcity of labor]
\label{lem:scarcity-of-labor}For given $(K,L)$, there is a threshold
value for the state of automation $\hat{I}$ that is defined by 
\begin{equation}
\Phi\left(\hat{I}\right)=\frac{K/L}{1+K/L}\label{eq:Ihat}
\end{equation}
and increasing in the $K/L$-ratio such that there are two regions:

\noindent \textbf{Region 1:} If $I<\hat{I}$, then labor is scarce
compared to capital. In this regime, labor is employed only for tasks
with $i>I$. Output is 
\begin{equation}
Y=F\left(K,L;I\right)=A\left[K^{\frac{\sigma-1}{\sigma}}\Phi(I)^{\frac{1}{\sigma}}+L{}^{\frac{\sigma-1}{\sigma}}(1-\Phi(I))^{\frac{1}{\sigma}}\right]^{\frac{\sigma}{\sigma-1}}\label{eq:Yeff-case1}
\end{equation}
and wages satisfy
\[
w=A^{\frac{\sigma-1}{\sigma}}\left(Y/L\right)^{\frac{1}{\sigma}}\cdot(1-\Phi(I))^{\frac{1}{\sigma}}>R
\]

\noindent \textbf{Region 2:} If $I\geq\hat{I}$, then the relative
scarcity of labor is relieved, and labor earns the same return as
capital $w=R=A$; if the inequality is strict, some labor is deployed
alongside capital for tasks with $i<I$, and labor and capital are
perfect substitutes for the marginal task. Output is given by the
linear function
\begin{equation}
Y=F\left(K,L\right)=A\left(K+L\right)\label{eq:Yeff-case2}
\end{equation}

Conversely, for given $I$, there is a threshold $\kappa\left(I\right)=\Phi(I)/\left[1-\Phi\left(I\right)\right]$
such that the economy is in region 1 if $K/L>\kappa\left(I\right)$
and in region 2 if $K/L\leq\kappa(I)$. The threshold $\kappa(I)$
is increasing in $I$, i.e., if $K/L$ is marginally above the threshold,
further automation pushes the economy from region 1 into region 2
where the scarcity of labor is relieved.
\end{lem}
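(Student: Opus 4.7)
The plan is to exploit the perfect-substitutability of capital and labor on tasks with $i<I$ (where $y(i)=k(i)+\ell(i)$) versus the fact that only labor can perform tasks with $i\geq I$. This structure forces a sharp dichotomy on equilibrium factor prices, which I would resolve by case analysis on the sign of $w-R$. A preliminary step is to rule out $w<R$: if labor strictly dominated capital in cost, the firm would use only labor on every task, leaving all of $K$ idle and forcing $R=0$, contradicting $w<R$. Hence either $w>R$ (the proposed region~1) or $w=R$ (the proposed region~2).

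For the candidate $w>R$ regime, the firm uses capital alone on $[0,I)$ and labor alone on $[I,\infty)$. I would take first-order conditions from the CES aggregator, noting that the marginal value product of $y(i)$ equals $A^{(\sigma-1)/\sigma}Y^{1/\sigma}y(i)^{-1/\sigma}$; setting this equal to $R$ for $i<I$ and to $w$ for $i\geq I$ shows $y(i)$ is constant on each region, with values $y_K$ and $y_L$ satisfying $y_K/y_L=(w/R)^{\sigma}$. Market clearing yields $y_K=K/\Phi(I)$ and $y_L=L/(1-\Phi(I))$, so the strict inequality $w>R$ is self-consistent exactly when $y_K>y_L$, i.e., when $K(1-\Phi(I))>L\Phi(I)$, which rearranges to $\Phi(I)<K/(K+L)$ and defines $\hat I$ via~(\ref{eq:Ihat}). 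Substituting these values back into the CES aggregator and simplifying produces~(\ref{eq:Yeff-case1}), and the wage expression follows from the labor FOC evaluated at $y_L=L/(1-\Phi(I))$.

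For $I\geq\hat I$, the same calculation shows $w>R$ is infeasible (it would require $y_K\leq y_L$), so $w=R$ must hold. The firm is then indifferent on $[0,I]$, and by symmetry of the CES aggregator it is optimal to equalize $y(i)$ across all tasks; adding the capital and labor market-clearing conditions pins $y(i)=K+L$, whence $Y=A(K+L)$ and the common marginal product is $w=R=A$. The converse statement about $\kappa(I)=\Phi(I)/[1-\Phi(I)]$ follows immediately from algebraic rearrangement of $\Phi(I)<K/(K+L)$, and monotonicity of $\kappa$ is inherited from that of $\Phi$.

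The main obstacle I anticipate is not any single calculation but rather the existence argument in region~2, where the split of labor between automated and non-automated tasks is indeterminate at the individual-task level. I would verify that \emph{some} allocation realizes $y(i)=K+L$ while clearing both factor markets---for instance, any measurable $k(i)$ on $[0,I]$ with $\int_0^I k\,d\Phi(i)=K$, completed by $\ell(i)=y(i)-k(i)$ on $[0,I]$ and $\ell(i)=y(i)$ on $[I,\infty)$. As a useful sanity check on continuity at the boundary $I=\hat I$, one verifies that $y_K=y_L=K+L$ there, so the region-1 and region-2 output formulas coincide.
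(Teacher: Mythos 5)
Your proposal is correct and follows essentially the same route as the paper's proof: symmetry of the CES aggregator gives constant task allocations $y_K=K/\Phi(I)$ and $y_L=L/(1-\Phi(I))$, the first-order conditions turn the comparison $y_K\gtrless y_L$ into the threshold condition $\Phi(\hat I)=\tfrac{K/L}{1+K/L}$, and substitution back yields the two output formulas. Your extra touches---explicitly ruling out $w<R$, exhibiting a feasible region-2 allocation, and checking continuity at $I=\hat I$---are sound refinements of steps the paper leaves implicit rather than a different argument.
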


\begin{proof}
Assume first that all labor is employed in tasks with $i\geq I$ and
observe that the symmetry of the production function across all tasks
implies that an identical amount of capital $k=K/\Phi(I)$ will be
employed in each task below the threshold and an identical amount
of labor $\ell=L/(1-\Phi(I))$ for each task above the threshold for
given aggregate $K$ and $L$. The production function can then be
written as
\begin{align*}
Y=F\left(K,L;I\right) & =A\left[k^{\frac{\sigma-1}{\sigma}}\Phi(I)+\ell{}^{\frac{\sigma-1}{\sigma}}(1-\Phi(I))\right]^{\frac{\sigma}{\sigma-1}}\\
 & =A\left[K^{\frac{\sigma-1}{\sigma}}\Phi(I)^{\frac{1}{\sigma}}+L^{\frac{\sigma-1}{\sigma}}(1-\Phi(I))^{\frac{1}{\sigma}}\right]^{\frac{\sigma}{\sigma-1}}
\end{align*}
proving equation (\ref{eq:Yeff-case1}).

The firm's optimization problem implies the first-order conditions
\begin{align}
\ensuremath{F_{K}} & =A^{\frac{\sigma-1}{\sigma}}Y^{\frac{1}{\sigma}}\cdot K^{-\frac{1}{\sigma}}\Phi(I)^{\frac{1}{\sigma}}=R\label{eq:R}\\
F_{L} & =A^{\frac{\sigma-1}{\sigma}}Y^{\frac{1}{\sigma}}\cdot L^{-\frac{1}{\sigma}}(1-\Phi(I))^{\frac{1}{\sigma}}=w\label{eq:w}
\end{align}

By comparing these two expressions, we can see that the return on
capital $R$ is less than the return on labor $w$ as long as $K^{-\frac{1}{\sigma}}\Phi(I)^{\frac{1}{\sigma}}<L^{-\frac{1}{\sigma}}(1-\Phi(I))^{\frac{1}{\sigma}}$
or, equivalently, $k>\ell$, i.e., the capital assigned to each automated
task is greater than the labor assigned to unautomated tasks. Expressing
this in terms of aggregate supplies of factors, the condition is 
\begin{equation}
\frac{K}{L}>\kappa(I):=\frac{\Phi(I)}{1-\Phi(I)}\label{eq:threshold_condition}
\end{equation}
The right-hand side is an increasing function of $I$ that goes from
$0$ to $\infty$. By implication, there is a value of $I$ such that
the inequality is violated for all $I>\hat{I}$. 

When that threshold is crossed, it is more efficient to allocate some
labor to tasks with $i<I$, and the marginal unit of labor is perfectly
substituable with capital. By implication, $k\left(i\right)=\ell\left(i\right)=K+L$,
and the CES aggregator simplifies to equation (\ref{eq:Yeff-case2}).
Alternatively,  the threshold for $\Phi(I)$ can be expressed explicitly
as an increasing function of the $K/L$-ratio by solving for
\[
\Phi\left(\hat{I}\right)=\frac{K/L}{1+K/L}
\]
The remaining results stated in the lemma follow immediately.
\end{proof}
Intuitively, region 1 reflects the world as we have experienced it
over the past 200 years, in which capital and labor are complementary
in production, and labor is comparatively scarce. Figure \ref{fig:water_levels}
shows that for given factor supplies, a higher automation index $I$
increases the mass of tasks that can be accomplished with capital,
implying that the available capital is spread over a greater number
of tasks and becomes scarcer. Conversely, automation reduces the mass
of tasks that is exclusive to labor, implying that the available labor
can be concentrated on fewer tasks and becomes less scarce. As the
automation index reaches the threshold $\hat{I}$, there are so few
tasks left that are exclusive to labor that labor no longer enjoys
a scarcity advantage over capital, and the returns on the two factors
are equated.

Note that the threshold $\hat{I}$ depends solely on relative factor
supplies, not on the elasticity of substitution $\sigma$ between
capital and labor. As soon as labor is no longer scarce, it will be
used interchangably with capital in the marginal task, and this holds
even when individual tasks are highly complementary as reflected by
low values of the elasticity of substitution (as long as $\sigma>0$).

\begin{figure}

\includegraphics[width=1\textwidth]{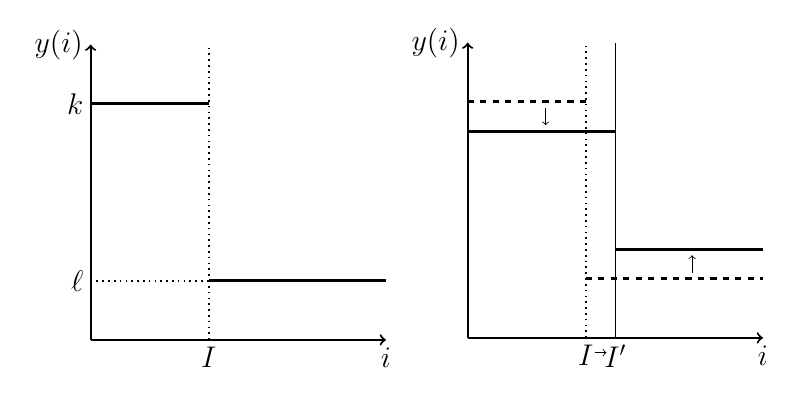}\caption{Automation and the scarcity of labor}
\label{fig:water_levels}

\end{figure}

\subsection{Factor Price Frontier (FPF)}

For an analysis of the effects of advances in automation $I$ on factor
returns, let us characterize the factor price frontier associated
with the firm's technology. The factor price frontier depicts all
possible combinations of factor prices $R$ and $w$ that will result
from different proportions of factor supplies $K$ and $L$ in a competitive
economy with profit-maximizing firms under a given technology. 
\begin{lem}[Factor Price Frontier (FPF)]
\label{lemma1}For a given automation index $I$, the factor price
frontier slopes downwards, starting from a limiting point $w^{\ast}(I)=A(1-\Phi(I))^{\frac{1}{\sigma-1}}$
and $R=0$ as $K/L\rightarrow\infty$ to the point $w=R=A$ when $K/L\leq\kappa(I)$.
Increases in $A$ move the FPF proportionately outwards. Increases
in $I$ raise $w^{\ast}(I)$ and swivel the factor price frontier
clock-wise. 
\end{lem}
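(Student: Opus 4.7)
The plan is to derive a closed-form expression for the factor price frontier in Region 1 of Lemma \ref{lem:scarcity-of-labor} and then read off all the claimed properties from it. Starting from the first-order conditions (\ref{eq:R}) and (\ref{eq:w}), I would invert them to obtain the conditional factor demands $K = \Phi(I)\,Y A^{\sigma-1} R^{-\sigma}$ and $L = (1-\Phi(I))\,Y A^{\sigma-1} w^{-\sigma}$. Constant returns to scale give $Y = RK + wL$ by Euler's theorem; substituting the demands and canceling $Y$ yields the FPF in closed form,
\[
A^{1-\sigma} \;=\; \Phi(I)\,R^{1-\sigma} + (1-\Phi(I))\,w^{1-\sigma}.
\]

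With this equation in hand, every claim reduces to elementary algebra. Implicit differentiation gives $dw/dR = -[\Phi(I)/(1-\Phi(I))](w/R)^\sigma < 0$, so the FPF slopes downward. Setting $R=0$ yields $w = A(1-\Phi(I))^{1/(\sigma-1)} = w^{\ast}(I)$, the upper-left endpoint reached as $K/L\to\infty$ (for which the demand $K = \Phi(I)YA^{\sigma-1}R^{-\sigma}$ forces $R\to 0$). Setting $w=R$ yields $w=R=A$, the lower-right endpoint on the $45^\circ$ line; this is precisely the regime that Lemma \ref{lem:scarcity-of-labor} identifies for $K/L\le\kappa(I)$, so the Region 1 formula glues continuously to the Region 2 factor prices $(A,A)$. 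Because the FPF equation is homogeneous in $(A,R,w)$, scaling $A\mapsto\lambda A$ rescales every $(R,w)$ on the frontier by $\lambda$, which is the claimed proportional outward shift.

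For the dependence on $I$, the expression $w^{\ast}(I) = A(1-\Phi(I))^{1/(\sigma-1)}$ is manifestly increasing in $I$ since $\sigma<1$ makes the exponent negative and $1-\Phi(I)$ is decreasing. For the shape of the swivel, differentiating the FPF equation with respect to $I$ at fixed $R$ gives
\[
\frac{\partial w}{\partial I} \;=\; \frac{\phi(I)\,\bigl[w^{1-\sigma} - R^{1-\sigma}\bigr]}{(1-\sigma)(1-\Phi(I))\,w^{-\sigma}},
\]
which is strictly positive throughout the interior of Region 1 (where $w>R$ and $\sigma<1$) and vanishes at $w=R=A$. Hence the curve pivots about its fixed lower-right endpoint $(A,A)$ while its upper-left endpoint rises---a clockwise swivel under the convention that $w$ is on the vertical axis and $R$ on the horizontal.

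I do not anticipate a substantive obstacle: once the FPF is reduced to the CES price dual, each claim is a one-line calculation. The only real checks are that the Region 1 formula agrees at the transition $K/L=\kappa(I)$ with the Region 2 factor prices $w=R=A$ from Lemma \ref{lem:scarcity-of-labor} (which it does by direct substitution), and that ``clockwise'' correctly describes the rotation, which it does because the pivot $(A,A)$ lies below-and-to-the-right of the moving endpoint on the $w$-axis.
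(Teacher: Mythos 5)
Your proof is correct and follows essentially the same route as the paper's: both reduce the lemma to the zero-profit locus $A^{1-\sigma}=\Phi(I)\,R^{1-\sigma}+(1-\Phi(I))\,w^{1-\sigma}$ and read off the endpoints, the proportional scaling in $A$, and the effect of $I$. The only cosmetic difference is that you derive this equation from the primal side (conditional factor demands plus Euler's theorem) whereas the paper writes down the dual unit-cost function and sets it equal to one; your explicit computations of $dw/dR$ and of $\partial w/\partial I$ at fixed $R$ merely spell out details the paper asserts or leaves to its figure.
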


\begin{proof}
We obtain the factor price frontier from the aggregate cost function,
which is the dual of the aggregate production function. The associated
unit cost function represents the minimum cost at which a competitive
optimizing firm can produce one unit of final output, given factor
prices $w$ and $R$. In the region of $I<\hat{I}$, the unit cost
function associated with equation (\ref{eq:Yeff-case1}) is
\[
C(w,R;I)=\frac{1}{A}\bigg(R^{1-\sigma}\Phi(I)+w^{1-\sigma}(1-\Phi(I))\bigg)^{\frac{1}{1-\sigma}}
\]
Since we employed the final good as the numeraire good, this cost
function needs to equal 1 in a competitive economy. The factor price
frontier when $I<\hat{I}$ is thus given by all pairs of $(w,R)$
that satisfy the equation $C\left(w,R;I\right)=1$, or equivalently,
\begin{equation}
w=\bigg(\frac{A^{1-\sigma}-R^{1-\sigma}\Phi(I)}{1-\Phi(I)}\bigg)^{\frac{1}{1-\sigma}}\label{eq:wFPF}
\end{equation}
Asymptotically, as $K/L$ goes to infinity, we can see from equations
(\ref{eq:R}) and (\ref{eq:w}) that the return to capital $R$ goes
to zero, whereas the wage converges to
\begin{equation}
w^{\ast}(I)=\lim_{K/L\rightarrow\infty}w=A\left[0\cdot\Phi(I)^{\frac{1}{\sigma}}+(1-\Phi(I))^{\frac{1}{\sigma}}\right]^{\frac{1}{\sigma-1}}(1-\Phi(I))^{\frac{1}{\sigma}}=A(1-\Phi(I))^{\frac{1}{\sigma-1}}\label{eq:limit-wage}
\end{equation}
Conversely, when $I\geq\hat{I}$, the cost function is simply $C\left(w,R;I\right)=\min\left\{ w,R\right\} /A$,
and the factor price frontier is degenerate and consists of a single
point $w=R=A$.
\end{proof}

\begin{figure}
\centering{}\includegraphics[width=0.42\textwidth]{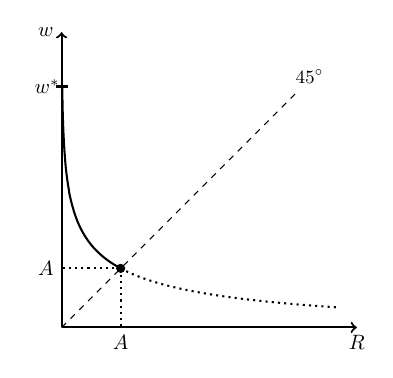}\includegraphics[width=0.42\textwidth]{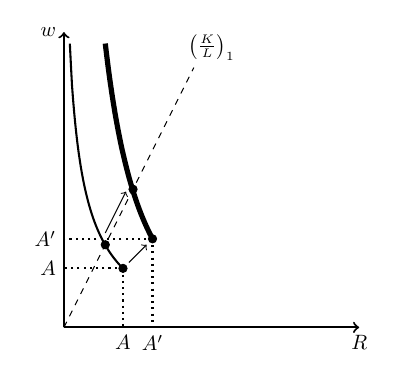}\caption{\label{fig:FPF_risingA}Factor price frontier and its dependence on
$A$}
\end{figure}
The factor price frontier is illustrated in Figure \ref{fig:FPF_risingA}.
The area above the 45 degree line corresponds to Region 1 of Lemma
1, reflecting a high capital-labor ratio $K/L>\kappa(I)$ and $w>R$.
Higher capital intensity $K/L$ moves factor returns up and to the
left along the frontier, i.e., it increases $w$ and reduce $R$.
Conversely, when $K/L\leq\kappa(I)$, we enter Region 2 of the lemma,
and the factor price frontier corresponds to a single dot on the 45
degree line at which $w=R=A$.

The right panel of Figure \ref{fig:FPF_risingA} shows how an increase
in the level of technology $A$ pushes out the factor price frontier
-- for any ratio of $K/L$, it scales the returns of all factors
proportionately. This exemplifies how the factor price frontier serves
as a convenient tool to describe how factor returns are impacted by
technological changes across any levels of factors supplies. 

\subsubsection*{The Automation Path on the Factor Price Frontier}

We next turn to the effects of automation for a given capital stock
$K$ or equivalently, capital intensity $k=K/L$. Then it is easy
to see that:
\begin{lem}[Automation and Output]
An increase in automation $d\Phi(I)$ raises output as long as $I<\hat{I}$,
and leaves output unaffected otherwise.
\end{lem}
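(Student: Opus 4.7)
The plan is to handle the two regions identified in Lemma~\ref{lem:scarcity-of-labor} separately. In Region~2 ($I \geq \hat I$), the closed-form expression~(\ref{eq:Yeff-case2}) gives $Y = A(K+L)$, which does not involve $I$, so the ``unaffected'' claim is immediate. All the work lies in Region~1.

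In Region~1 ($I < \hat I$), I would differentiate the output formula~(\ref{eq:Yeff-case1}) directly with respect to $\phi := \Phi(I)$. Setting $\alpha := (1-\sigma)/\sigma$, which is strictly positive since $\sigma<1$, streamlines the exponents: the inner aggregate can be written $B = K^{-\alpha}\phi^{1+\alpha} + L^{-\alpha}(1-\phi)^{1+\alpha}$ and the outer exponent is $-1/\alpha$. A routine chain-rule computation then yields
\[
\frac{\partial Y}{\partial \phi} \;=\; -\frac{A(1+\alpha)}{\alpha}\, B^{-1/\alpha - 1}\, \bigl[(\phi/K)^{\alpha} - ((1-\phi)/L)^{\alpha}\bigr].
\]
With $\alpha>0$, the prefactor is negative and $B^{-1/\alpha-1}>0$, so the derivative is positive precisely when $(\phi/K)^{\alpha} < ((1-\phi)/L)^{\alpha}$, i.e.\ $K/L > \phi/(1-\phi) = \kappa(I)$, which is exactly the defining inequality of Region~1. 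Since $\Phi$ is non-decreasing, any $d\Phi(I) > 0$ then strictly increases $Y$.

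As a conceptual cross-check, I would invoke the envelope theorem. Output in~(\ref{eq:Y})--(\ref{eq:y(i)}) is the value function of a firm optimizing over $\{k(i),\ell(i)\}$ subject to the automation constraint $k(i)=0$ for $i\geq I$ together with the resource constraints on $K$ and $L$. Advancing $I$ relaxes this constraint on the newly automated tasks, so output is weakly increasing in $I$ unconditionally; the strict increase in Region~1 follows because the constraint $k(i)=0$ was binding there (labor is optimally deployed only on unautomated tasks), whereas in Region~2 it is slack and $Y$ already attains the unconstrained upper bound $A(K+L)$.

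The main obstacle is purely bookkeeping: keeping signs straight through a chain rule in which $\sigma-1<0$ and $\sigma/(\sigma-1)<0$ each flip signs. Introducing $\alpha>0$ as above isolates the single remaining sign question to the comparison $K/L \lessgtr \kappa(I)$, which is settled directly by the region definition, so no further input beyond Lemma~\ref{lem:scarcity-of-labor} is needed.
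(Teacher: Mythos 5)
Your proof is correct and follows essentially the same route as the paper's: differentiate the Region~1 CES expression for $Y$ with respect to $\Phi(I)$ and observe that the sign of the derivative is governed by the comparison $K/L \gtrless \kappa(I)$ (equivalently $k \gtrless \ell$ in the paper's notation), while Region~2's output $A(K+L)$ is independent of $I$. The $\alpha$-substitution and the envelope-theorem cross-check are harmless cosmetic additions that do not change the argument.
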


\begin{proof}
For $I<\hat{I}$, the result follows by differentiating expression
(\ref{eq:Yeff-case1}), 
\begin{align*}
\frac{dY}{d\Phi(I)}= & \frac{1}{\sigma-1}A^{\frac{\sigma-1}{\sigma}}Y{}^{\frac{1}{\sigma}}\cdot\bigg(k^{\frac{\sigma-1}{\sigma}}-\ell^{\frac{\sigma-1}{\sigma}}\bigg)
\end{align*}
Given $\sigma<1$, the derivative is positive as long as $k>\ell$,
which is the condition for being in region 1 in Lemma \ref{lem:scarcity-of-labor}
in which the production function is relevant. For $I\geq\hat{I}$,
the relevant production function is (\ref{eq:Yeff-case2}), which
is independent of $I$.
\end{proof}
Intuitively, for output to rise, capital must be sufficiently abundant,
delivering a productivity gain from deploying the amply available
capital to a greater number of tasks. This is frequently termed the
\emph{productivity effect} of automation. 

Let us look at factor returns next.
\begin{lem}[Automation and Factor Returns]
\label{lem:automation_wages}\label{lem:automation-and-wages}(i)
An increase in automation $d\Phi(I)$ always raises $R$ as long as
$I<\hat{I}$. The effect on $w$ is hump-shaped: there is a threshold
$I^{\ast}(K/L)$ with $\Phi(I^{\ast}(\cdot))\in(0,1)$ such that wages
$w$ rise in $\Phi(I)$ as long as $I<I^{\ast}(K/L)$ or, equivalently,
as long as $K/L>\kappa^{\ast}(I)$, but decline in $\Phi(I)$ for
$I>I^{\ast}(K/L)$ or, equivalently, $K/L<\kappa^{\ast}(I)$.

(ii) For $\Phi(I)=0$, the return on capital is $R=0$, and wages
equal $w=A$. For $\Phi(I)\geq\kappa/(1+\kappa)$, both equal $R=w=A$.
The latter condition always holds if $\Phi(I)=1$. 
\end{lem}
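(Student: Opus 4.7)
First, I would dispatch part (ii) by direct evaluation. At $\Phi(I)=0$ the first-order condition (\ref{eq:R}) immediately yields $R=0$, and the factor price frontier (\ref{eq:wFPF}) then degenerates to $w^{1-\sigma}=A^{1-\sigma}$, so $w=A$. Writing $\kappa=K/L$, the inequality $\Phi(I)\geq\kappa/(1+\kappa)$ is precisely the region-2 condition $\Phi(I)\geq\Phi(\hat I)$ of Lemma \ref{lem:scarcity-of-labor}, under which $w=R=A$; and $\Phi(I)=1$ trivially satisfies this inequality for any finite $K/L$.

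The monotonicity of $R$ in part (i) is immediate from the first-order condition $R=A^{(\sigma-1)/\sigma}Y^{1/\sigma}K^{-1/\sigma}\Phi(I)^{1/\sigma}$: for fixed $(K,L)$ the factor $\Phi(I)^{1/\sigma}$ is trivially increasing in $\Phi(I)$, and $Y$ is strictly increasing throughout region 1 by the preceding Automation-and-Output lemma; multiplying two strictly increasing positive factors delivers the result.

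The substantive content is the hump shape of $w$. Taking logs in $w=A^{(\sigma-1)/\sigma}Y^{1/\sigma}L^{-1/\sigma}(1-\Phi(I))^{1/\sigma}$ and differentiating with respect to $p=\Phi(I)$ separates the productivity and displacement effects:
\[
\sigma\,\frac{d\ln w}{dp}=\frac{1}{Y}\frac{dY}{dp}-\frac{1}{1-p}.
\]
Substituting the expression for $dY/dp$ from the previous lemma and writing per-task capital and labor as $k=K/p$ and $\ell=L/(1-p)$, the first term reduces to $\tfrac{1}{\sigma-1}\bigl(k^{(\sigma-1)/\sigma}-\ell^{(\sigma-1)/\sigma}\bigr)/g(p)$ where $g(p)=p\,k^{(\sigma-1)/\sigma}+(1-p)\ell^{(\sigma-1)/\sigma}$. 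As $p\downarrow 0$, $k\to\infty$ and a direct limit computation gives $d\ln w/dp\to 1/(1-\sigma)>0$; at $p=\Phi(\hat I)$ one has $k=\ell$, which kills the productivity term and leaves $d\ln w/dp=-1/[\sigma(1-p)]<0$. Continuity then guarantees an interior maximizer $I^\ast(K/L)\in(0,\hat I)$. A single-crossing argument — verified by noting that $k^{(\sigma-1)/\sigma}-\ell^{(\sigma-1)/\sigma}$ shrinks monotonically toward zero in $p$ while the displacement term $1/(1-p)$ strictly grows — pins it down uniquely.

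Finally, the equivalent formulation in terms of $\kappa^\ast(I)$ follows by inverting the implicit definition of $I^\ast$ in the $K/L$-direction: at any fixed $p$, raising $K/L$ drives $k^{(\sigma-1)/\sigma}$ toward zero and shrinks $g(p)$, making the productivity term strictly more positive while leaving the displacement term unchanged, so the crossing point shifts right. Hence $I^\ast(\cdot)$ is strictly increasing in $K/L$, and $\kappa^\ast(I)$ is well-defined as its inverse, giving the equivalence $I<I^\ast(K/L)\Leftrightarrow K/L>\kappa^\ast(I)$. The main obstacle I anticipate is the single-crossing step: the endpoint signs are cheap, but verifying that the productivity-versus-displacement balance flips exactly once on $(0,\Phi(\hat I))$ requires an algebraically delicate monotonicity check of the productivity-to-displacement ratio.
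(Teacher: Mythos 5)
Your proposal is correct and follows essentially the same route as the paper: the same log-derivative decomposition of $w$ into a productivity term $\tfrac{1}{\sigma-1}\bigl(k^{\frac{\sigma-1}{\sigma}}-\ell^{\frac{\sigma-1}{\sigma}}\bigr)(Y/A)^{\frac{1-\sigma}{\sigma}}$ and a displacement term $-\tfrac{1}{1-\Phi}$, the same endpoint evaluations (derivative $\tfrac{1}{1-\sigma}>0$ at $\Phi=0$, productivity term vanishing at $\hat I$ while displacement dominates), and the same monotonicity-in-$K/L$ observation to define $\kappa^{\ast}(I)$. The single-crossing/uniqueness step you flag as delicate is also left at the level of an informal dominance argument in the paper, so you are not missing anything the authors supply.
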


\begin{proof}
The limit results follow readily from equations (\ref{eq:R}) and
(\ref{eq:w}) and from the second part of Lemma 1. By differentiating
equation (\ref{eq:R}), with respect to $\Phi(I)$, we can see that
automation always raises the return on capital.

To see how automation affects wages, consider the derivative of $\log w$
with respect to $\Phi$ from the firm's optimality condition (\ref{eq:w}):
\begin{align}
\frac{d\log w}{d\Phi(I)}= & \frac{1}{\sigma-1}\frac{1}{\sigma}\bigg(k^{\frac{\sigma-1}{\sigma}}-\ell^{\frac{\sigma-1}{\sigma}}\bigg)(Y/A)^{\frac{1-\sigma}{\sigma}}-\frac{1}{\sigma}\frac{1}{1-\Phi(I)}.\label{eq:automation-wages}
\end{align}

The first term reflects the productivity effect of automation, which
is positive under condition (\ref{eq:threshold_condition}), reflecting
that producing the marginal task using a relatively more abundant
$k$ units of capital rather than a scarce $\ell$ units of labor
increases output. The second term captures the \emph{displacement
effect }of automation and reduces labor income. It reflects that the
labor used in each unautomated task $\ell=L/(1-\Phi)$ increases,
as captured by the term in the denominator, thereby pulling down the
marginal product of labor. 

As $I$ rises, wages rise at first---at $\Phi(I)=0$ we find $\frac{d\log w}{d\Phi(I)}=\frac{1}{1-\sigma}>0$.
As $I$ becomes larger, the first term in (\ref{lem:automation_wages})
declines, reaching zero for $I=\hat{I}$, and the absolute value of
the second term grows and eventually dominates the first term---in
the limit of $\Phi\rightarrow1$, the second term becomes infinitely
large. Thus, there exists an intermediate value $I^{\ast}(K/L)$ after
which further automation reduces wages. Notice that the first term
is increasing in $K/L$ whereas the second term is independent of
$K/L$. The threshold can alternatively be expressed as $K/L<\kappa^{\ast}(I)$.

\begin{figure}
\centering{}\includegraphics[width=0.5\textwidth]{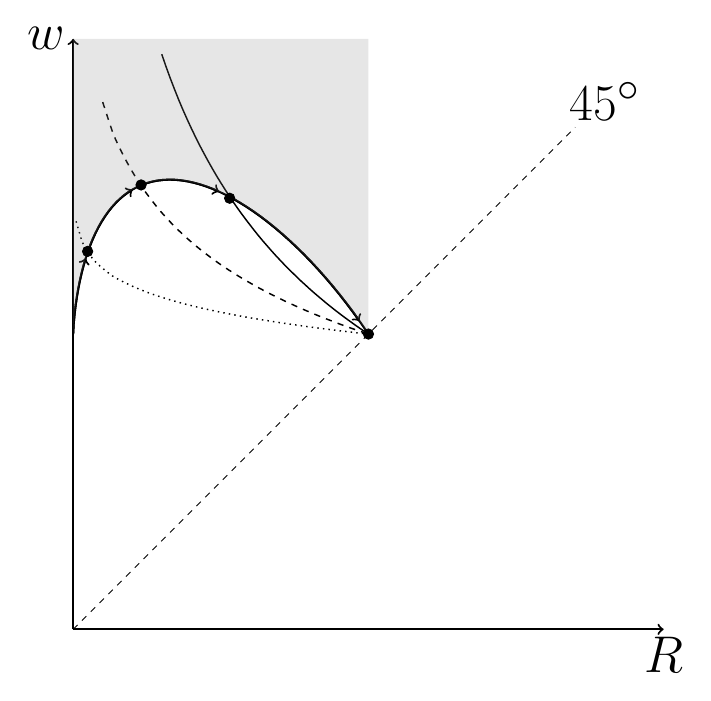}\caption{\label{fig:FPF_risingAut}Factor price frontier and automation}
\end{figure}
\end{proof}
Figure \ref{fig:FPF_risingAut} illustrates that an increase in automation
$I$ ``rotates'' the factor price frontier clockwise, for example,
from the dotted to the dashed and solid lines. If the economy is in
the labor-scarce region 1 (above the 45 degree line), automation raises
wages for a given return of capital and also the maximum wage level
$w^{\ast}(I)$. For given $K/L$, the path of factor prices that
results from rising automation $I$ is illustrated by the hump-shaped
bold line with arrows in the figure. Along the path, $R$ rises continually
whereas $w$ at first rises but eventually falls. When automation
reaches $\hat{I}(K/L)$, the economy ends up in the degenerate equilibrium
with $w=R=A$ on the 45-degree line.

\subsection{Automation and Factor Earnings}

Figure \ref{fig:risingPhi} shows the effects of automation on total
output for given factor supplies as well as its split into the wage
bill and the total returns to capital. The horizontal axis depicts
the fraction $\Phi\left(I\right)$ of automated tasks, which goes
from zero to one. The left-hand panel illustrates the case of equal
capital and labor endowments, $K=L=1$, and modest complementarity
with an elasticity of substitution $\sigma=0.5$ between the two.
As long as the economy is in the scarce-labor region (Region 1), output
is a strictly monotonic function of automation. At first, automation
almost exclusively benefits labor, and the returns to capital are
minuscule. But as automation increases and we come closer to Region
2, the wage bill reaches a ceiling and starts to decline. Further
automation still raises output, but the returns to capital grow faster
than output, at the expense of the wage bill. When Region 2 is reached
at $\Phi=0.5$, both factors earn equal returns. Given equal endowments,
this translates into capital and labor shares of one-half each. 

\begin{figure}
\centering{}\subfloat[Equal factor endowments]{\includegraphics[width=0.48\columnwidth]{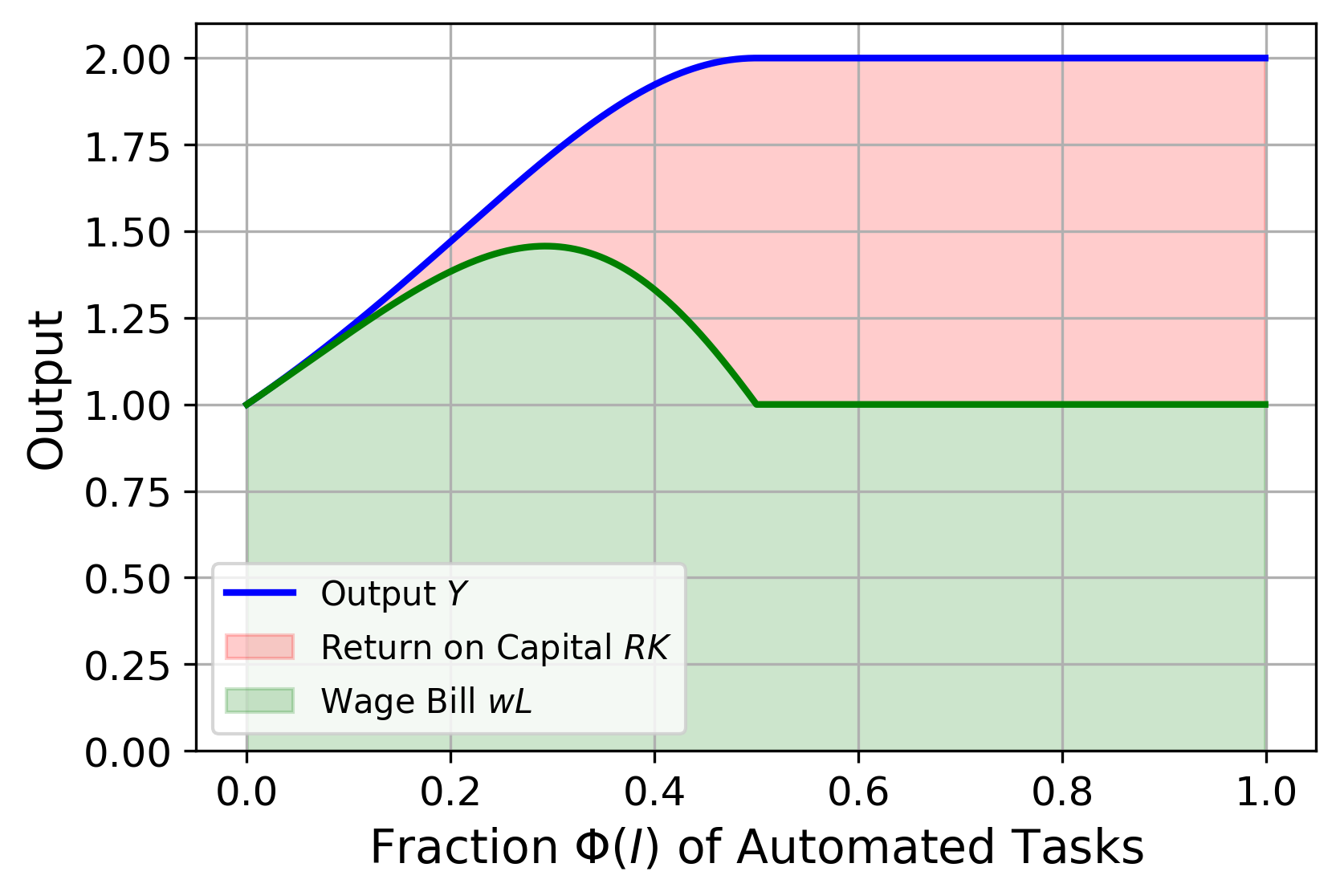}}\subfloat[Abundant effective capital]{\includegraphics[width=0.48\columnwidth]{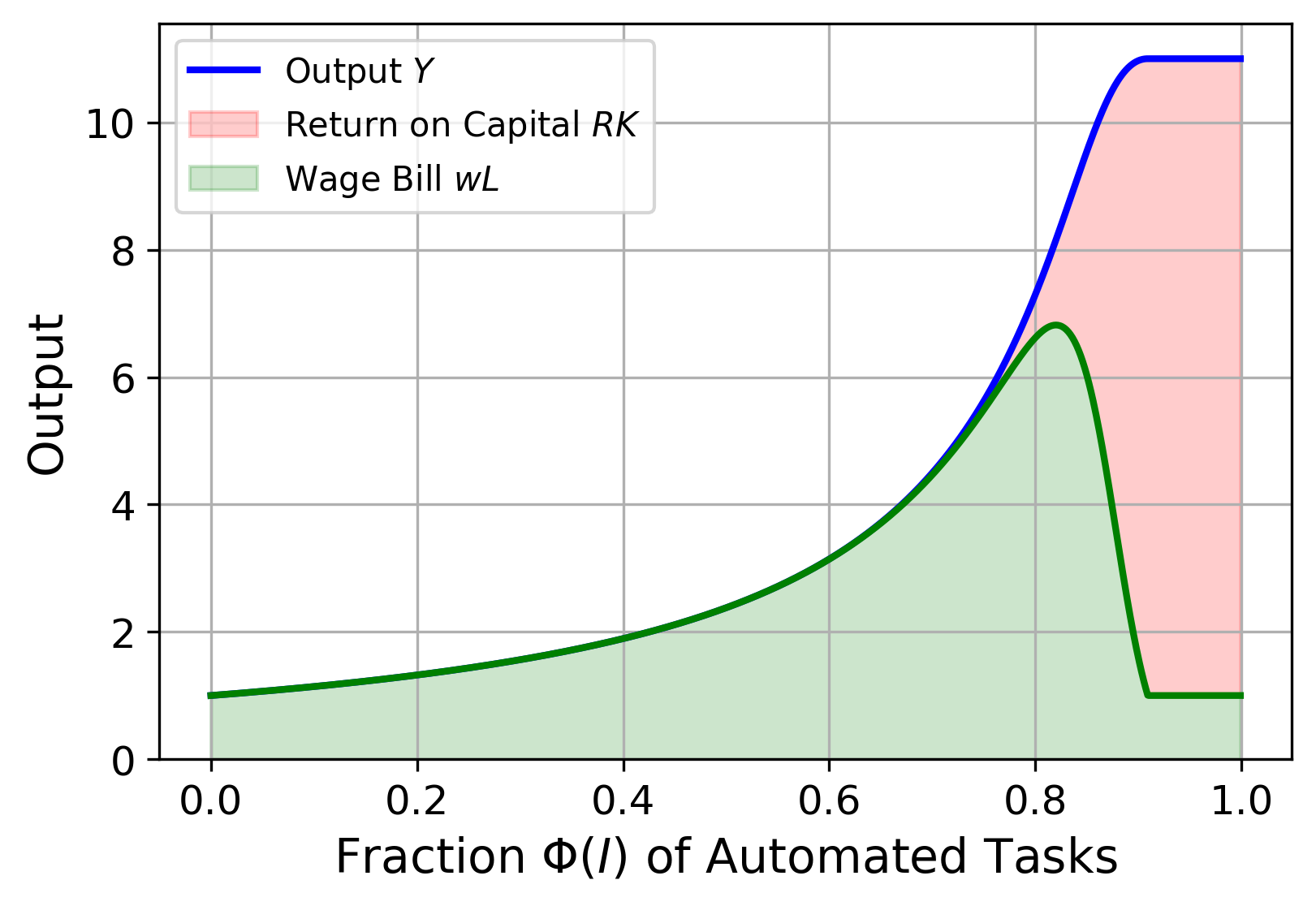}}\caption{\label{fig:risingPhi}Static equilibria under rising automation}
\end{figure}

The right panel of the figure shows an alternative scenario in which
the effective supply of capital is ten times higher than labor, i.e.,
$L=1$ and $K=10$, and in which the two are strong complements with
$\sigma=0.2$. The abundance of capital and the strong complementarity
imply that the region in which most of the benefits go to labor is
much larger, but so is the drop in the wage bill once a critical threshold
is surpassed: whereas wages \emph{seem }to be growing exponentially
in $\Phi\left(I\right)$ up until $\Phi\approx0.80$, they experience
a precipitous decline by about 85\% starting around $\Phi\approx0.83$,
accompanied by a meteoric rise in the returns to capital. When Region
2 is reached at $\Phi=10/11$, factor returns are equalized, and given
the relative factor endowments, the capital share of the economy is
ten times the labor share. This example highlights that the fate of
labor can change rapidly when certain thresholds are crossed.

Crucially, the effect of automation on output---and per-capita income---depends
on the capital available. In the illustration in the left panel, full
automation merely doubles output; in the right panel, output grows
eleven-fold. This observation naturally leads us to the next step
of our analysis---to analyze how automation interacts with capital
accumulation in a dynamic setting.

\section{Dynamics: The Race between Automation and Capital Accumulation}

The dynamics of output and wages depend not only on technological
advances---captured by the automation index $I$---but also on capital
accumulation and by extension on the savings behavior of the agents
in the economy. This section analyzes these forces in a dynamic setting.

\subsection{Automation Scenarios}

\paragraph{Progress in automation}

We assume that the automation index $I$ grows exponentially over
time at an exogenous rate of $g$, reflecting Moore's law and similar
regularities. For an initial $I_{0}$, the time path of $I$ (omitting
the time index $t$ for conciseness) is given by
\[
I=I_{0}e^{gt}
\]
We can equivalently write that $\log I=\log I_{0}+gt$ grows linearly
at the rate $g$. 

We consider different distributions $\Phi\left(i\right)$ of task
complexity or tasks in compute space to capture alternative scenarios
for the advent of AGI:

\paragraph{Business-As-Usual Scenario (Unbounded Distribution)}

We model unbounded complexity distributions of tasks $\Phi\left(i\right)$
as Pareto, implying that $\log i$ is described by an exponential
distribution, $\log i\sim Exp(\lambda)$ with decay parameter $\lambda$.
The resulting cumulative distribution function is $\Phi(i)=1-e^{-\lambda\log i}$.
If the automation index $I$ grows exponentially at rate $g$, the
fraction of non-automated tasks declines at rate $\lambda\cdot g$.
This distribution has an infinite right tail, meaning that there will
always be tasks that cannot be automated.

\paragraph{Baseline and Aggressive AGI Scenarios (Bounded Distributions)}

For our AGI scenarios, we assume a bounded complexity distribution
of tasks to capture the scenario that the tasks that can be performed
by human brains is limited by an upper bound so automation crosses
the threshold $\hat{I}$ within finite time. We assume that $\Phi(i)$
follows a power function $\Phi(i)=1-(1-\log i/\log I^{\max})^{\beta}$
with $\beta=1$ and with normalization $I^{\max}=I_{0}e^{gT}$ such
that all tasks are automated after $T$ years.\footnote{Power function distributions are a special case of beta distributions
with a beta parameter $\alpha=1$.} Following Hinton's predictions, we set $T=20$ in the baseline AGI
scenario and $T=5$ in the aggressive AGI scenario. For $I>I^{\max}$,
we keep $\Phi(i)=1$ capturing full automation.

\paragraph{Bout of Automation (Mixed Distribution)}

We consider a fourth scenario in which rapid advances in AI automate
a large fraction of tasks within a short time span, but in which we
assume that there remains an unbounded tail of tasks that cannot be
automated, for example, because of legal or cultural reasons. Analytically,
we assume a mixture of the two scenarios above. Specifically, $\Phi(i)$
is defined as $\Phi(i)=\omega\bigg[1-\left(1-\log i/\log I^{\max}\right)^{\beta}\bigg]+(1-\omega)\bigg[1-e^{-\lambda\log i}\bigg]$
where $\omega\in[0,1]$ is a weight parameter. We assume the same
values for the parameters of the Pareto and power function distributions
as in the previous two cases.

\subsection{Consumer Problem}

The representative household seeks to maximize its lifetime utility
by choosing consumption $C_{t}$ over time:

\begin{equation}
\max_{\{C_{t}\}}U=\int_{0}^{\infty}e^{-\rho t}u(C_{t})dt\label{eq:U}
\end{equation}

\noindent subject to the law of motion for capital:

\begin{equation}
\dot{K}_{t}=F(K_{t},L_{t};I_{t})-\delta K_{t}-C_{t}\label{eq:Kdot}
\end{equation}

\noindent for given $K_{0}$. The current-value Hamiltonian for this
problem is:

\[
H_{c}=u(C_{t})+\mu_{t}\left[F(K_{t},L_{t})-\delta K_{t}-C_{t}\right]
\]

\noindent The first-order conditions with respect to consumption and
capital are:

\begin{align*}
\frac{\partial H_{c}}{\partial C_{t}} & =u'(C_{t})-\mu_{t}=0\\
\frac{\partial H_{c}}{\partial K_{t}} & =\mu_{t}\left[F_{K}-\delta\right]=-\dot{\mu}_{t}+\rho\mu_{t}
\end{align*}
Differentiating the first optimality condition with respect to time
yields $u''(C_{t})\dot{C}_{t}=\dot{\mu}_{t}$, and substituting into
the second optimality condition gives

\begin{equation}
\frac{\dot{C}_{t}}{C_{t}}=\frac{1}{\eta(C_{t})}\left[F_{K}(K_{t},L_{t})-\rho-\delta\right]\label{eq:Euler}
\end{equation}
where $\eta(C_{t})=-\frac{u''(C_{t})C_{t}}{u'(C_{t})}$ is the elasticity
of intertemporal substitution. 

\paragraph{Limit Behavior in Region 2}

When the economy is in region 2, then $F_{K}=A$. If the agent's utility
function exhibits constant elasticity of substitution $\eta$, then
the Euler equation implies a constant growth rate of consumption 
\begin{equation}
g_{C}=\frac{\dot{C_{t}}}{C_{t}}=\frac{A-\rho-\delta}{\eta}\label{eq:gC}
\end{equation}
Let us assume that $A>\rho+\delta$ so consumption growth is positive
and consider the case that the economy remains in region 2 forever---for
example, because full automation $\Phi(I)=1$ has been reached. Then
the economy will converge towards a balanced growth path in which
$g_{Y}=g_{K}=g_{C}$ as in (\ref{eq:gC}) and the savings rate $s^{\infty}=1-C/Y$
is constant. From (\ref{eq:Kdot}), we obtain that
\[
g_{K}=\frac{\dot{K_{t}}}{K_{t}}=\frac{sA(K_{t}+L)}{K_{t}}-\delta
\]
As $\lim_{t\rightarrow\infty}L/K_{t}=0$, we can equate $g_{C}=g_{K}$
and solve for the long-run savings rate 
\[
s^{\infty}=\frac{A-\rho-\delta+\eta\delta}{A\eta}=\frac{1}{\eta}-\frac{\rho+(1-\eta)\delta}{A\eta}
\]

\paragraph*{Bounds}

Assume an initial $I_{0}$ and $K_{0}$ that satisfy $F_{K}\left(K_{0},L;I_{0}\right)\geq\rho+\delta$,
i.e., there was no excessive capital accumulation in the past. Then
the following proposition holds for any intertemporal utility function
that is linearly separable as specified in (\ref{eq:U}) with a twice
continuously differentiable, increasing, and strictly concave period
utility function $u(C)$:
\begin{prop}[Bounds for Output and Wages]
\label{prop:bounds}For any distribution $\Phi\left(i\right)$ of
tasks in compute space and exogenous growth in the automation index
$I_{t}$, the paths of capital, output, and wages lie between lower
and upper bounds $K^{-}\leq K_{t}\leq K_{t}^{+}$, $Y_{t}^{-}\leq Y_{t}\leq Y_{t}^{+}$
and $w_{t}^{-}\leq w_{t}\leq w_{t}^{+}$.

The lower bounds are defined by the fixed-capital case with $K^{-}=K_{0}$
$\forall t$ and $Y_{t}^{-}=F(K^{-},L,I_{t})$, $w_{t}^{-}=F_{L}(K^{-},L,I_{t})$.
The lower bound on wages first rises in $I_{t}$ and then declines
in $I_{t}$. It declines to $A$ in finite time if full automation
is reached asymptotically, i.e., if $\lim_{I\rightarrow\infty}\Phi(I)=1$. 

If $\Phi(I_{t})<1$, an upper bound $K_{t}^{+}$ for capital is defined
by $F_{K}\left(K_{t}^{+},L,I_{t}\right)=R=\rho+\delta$ $\forall t$
as long as a solution exists; otherwise we set $K_{t}^{+}=\infty$.
The upper bounds for output and wages are $Y_{t}^{+}=F(K_{t}^{+},L,I_{t})$
and $w_{t}^{+}=F_{L}(K_{t}^{+},L,I_{t})$. All three upper bounds
are increasing in the automation index $I_{t}$. If automation is
full, $\Phi(I_{t})=1$, the upper bounds are $K_{t}^{+}=\infty$ and
$Y_{t}^{+}=\infty$, and the upper bound on wages discontinuously
collapses to $w_{t}^{+}=A$.
\end{prop}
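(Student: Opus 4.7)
My plan is to bracket the capital path $K_t$ first and then transfer these bounds to output and wages via the monotonicity of $F(\cdot,L;I)$ and $F_L(\cdot,L;I)$ in $K$ established in Lemma \ref{lem:scarcity-of-labor}. Throughout I would use the Euler equation \eqref{eq:Euler} together with the Pontryagin and transversality conditions associated with \eqref{eq:U}--\eqref{eq:Kdot}.

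For the lower bound on capital, I would show $K_t \geq K_0$ for all $t$. Lemma \ref{lem:automation_wages}(i) tells us that $F_K$ is weakly increasing in $I$ at fixed $K$, so the hypothesis $F_K(K_0,L;I_0)\geq\rho+\delta$ implies $F_K(K_0,L;I_t)\geq\rho+\delta$ for every $t$. A standard Ramsey-style argument then rules out any optimal path along which $K_t<K_0$: at any putative crossing time, $F_K\geq\rho+\delta$ would imply $\dot C_t\geq 0$ by the Euler equation, and a small reallocation of consumption from before the crossing to after would strictly raise lifetime utility, contradicting optimality. The lower bounds $Y_t^-=F(K_0,L;I_t)$ and $w_t^-=F_L(K_0,L;I_t)$ then follow from monotonicity of $F$ in $K$ and, within Region 1, of $F_L$ in $K$ via \eqref{eq:w}, while in Region 2 both sides collapse to $A$.

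The main obstacle is the upper bound on capital. I would define $K_t^+$ implicitly by $F_K(K_t^+,L;I_t)=\rho+\delta$ whenever a finite solution exists, and set $K_t^+=\infty$ otherwise (e.g.\ once $\Phi(I_t)=1$, where $F_K\equiv A>\rho+\delta$). I would then show $K_t\leq K_t^+$ by contradiction: if $K_{t^*}>K_{t^*}^+$ at some $t^*$, then $F_K<\rho+\delta$ at $t^*$, $\dot C_{t^*}<0$ by \eqref{eq:Euler}, and the shadow price $\mu_t=u'(C_t)$ is strictly increasing there. Concavity of $u$ together with the transversality condition $\lim_{t\to\infty}e^{-\rho t}\mu_t K_t=0$ then yields a contradiction via the classical no-overshoot argument, extended to the non-autonomous setting by observing that $K_t^+$ is itself non-decreasing in $t$ (since $F_K$ is non-decreasing in $I$ at fixed $K$), so the agent would strictly gain by consuming capital down to $K_t^+$ immediately. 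The delicate piece here is the kink of $F_K$ at the Region 1/Region 2 boundary $\hat{I}$, which I would handle by working with one-sided derivatives in a neighborhood of the threshold.

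Finally, I would transfer the capital bounds to output and wages. Monotonicity of $F$ and $F_L$ in $K$ immediately gives $Y_t^\pm=F(K_t^\pm,L;I_t)$ and $w_t^\pm=F_L(K_t^\pm,L;I_t)$. The hump-shape of $w_t^-$ and its collapse to $A$ in finite time follow by applying Lemma \ref{lem:automation-and-wages} at fixed $K=K_0$: as $\Phi(I_t)$ rises past the threshold $K_0/(K_0+L)<1$---which is crossed in finite time whenever $\lim_{I\to\infty}\Phi(I)=1$---the fixed-capital economy enters Region 2 and $w_t^-=A$ thereafter. Monotonicity of the upper bounds in $I_t$ comes from implicit differentiation at $F_K=\rho+\delta$ combined with Lemma \ref{lem:automation_wages}(i). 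Once $\Phi(I_t)=1$, the upper-bound economy is forced into Region 2 with $w=A$ irrespective of $K$, which delivers the advertised discontinuous collapse of $w_t^+$ even though $K_t^+=\infty$.
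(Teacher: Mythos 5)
Your proposal follows essentially the same route as the paper: use the Euler equation to trap the capital path between the fixed-capital lower bound $K_0$ and the threshold $K_t^+$ defined by $F_K=\rho+\delta$, then transfer these bounds to output and wages via monotonicity in $K$ and the factor-price-frontier/Region-2 results. You simply spell out the no-undershoot and no-overshoot arguments that the paper leaves implicit, so this is a correct (and somewhat more careful) rendering of the same proof.
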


\begin{proof}
Observe that for any twice continuously differentiable period utility
function that is increasing and strictly concave, the elasticity in
the Euler equation (\ref{eq:Euler}) satisfies $\eta(C_{t})\in(0,\infty)$.
Consumption on the optimal path is increasing as long as $F_{K}>\rho+\delta$
and constant when $F_{K}=\rho+\delta$. Our characterization of the
factor price frontier delivers most of the remaining results. 

For the lower bound, observe that increases in $I$ and $\Phi(I)$
raise the marginal product $F_{K}$ for given $K$, triggering additional
capital accumulation, which raises output and wages above the lower
bound. For the upper bound, observe that by the Euler equation, capital
accumulation will never exceed the upper threshold $K_{t}^{+}$, which
is given by
\begin{equation}
K_{t}^{+}=\frac{A^{\sigma}L(1-\Phi(I_{t}))^{\frac{1}{\sigma-1}}\Phi(I_{t})}{(R^{\sigma-1}-A^{\sigma-1}\Phi(I_{t}))^{\frac{\sigma}{\sigma-1}}}.\label{eq:optimal_K}
\end{equation}

For given $I_{t}$, output and wages are increasing in $K_{t}$, implying
that they must lie between the lower and upper bounds defined by $K^{-}$
and $K_{t}^{+}$. If $\Phi(I_{t})=1$, the production function is
$AK$-style, and Lemma \ref{lem:automation_wages} implies that $w_{t}=A$.
\end{proof}
On the factor price frontier, the lower bound on wages $w^{-}$ is
pinned down by the automation path in Figure \ref{fig:FPF_risingAut};
it collapses to $A$ in finite time if the economy asymptotically
converges to full automation. As long as $\Phi(I)<1$, the upper bound
on wages $w_{t}^{+}$ is pinned down by the intersection of the corresponding
factor price frontier with a vertical line at $R=\rho+\delta$ and
rises without bounds in $I$. However, when full automation $\Phi(I)=1$
is reached, the upper bound on wages $w_{t}^{+}$ discontinuously
collapses to $A$, which equals the lower bound and must therefore
equal the equilibrium wage. This result is independent of intertemporal
preferences and savings behavior and occurs in finite time if the
distribution of task complexity $\Phi(I)$ is bounded, as in our two
AGI scenarios.

\medskip{}

\paragraph{The Balancing Savings Rate}

To further investigate the race between automation and capital accumulation,
we analyze the threshold at which the wage effects of automation and
capital accumulation precisely offset each other. For this, we take
the total differential of the equilibrium wage, $w_{t}=F_{L}(K_{t},L;I_{t})$,
and set $dw_{t}=0$ to find
\begin{equation}
F_{KL}(K_{t},L;I_{t})\frac{dK_{t}}{dt}+F_{LI}(K_{t},L;I_{t})\frac{dI_{t}}{dt}=0
\end{equation}
Suppose, for simplicity, that $\delta=0$ so we can denote the savings
rate at $t$ by $s_{t}=\dot{K}_{t}/Y_{t}$. Also, note that $F_{LI}(K_{t},L;I_{t})\frac{dI_{t}}{dt}=F_{L\Phi}\dot{\Phi}_{t}$.
Then 
\[
s_{t}Y_{t}\cdot F_{KL}=-F_{L\Phi}\dot{\Phi}_{t}
\]
The left-hand side is the increase in wages due to capital accumulation.
The right-hand side is the change in wages due to automation. As we
observed above in Lemma \ref{lem:automation_wages}, the term $F_{L\Phi}$
encompasses the productivity effect and the displacement effect of
automation on wages. Dividing by the cross-derivative $F_{KL}$, 
the fraction $\frac{F_{L\Phi}}{F_{KL}}$ captures the wage effects
of automation relative to capital accumulation. After some algebra,
we obtain
\[
\frac{F_{L\Phi}}{F_{KL}}=\bigg[\frac{\sigma}{1-\sigma}(k/\ell)^{\frac{1-\sigma}{\sigma}}-\bigg(\kappa+\frac{1}{1-\sigma}\bigg)\bigg]k
\]
The first term in the brackets is the productivity effect that is
increasing in the relative abundance of capital $k/\ell$ -- if capital
is very abundant compared to labor, then using capital for newly automated
tasks significantly raises output. The second term is the displacement
effect that is increasing in $\kappa$ and thus in the automation
index $I$. (Recall that for the given level of automation $I,$ $\kappa\left(I\right)=\Phi(I)/\left[1-\Phi\left(I\right)\right]$
reflects the threshold of the capital/labor ratio below which the
economy is in region 2 such that the scarcity of labor is lifted.)
Intuitively, if a large fraction of tasks has already been automated,
then further automation of marginal tasks will result in a large fall
in labor demand since the automated labor has to be reallocated to
an ever smaller set of human-only tasks. 

By rearranging terms, we obtain the following expression for the savings
rate that, given $\Phi$ and $g$, perfectly offsets the effect of
automation on wages
\begin{equation}
\tilde{s}_{t}=\bigg[\bigg(\kappa+\frac{1}{1-\sigma}\bigg)-\frac{\sigma}{1-\sigma}(k/\ell)^{\frac{1-\sigma}{\sigma}}\bigg]\cdot\frac{K_{t}}{Y_{t}}\cdot\frac{\phi_{t}I_{t}}{\Phi_{t}}\cdot g\label{eq:balanced-savings-rate}
\end{equation}
The condition tells us that, to offset the effect of automation on
wages, the savings rate must be increasing with (i) the displacement
effect net of the productivity effect, (ii) the capital-output ratio,
(iii) the relative mass of automated tasks at the current compute
threshold for task automation, and (iv) the growth of compute $g$.
Intuitively, a large fraction of output must be invested if (i) the
displacement effect reduces wages significantly, or (ii) there is
already a large amount of capital stock in the economy, or (iii) a
large amount of tasks are being automated, or (iv) automation is fast.

The expression in (\ref{eq:balanced-savings-rate}) tells us about
the threshold level for the savings rate at time $t$ above which
wages rise and below which wages fall, given the extent of automation
occurring at time $t$. In other words, it characterizes the short-run
behavior of wages as an outcome of the race between automation and
capital accumulation. 

\paragraph{Long-Run Dynamics for Unbounded Task Distributions}

To further illuminate the trade-off in (\ref{eq:balanced-savings-rate}),
we turn to the long-run dynamics of wages. To do so, we start by characterizing
the conditions for the existence of a balanced growth path (BGP).
We define a BPG as an equilibrium path on which output and capital
stock grow at a constant rate and factor shares remain constant.
\begin{lem}
\label{lem:long-run-Ramsey}Suppose that as $t$ increases, $\Phi(I_{t})\rightarrow1$,
and focus on the limit case. Then the return to capital converges
to $A$. Moreover, output and capital stock grow at the rate $(A-\rho-\delta)/\eta$
and the savings rate converges to $(A-\rho-\delta+\eta\delta)/A\eta$.
\end{lem}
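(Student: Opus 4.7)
The plan is to leverage the \textbf{Limit Behavior in Region 2} derivation that immediately precedes Proposition \ref{prop:bounds} and argue that, once $\Phi(I_t)\to 1$, the economy's dynamics reduce to those of a standard $AK$ Ramsey model, for which the balanced growth path is already characterized.

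First, I would establish that $F_K(K_t,L;I_t)\to A$ along the optimal path. By Lemma \ref{lem:scarcity-of-labor}, the economy lies in Region 2 whenever $K_t/L\le\kappa(I_t)=\Phi(I_t)/(1-\Phi(I_t))$, and in that region $F_K=A$ exactly. Since $\kappa(I_t)\to\infty$ as $\Phi(I_t)\to 1$, Region 2 eventually engulfs any trajectory whose $K_t/L$ ratio does not grow faster than $\kappa(I_t)$. The upper bound $K_t^{+}$ from Proposition \ref{prop:bounds} is exactly what rules out the pathological alternative: on any optimal path, $K_t$ can never exceed the level at which $F_K=\rho+\delta<A$, so $K_t/L$ cannot outrun $\kappa(I_t)$ once $\Phi$ is close enough to $1$. (Equivalently, one can observe that the factor price frontier collapses to the single point $(A,A)$ as $\Phi\to 1$, so any feasible return on capital must be squeezed up to $A$.)

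Second, with $F_K\to A$, the Euler equation (\ref{eq:Euler}) under CRRA utility delivers the consumption growth rate $\dot C_t/C_t \to (A-\rho-\delta)/\eta$ directly. Third, to pin down the long-run savings rate, I would use the capital accumulation equation (\ref{eq:Kdot}) with $Y_t=A(K_t+L)$, divide through by $K_t$, and exploit $L/K_t\to 0$ to obtain $g_K=sA-\delta$ asymptotically, where $s=1-C_t/Y_t$ is the limiting savings rate. Equating $g_K=g_C$ and solving for $s$ yields $s^{\infty}=(A-\rho-\delta+\eta\delta)/(A\eta)$, as claimed. Since $g_Y=g_{K+L}\to g_K$ along the BGP, output grows at the same rate as capital, closing the argument.

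The main obstacle is the first step: making rigorous the claim that the optimal trajectory indeed enters (and remains in) Region 2, or at least that $F_K\to A$. The intuitive sandwich is clear, but a clean argument requires combining the upper bound from Proposition \ref{prop:bounds} with a continuity argument on the factor price frontier and a verification that the transversality condition is satisfied on the candidate BGP. Once $F_K\to A$ is granted, Steps 2 and 3 reduce to the textbook $AK$-Ramsey calculation already performed in the \textbf{Limit Behavior in Region 2} paragraph.
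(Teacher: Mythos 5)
Your Steps 2 and 3 (the Euler equation and the $AK$-Ramsey balanced-growth calculation) match the paper's proof exactly. The problem is Step 1, where you try to force the economy into Region 2 and lean on the upper bound $K_t^{+}$ to do it. That argument does not go through. Region 2 requires $K_t/L\leq\kappa(I_t)=\Phi(I_t)/(1-\Phi(I_t))\sim(1-\Phi)^{-1}$, whereas the upper bound from equation (\ref{eq:optimal_K}) behaves like $K_t^{+}\sim C\,(1-\Phi(I_t))^{\frac{1}{\sigma-1}}=C\,(1-\Phi(I_t))^{-\frac{1}{1-\sigma}}$ with $\frac{1}{1-\sigma}>1$, so $K_t^{+}/L$ grows \emph{faster} than $\kappa(I_t)$, not slower. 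Being below the upper bound therefore does not place the trajectory in Region 2 — indeed the upper-bound path itself sits in Region 1 with $F_K=\rho+\delta<A$ for all $t$, which is exactly the "pathological alternative" you claim to have ruled out. Your parenthetical fallback (the FPF collapsing to $(A,A)$) also only holds at $\Phi=1$ exactly: for $\Phi<1$ the frontier's endpoint $w^{\ast}(I)=A(1-\Phi(I))^{\frac{1}{\sigma-1}}$ diverges, so there is no continuity argument squeezing $R$ up to $A$ along the approach.

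The paper avoids this entirely by not asking which region the economy lands in: it observes that the \emph{Region 1} production function itself converges pointwise to $AK$ as $\Phi\rightarrow1$, since
\[
\lim_{\Phi\rightarrow1}A\left[K^{\frac{\sigma-1}{\sigma}}\Phi^{\frac{1}{\sigma}}+L^{\frac{\sigma-1}{\sigma}}(1-\Phi)^{\frac{1}{\sigma}}\right]^{\frac{\sigma}{\sigma-1}}=AK,
\]
while in Region 2 the technology is $A(K+L)$; in either case $F_K\rightarrow A$ and $Y/K\rightarrow A$, after which your Steps 2 and 3 apply verbatim. That is the reading of "focus on the limit case" in the statement: evaluate the technology at $\Phi=1$, where $F_K=A$ holds for any $K$. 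If instead you want a statement about the approach to the limit along an equilibrium path, you need the additional condition that $K_t/L$ grows strictly slower than $(1-\Phi(I_t))^{-\frac{1}{1-\sigma}}$ — which is precisely the case distinction that Proposition \ref{prop:race-between-I-and-K} is there to handle, not something that follows from Proposition \ref{prop:bounds} alone.
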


\begin{proof}
If the economy is in region 1 in the limit, then the production function
converges to
\[
\lim_{\Phi\rightarrow1}A\left[K^{\frac{\sigma-1}{\sigma}}\Phi{}^{\frac{1}{\sigma}}+L^{\frac{\sigma-1}{\sigma}}(1-\Phi)^{\frac{1}{\sigma}}\right]^{\frac{\sigma}{\sigma-1}}=AK
\]
If the economy is in region 2 in the limit, then $F(K,L)=A(K+L)$.
In both cases,
\[
\lim_{\Phi\rightarrow1}F_{K}=A
\]
As a result, the Euler equation implies
\[
\frac{\dot{C}_{t}}{C_{t}}=\frac{1}{\eta}[F_{K}-\rho-\delta]\rightarrow\frac{1}{\eta}[A-\rho-\delta]
\]
as $\Phi\rightarrow1$. Output and capital must grow at the same rate,
which implies that the savings rate must satisfy 
\[
\frac{\dot{C}_{t}}{C_{t}}=\frac{\dot{K}_{t}}{K_{t}}=\frac{s^{\infty}Y_{t}}{K_{t}}-\delta
\]
Since $\lim_{\Phi\rightarrow1}\frac{Y_{t}}{K_{t}}=A$, this requires
that $\dot{K}_{t}/K_{t}\rightarrow s^{\infty}A-\delta$. Therefore,
we have
\begin{align*}
s^{\infty}A-\delta & =\frac{1}{\eta}[A-\rho-\delta]\\
s^{\infty} & =\frac{A-\rho-\delta+\eta\delta}{A\eta}
\end{align*}
\end{proof}
Since we are interested in the long-run dynamics of wages, we make
a simplifying assumption that the savings rate is given exogenously
at a constant value $s^{\infty}$, which can be interpreted as the
long-run savings rate. Under the assumption, $s$ is a key parameter
determining the rate of capital accumulation. Depending on the value
of $s^{\infty}$ relative to the rate of automation $g$, the race
between automation and capital accumlulation can result in three possible
outcomes. The following proposition summarizes the results.

\begin{figure}
\centering{}\includegraphics[scale=0.65]{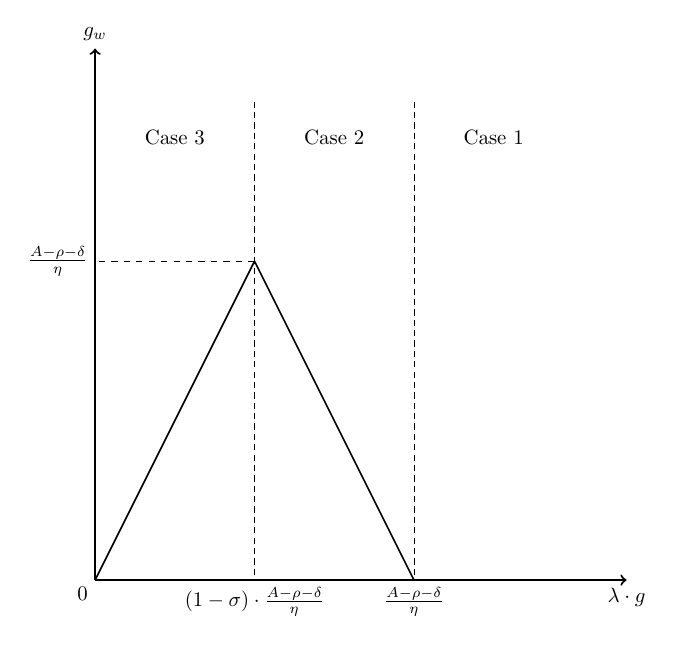}\caption{\label{fig:max_wage}Wage growth rate ($g_{w}$) as a function of
the rate of automation ($\lambda g$)}
\end{figure}

\begin{prop}[Race between automation and capital accumulation]
\label{prop:race-between-I-and-K}Suppose the complexity distribution
of tasks is Pareto and that the economy starts in region 1, i.e.,
$I_{0}<\hat{I}_{0}$. Then the growth of wages and long-run labor
shares are characterized by two thresholds on the rate of automation
$\lambda g$: 
\end{prop}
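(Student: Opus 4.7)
Since $\Phi(i)=1-i^{-\lambda}$, the tail $1-\Phi(I_t)$ decays exponentially at rate $\lambda g$ and $\kappa(I_t)=\Phi/(1-\Phi)$ grows at rate $\lambda g$ asymptotically. My plan is to conjecture long-run behavior of $K_t$ under the fixed savings rate $s^\infty$, verify internal consistency to pin down the relevant regimes, and finally read off wage growth and labor share from the region-1 expressions.

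First I would assume the economy remains in region 1 and rewrite output as
\[
Y/K = A\bigl[\Phi^{1/\sigma} + (L/K)^{(\sigma-1)/\sigma}(1-\Phi)^{1/\sigma}\bigr]^{\sigma/(\sigma-1)}.
\]
The ratio of the labor term to the capital term inside the brackets drifts at asymptotic exponential rate $[(1-\sigma)g_K-\lambda g]/\sigma$, where $g_K$ is the long-run growth rate of capital. Combining this with the accumulation identity $g_K=s^\infty Y/K-\delta$ leaves three candidate long-run regimes: (i) the capital term dominates, $Y/K\to A$ and $g_K=s^\infty A-\delta$, self-consistent iff $(1-\sigma)g_K<\lambda g$; (ii) the two terms stay comparable on a balanced growth path with $g_K=\lambda g/(1-\sigma)$ and $Y/K=\gamma\equiv(\lambda g/(1-\sigma)+\delta)/s^\infty$, self-consistent iff $\gamma<A$, equivalently $\lambda g<(1-\sigma)(s^\infty A-\delta)$; (iii) the labor term dominates, which would force $Y/K\to 0$ and $g_K\to-\delta<0$ and is therefore inconsistent with any long-run expansion.

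Second I would verify that region 1 persists. In case (ii), $g_K=\lambda g/(1-\sigma)>\lambda g$ automatically since $\sigma\in(0,1)$, so $K/L$ outgrows $\kappa(I_t)$. In case (i), $K/L$ outgrows $\kappa(I_t)$ iff $s^\infty A-\delta>\lambda g$, which is the second threshold. When $\lambda g>s^\infty A-\delta$, $K/L$ drops below $\kappa(I_t)$ in finite time, the economy crosses into region 2, and Lemma \ref{lem:scarcity-of-labor} pins $w=R=A$ thereafter. Plugging the three asymptotic behaviors into the wage formula $w=A^{(\sigma-1)/\sigma}(Y/L)^{1/\sigma}(1-\Phi)^{1/\sigma}$ yields: Regime I ($\lambda g<(1-\sigma)(s^\infty A-\delta)$) with $g_w=\lambda g/(1-\sigma)$ and constant positive labor share; Regime II ($(1-\sigma)(s^\infty A-\delta)<\lambda g<s^\infty A-\delta$) with $g_w=(s^\infty A-\delta-\lambda g)/\sigma$ and vanishing labor share; Regime III ($\lambda g>s^\infty A-\delta$) with wages eventually frozen at $A$ and labor share going to zero. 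Continuity of $g_w$ at both thresholds reproduces the hump in Figure \ref{fig:max_wage}, peaking at $s^\infty A-\delta$ exactly when $\lambda g=(1-\sigma)(s^\infty A-\delta)$.

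The main obstacle is showing that the economy actually converges to the conjectured asymptotic regime rather than merely admitting it as a self-consistent candidate. In particular, the balanced growth path of Regime I is a knife-edge between the two CES terms; I would argue local stability by noting that any upward deviation of the bracket ratio lowers $Y/K$, hence lowers $g_K$ through the savings equation, hence reduces the drift rate $(1-\sigma)g_K-\lambda g$ of the bracket ratio, pulling it back down, and symmetrically for downward deviations. This feedback, together with the monotone exponential behavior of $1-\Phi(I_t)$ and $\kappa(I_t)$ under the Pareto assumption, is what delivers global convergence to each of the three regimes and justifies the clean threshold characterization.
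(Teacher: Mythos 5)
Your proposal is correct and follows essentially the same route as the paper's appendix proof: both arguments track the asymptotic behavior of the ratio of the two CES terms (the paper's $\Omega_{t}\equiv K_{t}^{\frac{\sigma-1}{\sigma}}\big(\Phi_{t}/(1-\Phi_{t})\big)^{\frac{1}{\sigma}}$ is exactly the reciprocal of your bracket ratio up to the factor $L^{\frac{\sigma-1}{\sigma}}$), rule out the labor-dominant limit as inconsistent with $g_{K}=s^{\infty}Y/K-\delta$, obtain the same two thresholds from the self-consistency conditions and from the region-1 persistence requirement $\dot{K}_{t}/K_{t}\geq\lambda g$, and read off the same asymptotic wage growth rates $\lambda g/(1-\sigma)$ and $\frac{1}{\sigma}\big(s^{\infty}A-\delta-\lambda g\big)$. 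Two remarks. First, you add something the paper does not: a local-stability argument for the knife-edge balanced-growth regime (higher bracket ratio lowers $Y/K$, hence $g_{K}$, hence the ratio's drift), whereas the paper only verifies that each candidate limit is self-consistent; your feedback argument is the right ingredient for upgrading "consistent candidate" to "convergence," though as written it establishes only local, not global, stability. Second, your conclusion that the labor share vanishes in the intermediate regime disagrees with the proposition's stated "labor share converges to one" in case 2 — but it agrees with the paper's own appendix formula $LS_{t}=L^{\frac{\sigma-1}{\sigma}}/\big(\Omega_{t}+L^{\frac{\sigma-1}{\sigma}}\big)$ evaluated at $\Omega_{t}\rightarrow\infty$, and with the direct computation $g_{w}-g_{Y}=\frac{1}{\sigma}\big[(1-\sigma)(s^{\infty}A-\delta)-\lambda g\big]<0$ in that regime; the discrepancy appears to sit in the proposition's statement rather than in your argument.
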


\begin{enumerate}
\item If $\lambda g>\frac{A-\rho-\delta}{\eta}$ then $\lim_{t\rightarrow\infty}w_{t}=A$
and the labor share converges to zero. 
\item If $\frac{A-\rho-\delta}{\eta}\cdot(1-\sigma)<\lambda g\leq\frac{A-\rho-\delta}{\eta}$
then wages grow exponentially at an asymptotic rate $\frac{1}{\sigma}\bigg(\frac{A-\rho-\delta}{\eta}-\lambda g\bigg)$
and the labor share converges to one. 
\item Lastly, if $\lambda g\leq\frac{A-\rho-\delta}{\eta}\cdot(1-\sigma)$
then wages grow exponentially at an asymptotic rate $\frac{\lambda g}{1-\sigma}$
and the labor share converges to $1-\bigg[\frac{(A-\rho-\delta+\eta\delta)/\eta}{\frac{\lambda g}{1-\sigma}+\delta}\bigg]^{\frac{\sigma-1}{\sigma}}$.
\end{enumerate}
\begin{proof}
See Appendix \ref{subsec:Proofs-of-Proposition-race-I-and-K}.
\end{proof}
Intuitively, the proposition illustrates how wages evolve as the result
of a race between automation and capital accumulation. As observed
above, the fraction $\frac{A-\rho-\delta}{\eta}$ is proportional
to the long-run savings rate of the economy. In the first case, if
the rate of task automation $\lambda g$ is too high compared the
savings rate, then the automation index $I$ crosses the threshold
$\hat{I}$ in finite time and the economy transitions into region
2, where wages collapse to $A$ and remain stagnant. If the rate of
task automation $\lambda g$ is at an intermediate value, then wage
growth is constrained by capital accumulation. Wages grow perpetually
at rate $\frac{1}{\sigma}(\frac{A-\rho-\delta}{\eta}-\lambda g)$,
which is proportional to the savings rate minus the rate of automation.
Finally, if $\lambda g$ is low enough, then the rate of automation
rather than capital accumulation constrains wage growth. In other
words, wage growth depends on how fast automation increases the efficiency
of factor allocation and allows the utilization of abundant capital.
Indeed, the growth rate of wages (and of the entire economy) in this
regime is increasing in the rate of automation. 

Figure \ref{fig:max_wage} illustrates the three cases in Proposition
\ref{prop:race-between-I-and-K}. The figure plots the long-run growth
rate of wages as a function of the rate of automation. If $\lambda g$
is sufficiently low as in case 1, then the wage growth rate is increasing
in the rate of automation as the upward-sloping part of the curve
indicates. Once $\lambda g$ surpasses the first threshold value,
the growth rate of wages starts to decline as $\lambda g$ increases
further. Lastly, if $\lambda g$ surpasses the second threshold value,
then wages do not grow in the long run and stay at $A$.

\subsection{Numerical Illustration}

\begin{table}
\centering{}%
\begin{tabular}{|>{\centering}m{0.08\textwidth}|>{\centering}m{0.08\textwidth}|>{\raggedright}m{0.25\textwidth}|}
\hline 
\multicolumn{1}{|c|}{Parameter} & Value & Description\tabularnewline
\hline 
\hline 
$\rho$ & 0.04 & Discount rate\tabularnewline
\hline 
$\eta$ & 2 & Risk aversion parameter\tabularnewline
\hline 
$\delta$ & 0.1 & Depreciation rate\tabularnewline
\hline 
$\sigma$ & 0.5 & Elasticity of substitution\tabularnewline
\hline 
$A$ & 0.5 & Total factor productivity\tabularnewline
\hline 
$L$ & 1 & Labor endowment\tabularnewline
\hline 
$\Phi_{0}$ & 0.608 & Initial fraction of automated tasks\tabularnewline
\hline 
$K_{0}$ & 4.6 & Initial capital stock\tabularnewline
\hline 
\end{tabular}\caption{\label{tab:calibration}Parameter values for the numerical illustration}
\end{table}
To provide an illustration of the theoretical results, we present
simulations of the four automation scenarios described in Section
3.1. Table \ref{tab:calibration} summarizes the parameter values
that were common to all the simulations. The first five parameters
are standard in the literature, and $L=1$ is a normalization. We
chose $\Phi_{0}$ and $K_{0}$ to match a 66\% initial labor share
with capital at its steady state for that level of technology.

\begin{figure}
\centering{}\subfloat[Business-as-usual scenario]{\includegraphics[width=0.48\columnwidth]{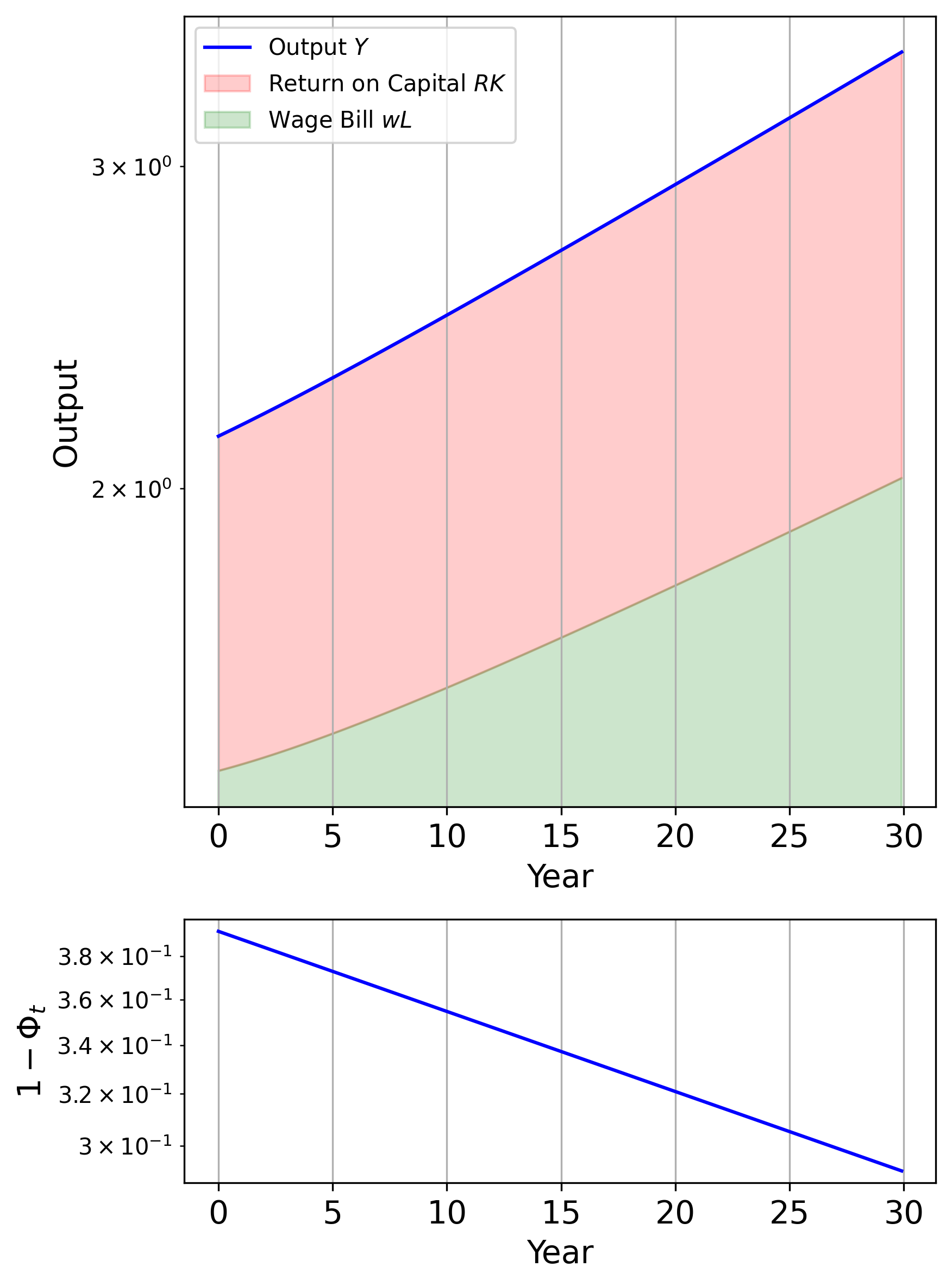}}\medskip{}
\subfloat[Baseline AGI scenario]{\includegraphics[width=0.48\columnwidth]{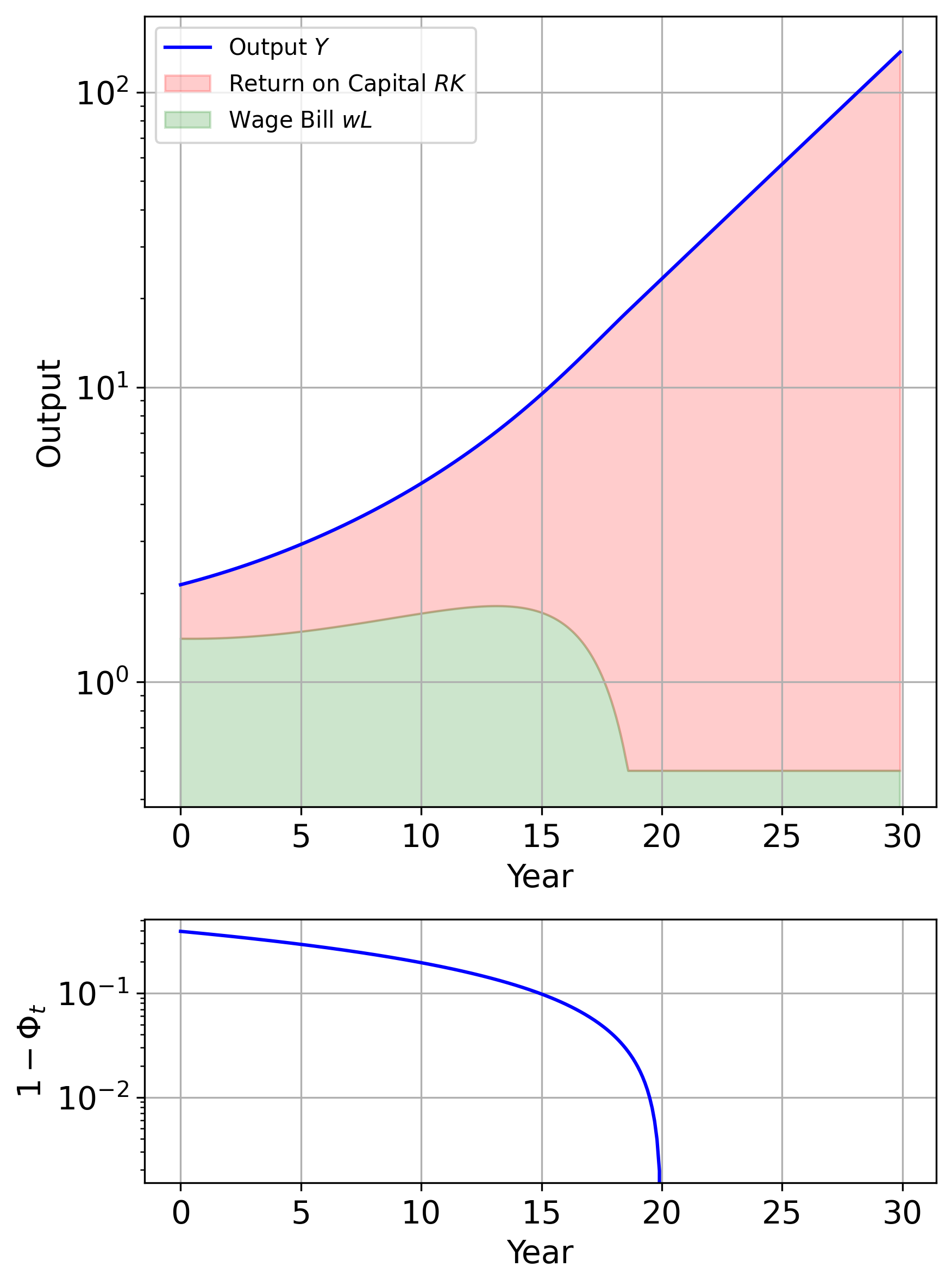}}\medskip{}
\subfloat[Aggressive AGI scenario]{\includegraphics[width=0.48\columnwidth]{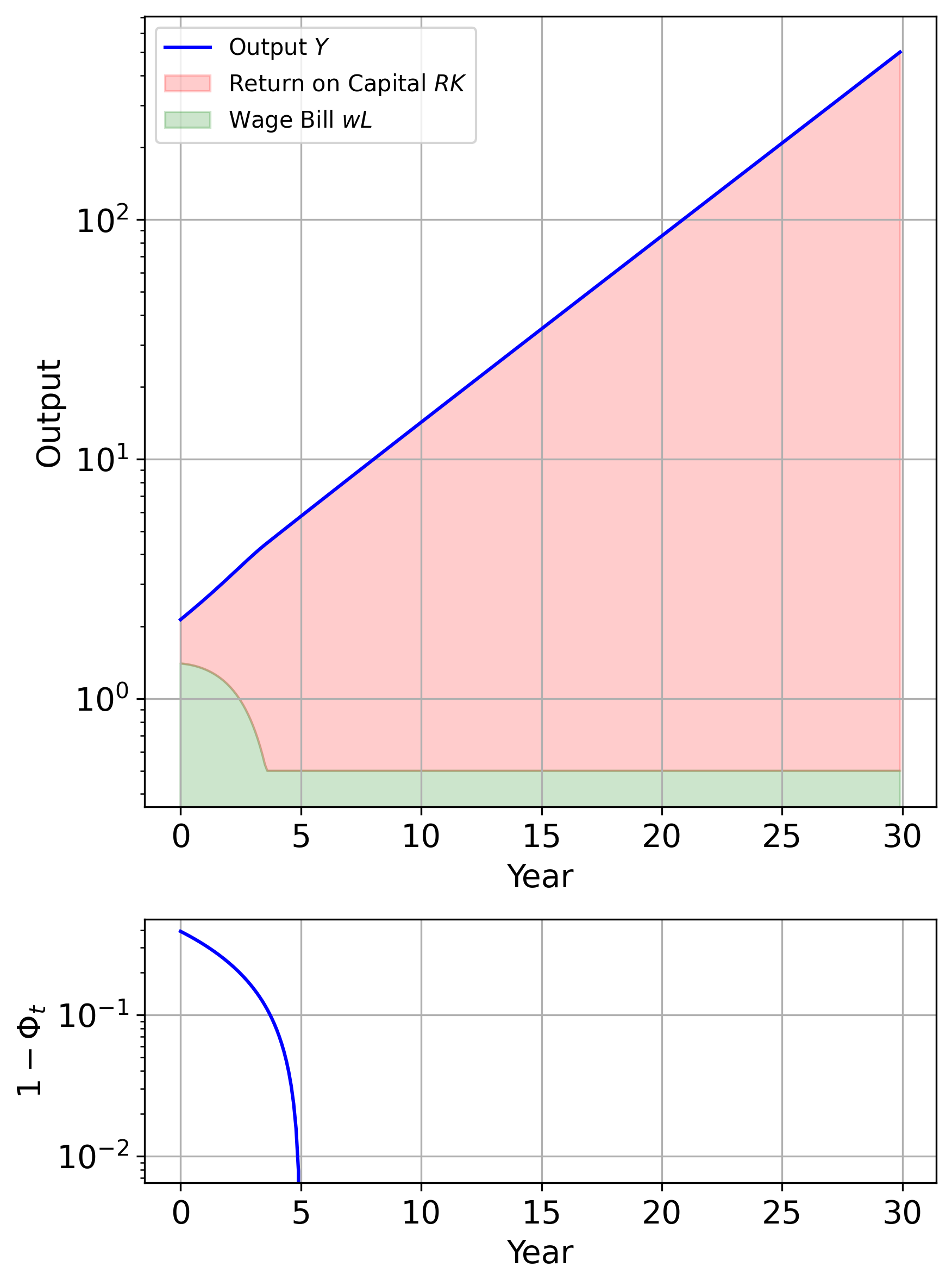}}\medskip{}
\subfloat[Mixed scenario]{\includegraphics[width=0.48\columnwidth]{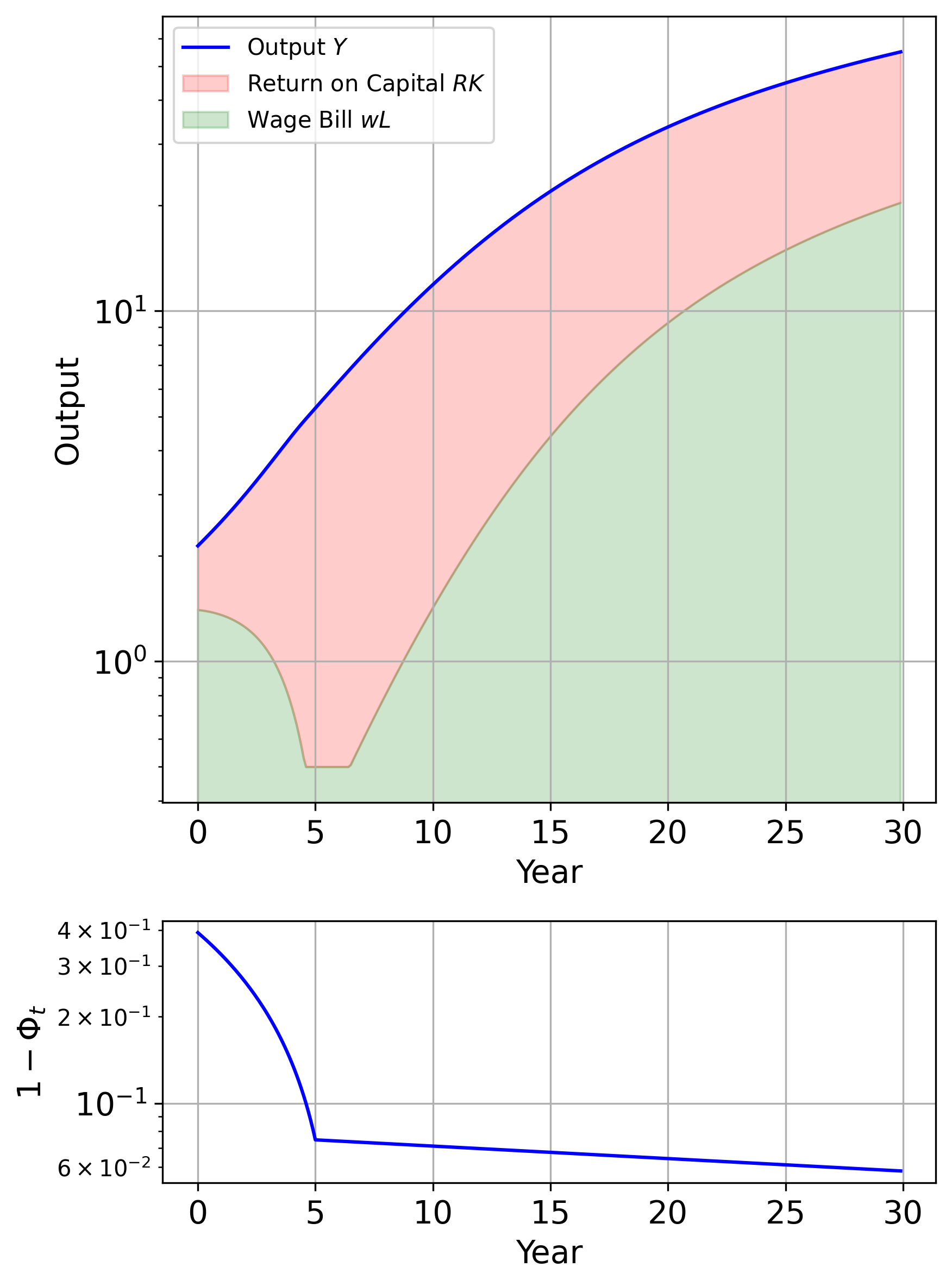}}\caption{\label{fig:scenarios}Simulations of the four scenarios}
\end{figure}
Figure \ref{fig:scenarios} presents the results. Panel (a) shows
the traditional automation scenario ``business-as-usual,'' in which
$\Phi(i)$ reflects a rate of task automation of $\lambda g=0.01$
per year. The upper part of the panel shows the output, split into
the returns to capital (red, upper area) and the wage bill (green,
lower area), on a logarithmic scale. The lower part of the panel shows
the fraction of unautomated tasks $1-\Phi$ on a logarithmic scale---for
panel (a), this is a straight line, capturing exponential decay. We
observe that in the ``business-as-usual'' scenario, output grows
at approximately 2\% per year, and both the returns to capital and
the wage bill rise approximately in tandem (with a small decline in
the labor share due to the effects of automation). Note that this
scenario corresponds to case 3 in Proposition \ref{prop:race-between-I-and-K},
i.e., capital accumulation is sufficiently fast so that growth is
constrained by the speed of automation. 

Panels (b) and (c) show the AGI scenarios, in which the fraction
of unautomated tasks collapses to zero in 20 or 5 years, respectively.
In the baseline AGI scenario, wages grow slightly during the initial
periods but then collapse before full automation is reached. After
the collapse, wages are equal to the returns to capital, and the economy
remains in region 2 where labor and capital are perfectly substitutable,
with steady-state growth of 18\% per year. In the aggressive AGI scenario,
the wage collapse happens after about 3 years. Since the scarcity
of labor is relieved earlier than in the baseline AGI scenario, the
growth take-off occurs earlier. 

Panel (d) shows the ``bout-of-automation'' scenario. During the
initial periods, a large fraction of tasks are automated, leading
to wage collapse similar to the aggressive AGI scenario as the economy
enters region 2---labor is abundant because of the rapid automation
and comparatively low capital stock. However, over time, the economy
accumulates more capital, making labor scarcer again. Around year
9, the economy has accumulated sufficient capital so that it returns
to region 1. Wages rise above $A$ and start growing again in line
with further (slower) advances in automation and further capital accumulation.
This scenario illustrates the possibility that labor demand may collapse
due to rapid automation but recover later because of a long tail of
tasks that cannot be automated.

\section{Extensions}

\subsection{\label{ssec:fixedfactors}Fixed Factors and the Return of Scarcity}

If labor is dethroned as the most important factor of production,
it becomes useful to disentangle the remaining factors, which have
traditionally been lumped together into ``capital'' in the economic
models of the Industrial Age. Let us distinguish between factors that
are in fixed supply and factors that are reproducible and can therefore
be accumulated. We continue to call all reproducible factors ``capital,''
including compute, robots, power plants, and factories. By contrast,
factors in fixed supply include land, space, minerals, or solar radiation.\footnote{During the Industrial Age, labor was considered in fixed supply at
the relevant times scale -- raising humans took so long that their
supply could be approximated as exogenous -- whereas human capital
was a reproducible factor.} It is difficult to predict which scarce factors will matter the
most in an AGI-powered future -- in the short term, it is likely
that microchips and the semiconductor fabrication equipment (``fabs'')
used for producing these chips will be bottlenecks, but these are
clearly reproducible. By contrast, the raw materials going into the
production of chips, for example certain rare earth minerals, are
irreproducible. In the longer-term, matter or, equivalently, energy
($E=mc^{2}$) may be the ultimately source of scarcity.

For the purposes of our analysis, we incorporate a fixed factor in
our analysis that we label $M$ for minerals or matter. We assume
that the aggregate production function is a Cobb-Douglas aggregator
of the task composite and $M$,

\begin{equation}
Y=A\left[\int_{i}y(i){}^{\frac{\sigma-1}{\sigma}}d\Phi(i)\right]^{\frac{\sigma}{\sigma-1}\cdot\alpha}M^{1-\alpha}\label{eq:production-ftn-land}
\end{equation}
where $\alpha\in[0,1]$ is the share of the composite among total
output. Then, a version of Lemma \ref{lem:scarcity-of-labor} applies,
separating two regimes:
\begin{lem}
\label{lem:scarcity-of-labor-land}For given $(K,L)$, the automation
threshold $\hat{I}$ is defined by (\ref{eq:Ihat}) as in the original
lemma and is independent of $M$. It defines two regions:

\noindent \textbf{Region 1:} If $I<\hat{I}$, then labor is scarce
compared to capital and employed only for unautomated tasks. Output
is given by

\begin{align}
Y & =F\left(K,L,M;I\right)=A\left[K^{\frac{\sigma-1}{\sigma}}\Phi(I)^{\frac{1}{\sigma}}+L{}^{\frac{\sigma-1}{\sigma}}(1-\Phi(I))^{\frac{1}{\sigma}}\right]^{\frac{\sigma}{\sigma-1}\cdot\alpha}M^{1-\alpha}\label{eq:Yeff-CD-case1}
\end{align}
Wages and the returns to $M$ satisfy
\begin{align*}
w & =\alpha A\left[K^{\frac{\sigma-1}{\sigma}}\Phi(I)^{\frac{1}{\sigma}}+L{}^{\frac{\sigma-1}{\sigma}}(1-\Phi(I))^{\frac{1}{\sigma}}\right]^{\frac{\sigma}{\sigma-1}\cdot\alpha-1}L^{-\frac{1}{\sigma}}(1-\Phi(I))^{\frac{1}{\sigma}}M^{1-\alpha}>R\\
Q & =(1-\alpha)Y/M
\end{align*}

\noindent \textbf{Region 2:} If $I\geq\hat{I}$, then the relative
scarcity of labor is relieved; if the inequality is strict, labor
and capital are perfect substitutes for the marginal task. Output
is given by 
\begin{align}
Y & =F\left(K,L,M\right)=A(K+L)^{\alpha}M^{1-\alpha}\label{eq:Yeff-CD-case2}
\end{align}
Wages and the return to $M$ satisfy
\begin{align}
w & =R=\alpha A(K+L)^{\alpha-1}M^{1-\alpha}\label{eq:w_with_M_reg2}\\
Q & =(1-\alpha)A(K+L)^{\alpha}M^{-\alpha}\nonumber 
\end{align}
\end{lem}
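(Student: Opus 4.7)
The plan is to exploit the multiplicatively separable Cobb--Douglas structure in (\ref{eq:production-ftn-land}) to reduce the problem to an inner allocation problem that coincides with the setting of Lemma \ref{lem:scarcity-of-labor}, and then to read off factor prices by standard envelope arguments. Concretely, for any fixed aggregate $K$ and $L$ the firm's task-by-task allocation problem is exactly the one solved in Lemma \ref{lem:scarcity-of-labor}, because $M$ enters only as the multiplicative factor $M^{1-\alpha}$ which is invariant to how $K$ and $L$ are split across tasks. Therefore the symmetry argument of Lemma \ref{lem:scarcity-of-labor} still delivers uniform per-task inputs $k=K/\Phi(I)$ and $\ell=L/(1-\Phi(I))$ whenever labor is scarce, and the task composite coincides with the expression derived there.

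First, I would substitute the optimal inner allocation into (\ref{eq:production-ftn-land}) to obtain (\ref{eq:Yeff-CD-case1}) in Region 1 and, using the perfect-substitutes argument from Lemma \ref{lem:scarcity-of-labor}, to obtain (\ref{eq:Yeff-CD-case2}) in Region 2. Second, I would verify that the threshold $\hat{I}$ continues to be characterized by (\ref{eq:Ihat}): the condition for $w>R$ reduces, by the first-order conditions, to the same cross-task comparison $k>\ell$ as in the original lemma, because the Cobb--Douglas wrapper contributes the identical factor $\alpha A[\cdots]^{\sigma\alpha/(\sigma-1)-1}M^{1-\alpha}$ to both $w$ and $R$ and therefore cancels. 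This is the key step that shows the threshold is independent of $M$.

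Third, I would compute the factor prices by direct differentiation. The expression for $w$ in Region 1 follows from $w=\partial Y/\partial L$ applied to (\ref{eq:Yeff-CD-case1}), using the chain rule through the inner CES aggregator. The return to the fixed factor is the standard Cobb--Douglas share, $Q=(1-\alpha)Y/M$, obtained from $Q=\partial Y/\partial M$. In Region 2 the task composite collapses to $K+L$, so differentiating $A(K+L)^\alpha M^{1-\alpha}$ directly delivers (\ref{eq:w_with_M_reg2}) and the formula for $Q$, with $w=R$ following because capital and labor are perfect substitutes at the margin.

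The main obstacle I expect is purely bookkeeping, namely confirming that the threshold condition is genuinely unaffected by $M$. Because $M$ is a Hicks-neutral multiplier on the task composite, it scales the marginal products of $K$ and $L$ proportionately, so the relative factor-price comparison that pins down $\hat{I}$ is untouched; the only care needed is in tracking the compound exponent $\sigma\alpha/(\sigma-1)$ arising from nested CES and Cobb--Douglas structures when writing out $w$ in Region 1. Once this algebra is handled, all remaining assertions in the lemma follow immediately from Lemma \ref{lem:scarcity-of-labor}.
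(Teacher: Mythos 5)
Your proposal is correct and follows essentially the same route as the paper, which simply states that the proof proceeds along the same lines as Lemma \ref{lem:scarcity-of-labor}: the Hicks-neutral multiplier $M^{1-\alpha}$ leaves the inner task-allocation problem and the $k>\ell$ threshold comparison unchanged, and the factor prices follow by direct differentiation of the Cobb--Douglas wrapper. Your write-up actually supplies more detail than the paper's one-line proof.
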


\begin{proof}
The proof follows along the same lines as the proof of Lemma \ref{lem:scarcity-of-labor}.
\end{proof}
The presence of the fixed factor $M$ does not affect the key characteristics
of the production function described in Lemma \ref{lem:scarcity-of-labor}
such as the threshold for the automation index beyond which labor
is no longer scarce compared to capital. Similar results apply for
the effects of automation on wages:
\begin{lem}[Automation and Wages with $M$]
For given capital intensity $K/L$, an increase in automation $d\Phi(I)$
always raises $R$ for $I<\hat{I}$. The effects on $w$ is hump-shaped:
there is a threshold $I^{\ast}(K/L)$ with $\Phi(I^{\ast}(\cdot))\in(0,1)$
such that wages $w$ rise in $\Phi(I)$ as long as $I<I^{\ast}(K/L)$
but decline in $\Phi(I)$ for $I>I^{\ast}(K/L)$. The threshold $I^{\ast}$
with $M$ is lower than in Lemma \ref{lem:automation_wages}. In the
limit cases of $\Phi(I)=0$ and $\Phi(I)\geq\kappa/(1+\kappa)$, wages
are given by (\ref{eq:w_with_M_reg2}). The limit is reached for any
$K/L$ ratio if $\Phi(I)=1$.
\end{lem}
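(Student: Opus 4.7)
The plan is to follow the three-step structure of the proof of Lemma \ref{lem:automation-and-wages}, tracking how the additional Cobb-Douglas factor $M^{1-\alpha}$ modifies each step. For the rental rate $R$, I would differentiate the capital first-order condition implied by (\ref{eq:Yeff-CD-case1}) with respect to $\Phi$: the expression for $R$ picks up only a constant multiplicative factor $M^{1-\alpha}$ relative to (\ref{eq:R}), so the monotonicity of $R$ in $\Phi$ in Region 1 follows exactly as in the original argument.

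The core step is characterizing $w$. Writing $T=\Phi^{1/\sigma}K^{(\sigma-1)/\sigma}+(1-\Phi)^{1/\sigma}L^{(\sigma-1)/\sigma}$ and using $k=K/\Phi$, $\ell=L/(1-\Phi)$, one verifies $dT/d\Phi=\frac{1}{\sigma}(k^{(\sigma-1)/\sigma}-\ell^{(\sigma-1)/\sigma})$. Taking logs of the wage expression in Lemma \ref{lem:scarcity-of-labor-land} and differentiating then yields
\begin{equation*}
\frac{d\log w}{d\Phi(I)}=\left(\frac{\sigma\alpha}{\sigma-1}-1\right)\cdot\frac{1}{\sigma}\cdot\frac{k^{(\sigma-1)/\sigma}-\ell^{(\sigma-1)/\sigma}}{T}-\frac{1}{\sigma}\cdot\frac{1}{1-\Phi(I)}.
\end{equation*}
For $\sigma<1$ and $k>\ell$ (Region 1), both factors in the first term are negative, so it is positive and represents the productivity effect; the second term is the displacement effect, identical to the one in Lemma \ref{lem:automation-and-wages}. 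As in that proof, the derivative is strictly positive at $\Phi(I)=0$ and diverges to $-\infty$ as $\Phi(I)\to 1$, so by continuity a unique interior crossing point $I^{\ast}(K/L)$ exists, yielding the hump shape.

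To compare with the $M$-free case, I would rewrite the productivity-effect coefficient as $\sigma\alpha/(\sigma-1)-1=(1-\sigma(1-\alpha))/(\sigma-1)$. Setting $\alpha=1$ recovers the original coefficient $1/(\sigma-1)$, while for $\alpha<1$ we have $1-\sigma(1-\alpha)\in(0,1)$, so the new coefficient has strictly smaller absolute value. Because $T$ depends only on $(K,L,\Phi(I))$ and not on $(M,\alpha)$, the productivity term is strictly weaker than in the $M$-free case while the displacement term is unchanged; hence the sign change from positive to negative occurs at a smaller $\Phi(I)$, yielding a smaller $I^{\ast}$. The limit statements then follow directly from Lemma \ref{lem:scarcity-of-labor-land}: at $\Phi(I)=0$ no capital is deployed so the Region 1 formula simplifies accordingly, while $\Phi(I)\geq\kappa/(1+\kappa)$---in particular $\Phi(I)=1$---places the economy in Region 2 where (\ref{eq:w_with_M_reg2}) gives $w=R$. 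The main obstacle will be executing the chain rule cleanly through the Cobb-Douglas exponent $\sigma\alpha/(\sigma-1)-1$ in $w$; once this coefficient has been algebraically reduced to $(1-\sigma(1-\alpha))/(\sigma-1)$, the comparison with $1/(\sigma-1)$ is elementary and every remaining step mirrors Lemma \ref{lem:automation-and-wages}.
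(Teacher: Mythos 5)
Your proof follows the paper's approach exactly: both compute $\frac{d\log w}{d\Phi(I)}$, identify the modified productivity coefficient $\frac{\sigma\alpha}{\sigma-1}-1$, and conclude that the threshold falls because this coefficient is smaller in absolute value than $\frac{1}{\sigma-1}$ while the displacement term is unchanged --- your version is if anything more explicit than the paper's, which writes the prefactor as $(Y/A)^{\frac{1-\sigma}{\sigma}}$, loosely carrying over the baseline notation where your $1/T$ is the precise object. One minor slip: with $M$ present, $R$ is \emph{not} the baseline expression times a constant $M^{1-\alpha}$ (the exponent on the CES aggregate changes from $\frac{1}{\sigma-1}$ to $\frac{\sigma\alpha}{\sigma-1}-1$ and an extra factor $\alpha$ appears), but the stated conclusion still holds since $\frac{d\log R}{d\Phi}=\left(\frac{\sigma\alpha}{\sigma-1}-1\right)\frac{T'}{T}+\frac{1}{\sigma\Phi}>0$ in Region 1 by the same sign argument you already made for the wage's productivity term.
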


\begin{proof}
The effect of automation on wages for a given $K/L$-ratio is similar
to Lemma \ref{lem:automation-and-wages}: 
\begin{align}
\frac{d\log w}{d\Phi(I)}= & \bigg(\frac{\sigma}{\sigma-1}\alpha-1\bigg)\frac{1}{\sigma}\bigg(k^{\frac{\sigma-1}{\sigma}}-\ell^{\frac{\sigma-1}{\sigma}}\bigg)(Y/A)^{\frac{1-\sigma}{\sigma}}-\frac{1}{\sigma}\frac{1}{1-\Phi(I)}\label{eq:automation-wages-1}
\end{align}
The only difference is the multiplicative term $\frac{\sigma}{\sigma-1}\alpha-1$,
which is smaller than $\frac{1}{\sigma-1}$ for $\alpha<1$. Thus,
the productivity effect is smaller with the fixed factor $M$, meaning
that wages start to decline at lower levels of $I$.
\end{proof}
Although the presence of a fixed factor preserves the key results
on the automation threshold and the wage effects of automation, the
long-run dynamics of the economy change---for the worse. In particular,
we find that wages will always decline to the return on capital as
the economy will always enter region 2 in finite time.
\begin{prop}
\label{prop:race-betwee-I-and-K-M}If $\lim\Phi(I)=1$, then the economy
enters region 2 in finite time, and wages equal the returns on capital
$w=R=\rho+\delta$. The labor share equals $\alpha L/(K^{\ast}+L)$,
where $K^{\ast}$ is defined by
\[
w=R=\alpha A(K^{\ast}+L)^{\alpha-1}M^{1-\alpha}=\rho+\delta
\]
\end{prop}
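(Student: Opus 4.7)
The plan is to adapt the upper-bound logic of Proposition \ref{prop:bounds} to the Cobb--Douglas extension, establishing a uniform ceiling on the capital stock, then combine this ceiling with $\lim_{t\to\infty}\Phi(I_t)=1$ to force entry into region 2 in finite time, and finally to characterize the resulting long-run steady state via a standard Ramsey argument.

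First I would define $K^{+}_t$ implicitly by $F_K(K^{+}_t,L,M;I_t)=\rho+\delta$ and show that it is bounded uniformly in $t$. The key observation is that because the aggregator assigns Cobb--Douglas weight $\alpha<1$ to the task composite and $1-\alpha$ to the fixed factor $M$, the marginal product of capital in region 1 is decreasing in $K$ and attains its supremum over $\Phi\in[0,1]$ in the limit $\Phi\to 1$, where the composite collapses to $K$ and $F_K\to\alpha A K^{\alpha-1}M^{1-\alpha}$. Setting this limit equal to $\rho+\delta$ yields a finite ceiling $\bar{K}=[\alpha A M^{1-\alpha}/(\rho+\delta)]^{1/(1-\alpha)}$, and monotonicity of $F_K$ in $\Phi$ for fixed $K$ (the productivity effect) implies $K^{+}_t\le\bar{K}$ and hence $K_t\le\bar{K}$ for all $t$, by the same Euler-equation argument used in Proposition \ref{prop:bounds}. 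Given this bound, Lemma \ref{lem:scarcity-of-labor-land} places the economy in region 2 whenever $\Phi(I_t)/(1-\Phi(I_t))\ge\bar{K}/L$; since $\Phi(I_t)\to 1$, the left-hand side diverges and this threshold is crossed at some finite time $T$, proving the finite-time claim.

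For the long-run characterization I would then invoke the standard Ramsey argument with a decreasing-returns technology. For $t\ge T$ the production function reduces to $Y=A(K+L)^\alpha M^{1-\alpha}$ with $F_K=F_L=\alpha A(K+L)^{\alpha-1}M^{1-\alpha}$, so Lemma \ref{lem:scarcity-of-labor-land} gives $w=R$ immediately, and the Euler equation (\ref{eq:Euler}) together with transversality drives $K_t\to K^{\ast}$ defined by $\alpha A(K^{\ast}+L)^{\alpha-1}M^{1-\alpha}=\rho+\delta$, yielding $w=R=\rho+\delta$ at the steady state. The labor share then follows from two observations: the Cobb--Douglas aggregator assigns share $\alpha$ of aggregate payments to the composite and $1-\alpha$ to $M$, while within the composite perfect substitutability at a common return $w=R$ splits payments in proportion to $L$ versus $K^{\ast}$, for a total labor share of $\alpha L/(K^{\ast}+L)$. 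The main obstacle I anticipate is the clean verification of the uniform bound $\bar{K}$ along the transition, in particular the monotonicity of $K^{+}_t$ in $\Phi$ under the full Cobb--Douglas--CES structure; without the fixed factor (i.e.\ $\alpha=1$) the analogous limit would degenerate into an $AK$-type region-2 economy with no upper bound on $K$, so the finite ceiling induced by $\alpha<1$ is precisely what distinguishes this proposition from the unbounded-growth outcomes available in the baseline model.
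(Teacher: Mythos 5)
Your proposal is correct and follows essentially the same route as the paper's proof: bound capital accumulation by the Euler-equation ceiling $F_K=\rho+\delta$, which the fixed factor $M$ renders uniformly finite, so that $\Phi(I_t)\to1$ forces the $K/L\le\kappa(I)$ threshold of Lemma \ref{lem:scarcity-of-labor-land} to be crossed in finite time, after which standard Ramsey convergence and the Cobb--Douglas/perfect-substitutes accounting give the steady state and the labor share $\alpha L/(K^{\ast}+L)$. You merely spell out more explicitly the uniform ceiling $\bar{K}$ and the monotonicity of $F_K$ in $\Phi$ that the paper's terser argument takes for granted.
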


\begin{proof}
If the economy is in region 2 after some finite time $T$, it will
converge towards a steady state in which $(K^{\ast}+L)$ are pinned
down by the Euler equation (\ref{eq:Euler}), resulting in the expression
above. We observe that $K^{\ast}$ is the maximum capital level that
an optimizing agent will accumulate in this economy since $F_{K}<\rho+\delta$
for any region 1 allocation, as can be seen from the economy's factor
price frontier. This implies that the economy will enter region 2
no later than when the automation threshold reaches the scarcity of
labor threshold $\hat{I}$ s.t.~$\Phi(\hat{I})=K^{\ast}/L/(1+K^{\ast}/L)$,
as defined in Lemma 1.

\end{proof}
Intuitively, Proposition \ref{prop:race-betwee-I-and-K-M} tells us
that if there is a fixed factor then automation eventually outpaces
capital accumulation regardless of the distribution of tasks. This
contrasts with Proposition \ref{prop:race-between-I-and-K}, which
shows that wages may grow indefinitely if a sufficient amount of tasks
is always left to labor. 

\begin{figure}

\centering{}\includegraphics[width=0.56\textwidth]{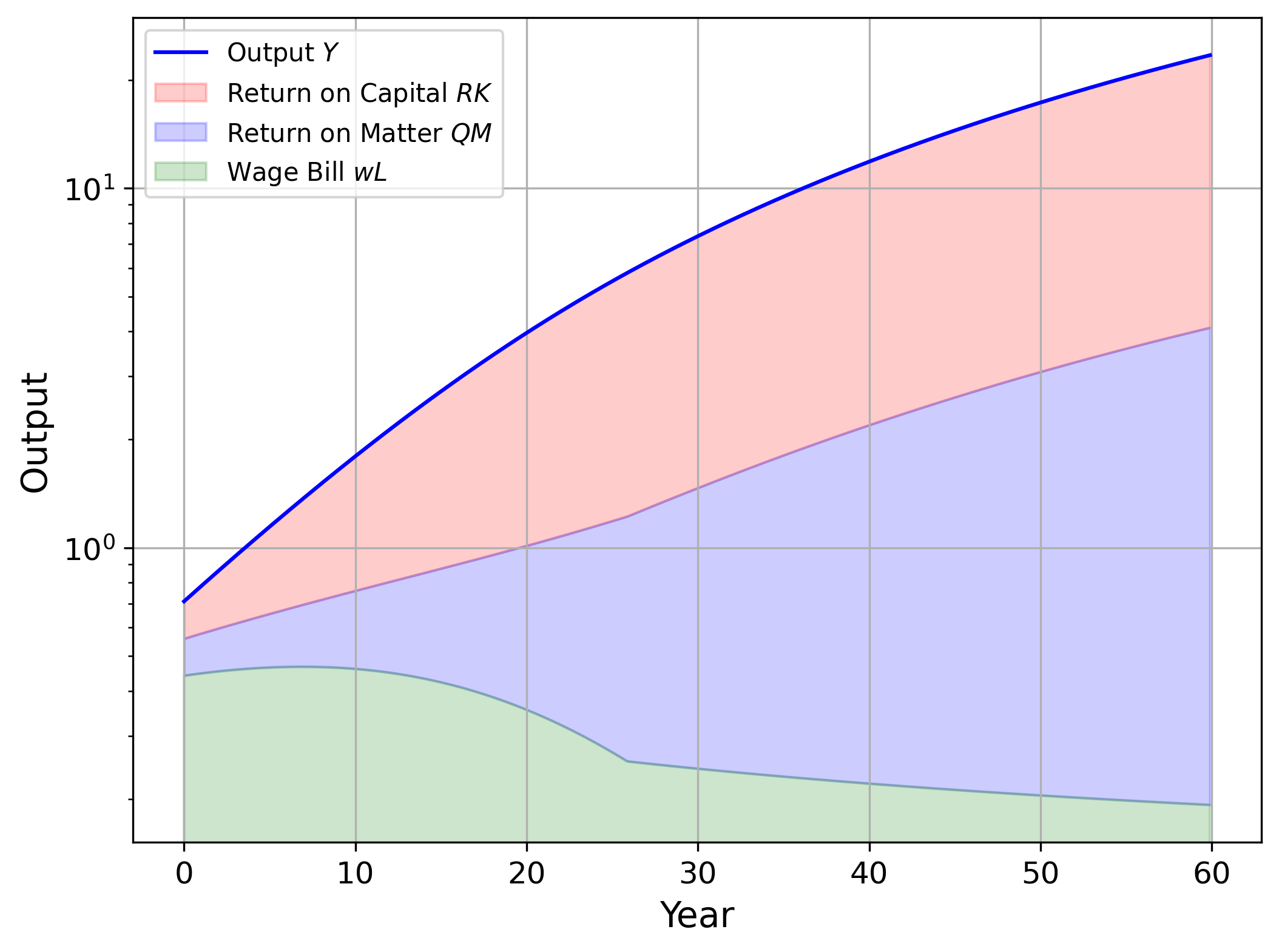}\caption{\label{fig:simulation_fixed}Factor shares with fixed factor $M$
in traditional scenario}
\end{figure}
Figure \ref{fig:simulation_fixed} illustrates the implications under
the assumption that the Cobb-Douglas for $M$ is $(1-\alpha)=.10$
in the ``traditional scenario,'' in which a constant fraction of
tasks is automated every period. As can be seen, wages peak after
about 10 years, and the economy enters region 2 after 25 years, slowly
converging to the steady-state level of capital $K^{\ast}$. In stark
contrast to our simulation results in Section 3, this illustrates
that even though there is an infinite tail of unautomated tasks, the
presence of a fixed factor bottlenecks capital accumulation and implies
that labor loses its scarcity status in finite time.

\subsection{Automating Technological Progress}

Our analysis so far has focused on automation as the only form of
technological advancement and has taken as given the technology parameter
$A$, which is considered as the main driver of productivity gains
in the neoclassical growth model. This has allowed us to derive a
number of powerful results on the effects of automation in goods production
on output and wages. However, there are widespread predictions that
advances in AI not only will make output production more efficient
but also will speed up technological progress \citep{aghional19,agrawalal2023scientific,davidson23}.

At the most basic level, the production of R\&D that drives technological
progress consists of atomistic computational tasks---like any other
production process described earlier in the paper. For example, \citet{agrawalal2023scientific}
suggest that scientific hypothesis generation can be viewed as the
making of predictions over a vast combinatorial space. We denote the
complexity distribution of tasks involved in R\&D by the distribution
function ``Gamma'' $\Gamma(i)$, which may differ from the complexity
distribution of tasks $\Phi(i)$ involved in producing output---perhaps
R\&D involves on average more complex computational tasks. W.l.o.g.,
we assume that our ability to automate both R\&D and production are
captured by the same automation index $I$.

Building on our earlier task production function and on the endogenous
growth setup of \citet{jones1995}, we assume that advances in the
technology parameter $A$ are driven by an ideas production function
combining atomistic computational tasks $\{x(i)\}$ that involve computational
complexity as reflected in the distribution function $\Gamma(i)$,
\[
\log\dot{A}=\log A^{\theta}+\int\log x(i)d\Gamma(i),
\]
where the parameter $\theta$ captures the potential for knowledge
spillovers or for decreasing returns to knowledge accumulation. Similarly
with the production of final goods, we assume that automated tasks
can be performed by either capital or labor whereas unautomated tasks
require labor,
\[
x\left(i\right)=\begin{cases}
k_{A}\left(i\right)+\ell_{A}\left(i\right) & \text{for }i<I,\\
\ell_{A}\left(i\right) & \text{for }i\geq I.
\end{cases}
\]

To keep our analysis tractable, we assume that there is an exogenous
supply of knowledge workers $L_{A}=1$ that can only work in ideas
production in addition to the unit supply of workers $L_{Y}=1$ who
are solely engaged in final output production. Moreover, we assume
unitary elasticity of substitution between tasks in the output production
function so $\sigma=1$. Analogs of lemma 1 hold for both production
functions. As long as labor is scarce (region 1), we observe that
the production functions of final output and knowledge satisfy $F(K,L)\simeq AK_{Y}^{\Phi(I)}L_{Y}^{1-\Phi(I)}$
and $\dot{A}\simeq A^{\theta}K_{A}^{\Gamma(I)}L_{A}^{1-\Gamma(I)}$
where $K_{Y}$ and $K_{A}$ are the aggregate amounts of capital employed
in final output or knowledge production.

Following the hypothesis of \citet{aghional19} and the proof of \citet{trammellk23},
it can then be shown that once automation in the two production functions
has proceeded sufficiently, growth in the described economy will experience
what \citet{aghional19} term a ``type II singularity.'' The intuition
is that a rapidly growing capital stock generates an explosion in
R\&D output and technological progress that feeds on itself, resulting
infinite output in finite time. The following proposition states this
result formally under the assumptions of a constant savings rate $s\in(0,1)$,
a constant allocation of capital across final output and ideas production,
and no depreciation for tractability.
\begin{prop}
The economy enters a path of super-exponential growth in technology
$A$ and output $Y$ that diverges to infinity in finite time once
the automation index $I$ reaches the level $I^{\heartsuit}$such
that 
\[
\frac{\Gamma\left(I^{\heartsuit}\right)}{\left(1-\Phi\left(I^{\heartsuit}\right)\right)\left(1-\theta\right)}>1
\]
The marginal product of labor in the production of final output diverges
to infinity alongside output.
\end{prop}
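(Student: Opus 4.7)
The plan is to reduce the two-sector economy in region 1 to a planar dynamical system in the growth rates $g_K = \dot K/K$ and $g_A = \dot A/A$, show that the threshold condition rules out any balanced growth path, and then upgrade the resulting accelerating dynamics to finite-time divergence via a Riccati comparison, in the spirit of the Aghion--Jones ``type II singularity'' recovered in \citet{trammellk23}. Because $\Phi$ and $\Gamma$ are non-decreasing in $I$ and $I$ continues to grow, freezing them at their values at $I^{\heartsuit}$ yields a lower bound on the true dynamics, which is enough for a blow-up conclusion; I work with this lower-bound system throughout.

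Let $\alpha_Y$ and $\alpha_A = 1-\alpha_Y$ denote the constant shares of capital allocated to goods and ideas production. With constant savings rate $s$, no depreciation, and $L_Y = L_A = 1$, the Cobb--Douglas approximations stated in the text give
\begin{equation*}
\dot K = s\, A\, (\alpha_Y K)^{\Phi}, \qquad \dot A = A^{\theta}\, (\alpha_A K)^{\Gamma}.
\end{equation*}
Taking logarithms and differentiating in time yields the planar system
\begin{equation*}
\frac{\dot g_K}{g_K} = g_A - (1-\Phi)\, g_K, \qquad \frac{\dot g_A}{g_A} = \Gamma\, g_K - (1-\theta)\, g_A.
\end{equation*}
A balanced growth path would require the ratio $z := g_A/g_K$ to satisfy $z = 1-\Phi$ (from the first equation) and $z = \Gamma/(1-\theta)$ (from the second) simultaneously. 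The threshold condition is precisely the statement that $\Gamma/(1-\theta) > 1-\Phi$, so the two candidates are distinct and no balanced growth path exists.

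To characterize the off-balance dynamics, I compute $\dot z/z = (\Gamma + 1 - \Phi)\, g_K - (2-\theta)\, g_A$ and check algebraically that its unique zero for $z$ is $z^{\ast} = (\Gamma + 1 - \Phi)/(2-\theta)$, which lies strictly in the interval $(1-\Phi,\ \Gamma/(1-\theta))$ under the threshold condition. Throughout this interval both $\dot g_K > 0$ and $\dot g_A > 0$, so the growth rates accelerate while $z$ flows toward $z^{\ast}$. Once $z$ has entered a neighborhood of $z^{\ast}$, the first equation of the system yields $\dot g_K \geq c\, g_K^{2}$ for a constant $c > 0$, a Riccati inequality whose solutions satisfy $g_K(t) \to \infty$ at some finite time $t^{\ast} \leq t_0 + 1/(c\, g_K(t_0))$. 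Since $\dot K / K = g_K$ diverges in finite time, so does $K$; the ideas equation then forces $A \to \infty$ within the same finite horizon, hence $Y = A K^{\Phi} \to \infty$, and the Cobb--Douglas marginal product of labor $w = (1-\Phi)\, Y/L_Y$ inherits the divergence.

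The main technical obstacle is converting ``$z \to z^{\ast}$ asymptotically'' into a uniform Riccati lower bound on $g_K$. The cleanest route is a short lemma showing that $z$ enters and remains in any prescribed neighborhood of $z^{\ast}$ after some finite time at which $g_K$ is still finite; beyond that time the Riccati comparison takes over, explosion in finite additional time is immediate, and the time-varying further increases in $\Phi$ and $\Gamma$ only sharpen the divergence.
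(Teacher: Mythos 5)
Your argument is correct, and it is genuinely more self-contained than the paper's. The paper's proof consists of the same first step you take---freezing $\Phi$ and $\Gamma$ at their values at $I^{\heartsuit}$ and treating the resulting fixed-exponent Cobb--Douglas system as a lower bound---but then simply cites the proof in Trammell and Korinek (Appendix B.2) for the finite-time explosion, and handles wages via the observation that $w_{t}\geq A_{t}$ in either region. You instead prove the explosion from scratch: the reduction to the planar system in $(g_{K},g_{A})$, the identification of the two incompatible candidate ratios $1-\Phi$ and $\Gamma/(1-\theta)$, the convex-combination fact that $z^{\ast}=(\Gamma+1-\Phi)/(2-\theta)$ lies strictly between them when the threshold condition holds (and $\theta<1$, which the condition implicitly requires), and the Riccati comparison $\dot g_{K}\geq c\,g_{K}^{2}$ once $z$ is trapped near $z^{\ast}$ all check out; the integral argument $\int g_{K}\,dt=\log K\to\infty$ does force $z$ into the trapping interval in finite time as your ``short lemma'' requires, and the blow-up of $g_{K}$ does propagate to $K$, $A$, $Y$ and the marginal product of labor. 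Two minor remarks. First, your wage conclusion uses the region-1 expression $w=(1-\Phi)Y/L_{Y}$; since $K\to\infty$ with $L_{Y}$ fixed the economy indeed stays in region 1, but the paper's route via $w_{t}\geq A_{t}$ is slightly more robust because it covers region 2 as well. Second, the claim that freezing $\Phi$ and $\Gamma$ yields a lower bound on output requires $K_{Y}>L_{Y}$ (since $K^{\Phi}L^{1-\Phi}=L(K/L)^{\Phi}$ is increasing in $\Phi$ only when $K>L$); the paper asserts the same lower bound without comment, and the gap is harmless because it holds once capital has accumulated sufficiently, but it is worth flagging in both treatments. On balance, your proof supplies the analytic content that the paper delegates to an external reference, at the cost of some length; the paper's version is shorter but not verifiable without consulting the cited appendix.
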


\begin{proof}
Observe that once the described threshold has been passed, the production
functions $F(K,L)\simeq AK_{Y}^{\Phi(I^{\heartsuit})}L_{Y}^{1-\Phi(I^{\heartsuit})}$
and $\dot{A}\simeq A^{\theta}K_{A}^{\Gamma(I^{\heartsuit})}L_{A}^{1-\Gamma(I^{\heartsuit})}$
for constant fractions of capital $K_{Y}=cK$ and $K_{A}=(1-c)K$
are lower bounds for the actual production functions using the (still
increasing) automation level $\Phi(I)$ and $\Gamma(I)$. The result
is then a direct application of the proof in \citet{trammellk23}
(Appendix B.2). Note that $w_{t}\geq A_{t}$ no matter if output production
is in Region 1 or Region 2 as defined in Lemma 1. Therefore the marginal
product of labor in the production of final output also diverges to
infinity.
\end{proof}
The condition in the proposition depends on three parameters -- sufficient
automation in the production of ideas $\Gamma(\cdot)$, sufficient
automation in the production of final output $\Phi(\cdot)$, and sufficient
returns to the accumulation of knowledge. Remarkably, as long as the
production of final output has been sufficiently automated (sufficiently
high $\Phi$), the remaining two parameters can take on any finite
levels. This highlights that sufficient automation in the production
of final output and thus capital accumulation will always lead to
an explosion in growth.

\begin{figure}
\centering{}\subfloat{\includegraphics[width=0.56\columnwidth]{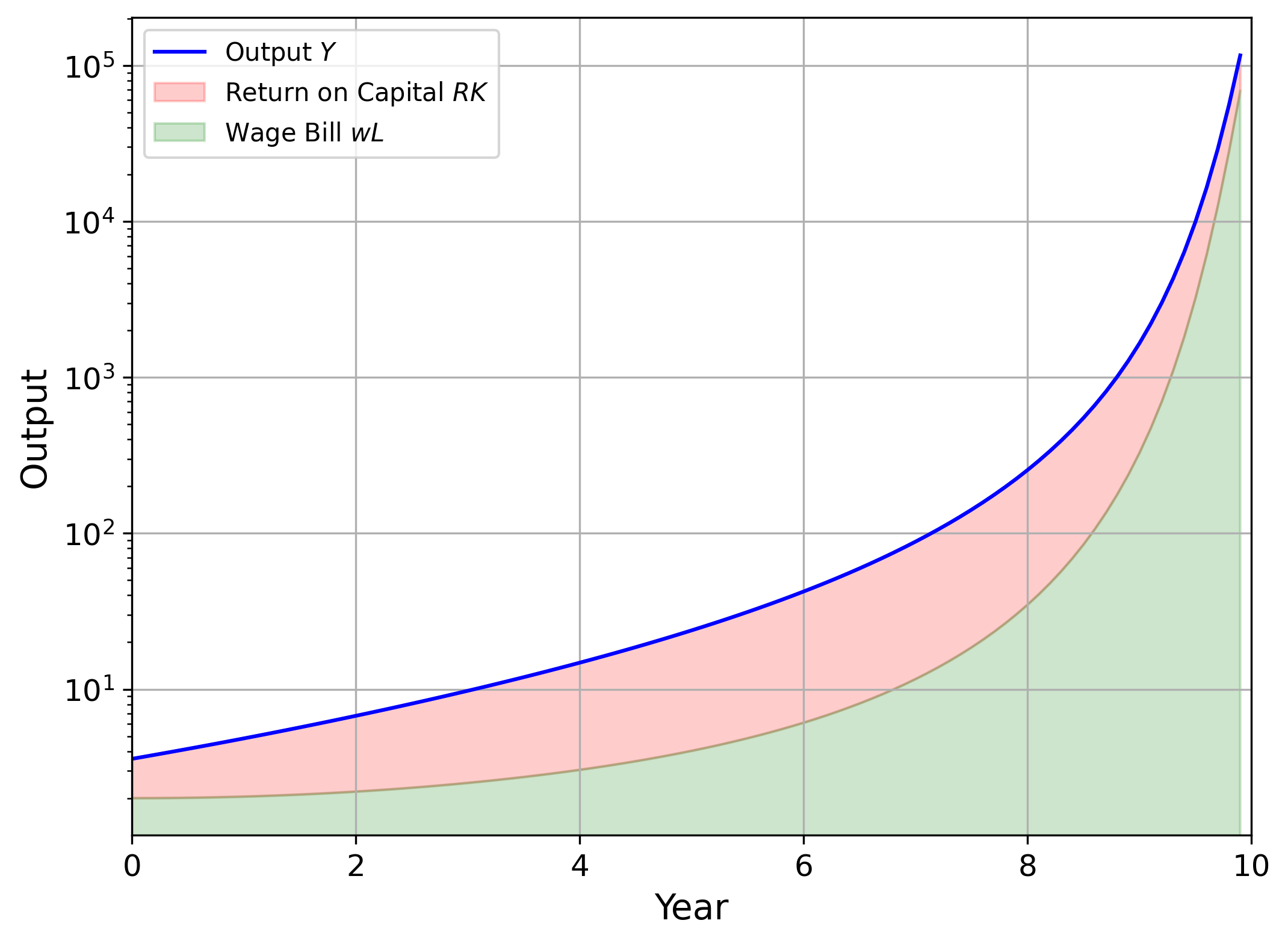}}\caption{\label{fig:tech-progress}Output and wage growth under technological
progress}
\end{figure}

Figure \ref{fig:tech-progress} illustrates the path of wages under
a CES production function for output and optimal savings, numerically
illustrating the explosive path of output growth and type-II singularity
that we identified in the proposition. The convexity of the curve
on a log-scale indicates that the growth rate of wages is ever-increasing
due to the acceleration of technological progress. (Note that the
log scale hides that the labor share of output is declining.) In the
figure, we have cut off the simulation at $t=10$. The singularity
occurs shortly thereafter. \medskip{}

In summary, even if automation induces wages to collapse to the returns
to capital at $A$, rapid technological progress from the automation
of R\&D allows workers to benefit from the advancement of AI once
sufficient automation has taken place. 

More generally, the force described in this subsection is plausible,
and sufficient progress in AI will likely indeed lead to rapid technological
advances and increases in living standards. At the same time, it is
also likely that both the production of output and of ideas will eventually
be bottlenecked by fixed factors, as we emphasized in Section \ref{ssec:fixedfactors}.
A model that comprehensively incorporates both effects is beyond the
scope of this paper.

\subsection{Nostalgic Jobs or Limits on Automation}

Our baseline model assumed that the automation of work was driven
solely by technological factors, occurring as soon as the compute
requirements of performing specific tasks were reached. However, even
if it is technologically feasible to perform certain tasks, our society
may decide that it is preferably for those tasks to remain exclusively
human. For example, \citet{korinekj23} observe that jobs such as
priests, judges, or lawmakers may remain exclusively human long after
the time when they can be performed at equal or superior levels by
machines, labelling such jobs ``nostalgic jobs.'' 

For the purposes of our analysis, we assume that there is a separate
distribution function $\Psi(I)$ that captures how far the automation
index $I$ must advance for society to choose to automate task $I$
-- in addition to the distribution $\Phi\left(I\right)$ capturing
the technological possibility of automation. The inequality $\Psi\left(I\right)\leq\text{\ensuremath{\Phi(I)}}$
reflects that society can only choose to automate tasks that are feasible
to automate. The inequality is strict if there are tasks that could
be automated from a technical perspective but aren't for societal
reasons. If $\lim_{I\rightarrow\infty}\Psi(I)<\Phi(I)\leq1$, then
this captures that there are tasks that humans choose to never automate
even though they could be. 

The described setup can also capture situations in which tasks are
delegated to machines with a delay, i.e., for higher levels of the
automation index $I$ than what is technologically feasible. \citet{korinekj23}
describe two reasons for why this may occur: First, as the capabilities
of machines to perform certain tasks become better and better than
human abilities, it may become increasingly untenable for the tasks
to be left to humans. For example, if AI systems demonstrably become
much fairer judges with fewer biases and noise than human judges,
it may become untenable to leave many judicial deicisons to error-prone
humans. Second, with sufficient advances in robotics, it may become
more and more difficult to distinguish humans and AI-powered robots
performing human services. They observe that a robot priest with greater
emotional intelligence than humans and a more comprehensive theory
of human minds than a human priest may be able to perform the tasks
typically performed by human priests quite perfectly, or intentionally
somewhat imperfectly so as to not give away that it is a robot. Both
of these categories require that the performance of AI systems is
sufficiently above human levels, corresponding to a sufficiently high
level of the automation index $I$.

\paragraph{Maximizing Wage Growth}

Consider the problem of a government with the objective to maximize
wage growth by imposing limits on automation and choosing an optimal
path $\Psi(I)\leq\Psi(I)$. The following result characterizes the
optimal $\Psi(I)$ among all Pareto distributions---given exponential
advances in the automation index $I$, this amounts to the government
choosing an optimal constant rate of automation per time period.
\begin{prop}[Maximizing Wage Growth]
Suppose $\Psi$ is a Pareto distribution defined as $\Psi(I_{t})=1-I_{0}^{-\lambda}e^{-\lambda gt}$
where $I_{0}$ is the initial automation index and $\lambda g$ is
the rate of task automation. Then the long-run growth rate of wages
is maximized for $\lambda g=(1-\sigma)\cdot\frac{A-\rho-\delta}{\eta}$
, assuming that $\Psi(I)\leq\Psi(I)\forall I$ for this distribution.
As a result, wages grow at rate $\frac{A-\rho-\delta}{\eta}$. 
\end{prop}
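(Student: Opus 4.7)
The plan is to reduce this optimization to a direct application of Proposition \ref{prop:race-between-I-and-K}, because the government's choice of an attainable Pareto distribution $\Psi$ is exactly equivalent to choosing the rate of task automation $\lambda g$ in the earlier proposition. Feasibility ($\Psi \leq \Phi$) guarantees that the government can actually implement the desired $\lambda g$; this is the only role of the technological distribution $\Phi$ in the argument.

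First I would substitute $\Psi$ for $\Phi$ into the dynamic system and invoke Proposition \ref{prop:race-between-I-and-K} to obtain the long-run wage growth rate $g_w$ as a piecewise function of $x := \lambda g$:
\begin{equation*}
g_w(x) \;=\; \begin{cases} \dfrac{x}{1-\sigma} & \text{if } 0 \leq x \leq (1-\sigma)\cdot\dfrac{A-\rho-\delta}{\eta},\\[1em] \dfrac{1}{\sigma}\bigl(\tfrac{A-\rho-\delta}{\eta} - x\bigr) & \text{if } (1-\sigma)\cdot\dfrac{A-\rho-\delta}{\eta} < x \leq \dfrac{A-\rho-\delta}{\eta},\\[1em] 0 & \text{if } x > \dfrac{A-\rho-\delta}{\eta}. \end{cases}
\end{equation*}

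Next I would simply observe that on the first branch $g_w$ is strictly increasing in $x$, while on the second branch it is strictly decreasing in $x$, and on the third branch it is zero. Hence the maximum is attained at the kink, namely $x^\star = (1-\sigma)\cdot(A-\rho-\delta)/\eta$. Plugging this value into either of the two relevant branches yields $g_w(x^\star) = (A-\rho-\delta)/\eta$, and continuity at the kink (both branches give the same value there) confirms the maximum. This pins down the Pareto parameter of the optimal $\Psi$.

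I do not expect a major obstacle here since Proposition \ref{prop:race-between-I-and-K} does essentially all the heavy lifting. The only mild subtlety worth flagging is the feasibility hypothesis $\Psi(I) \leq \Phi(I)$: it ensures that the optimal constant rate $x^\star$ is implementable by some nostalgic-jobs policy. If $\Phi$ itself corresponds to a higher underlying rate of task automation, the government achieves $x^\star$ by withholding automation of tasks beyond what the chosen $\Psi$ prescribes; if $\Phi$'s rate is lower, $x^\star$ is infeasible and the corner solution $\Psi=\Phi$ is optimal, consistent with the statement's conditional phrasing.
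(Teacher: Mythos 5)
Your proposal is correct and follows essentially the same route as the paper: both reduce the problem to the piecewise wage-growth function from Proposition \ref{prop:race-between-I-and-K}, note it is increasing up to the kink at $\lambda g=(1-\sigma)(A-\rho-\delta)/\eta$ and decreasing (then zero) beyond it, and read off the maximum value $(A-\rho-\delta)/\eta$ at the kink. Your explicit check of continuity at the kink and your remark on the feasibility constraint $\Psi\leq\Phi$ are slightly more careful than the paper's own sketch but do not change the argument.
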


\begin{proof}
The proof follows from Proposition \ref{prop:race-between-I-and-K}.
The rate of automation is lowest in case 3. And the wage growth rate
is increasing in $\lambda g$. Thus, the wage growth rate increases
until $\lambda g=(1-\sigma)\cdot\frac{A-\rho-\delta}{\eta}$. Once
$\lambda g$ surpasses $(1-\sigma)\cdot\frac{A-\rho-\delta}{\eta}$,
the growth rate of wages decreases in $\lambda g$ until $\lambda g=\frac{A-\rho-\delta}{\eta}$
at which the growth rate equals zero. Therefore, the maxmum growth
rate of wages is $\frac{A-\rho-\delta}{\eta}$ at $\lambda g=(1-\sigma)\cdot\frac{A-\rho-\delta}{\eta}$.
Figure \vref{fig:max_wage} provides a graphical illustration of this
finding---the peak of the wage growth rate as a function of $\lambda g$
is $\frac{A-\rho-\delta}{\eta}$. 
\end{proof}
Figure \ref{fig:limits_Y_w} shows what happens if we slow down progress
in the ``baseline AGI scenario'' from Section 3 so that wages growth
is maximized. Up until period 14, a wage-maximizing planner is constrained
by the natural pace of automation and sets $\Psi(I)=\Phi(I)$ over
that stretch. After that point, the baseline AGI scenario implies
rapid declines in the labor share, but the planner sets $\Psi(I)<\Phi(I)$
to slow down effective automation. The left panel of the figure shows
the paths of output and wages, and the right panel depicts the two
variables in relative terms for the two scenarios. Up until period
14, the paths in the two scenarios roughly coincide (with a minor
gap opening since the AGI scenario triggers rapid capital accumulation
in advance of the economy achieving full automation). Thereafter,
the wage-maximizing planner obtains a path of exponentially growing
wages, as predicted by the proposition, whereas wages in the AGI scenario
collapse. Notably, the right-hand panel also illustrates the output
cost of foregoing the possibility of full automation. As can be seen,
the output cost of holding back automation is low at first, but eventually,
almost 100\% of the output potential of the economy is lost by holding
back automation.
\begin{figure}
\centering{}\includegraphics[width=0.48\columnwidth]{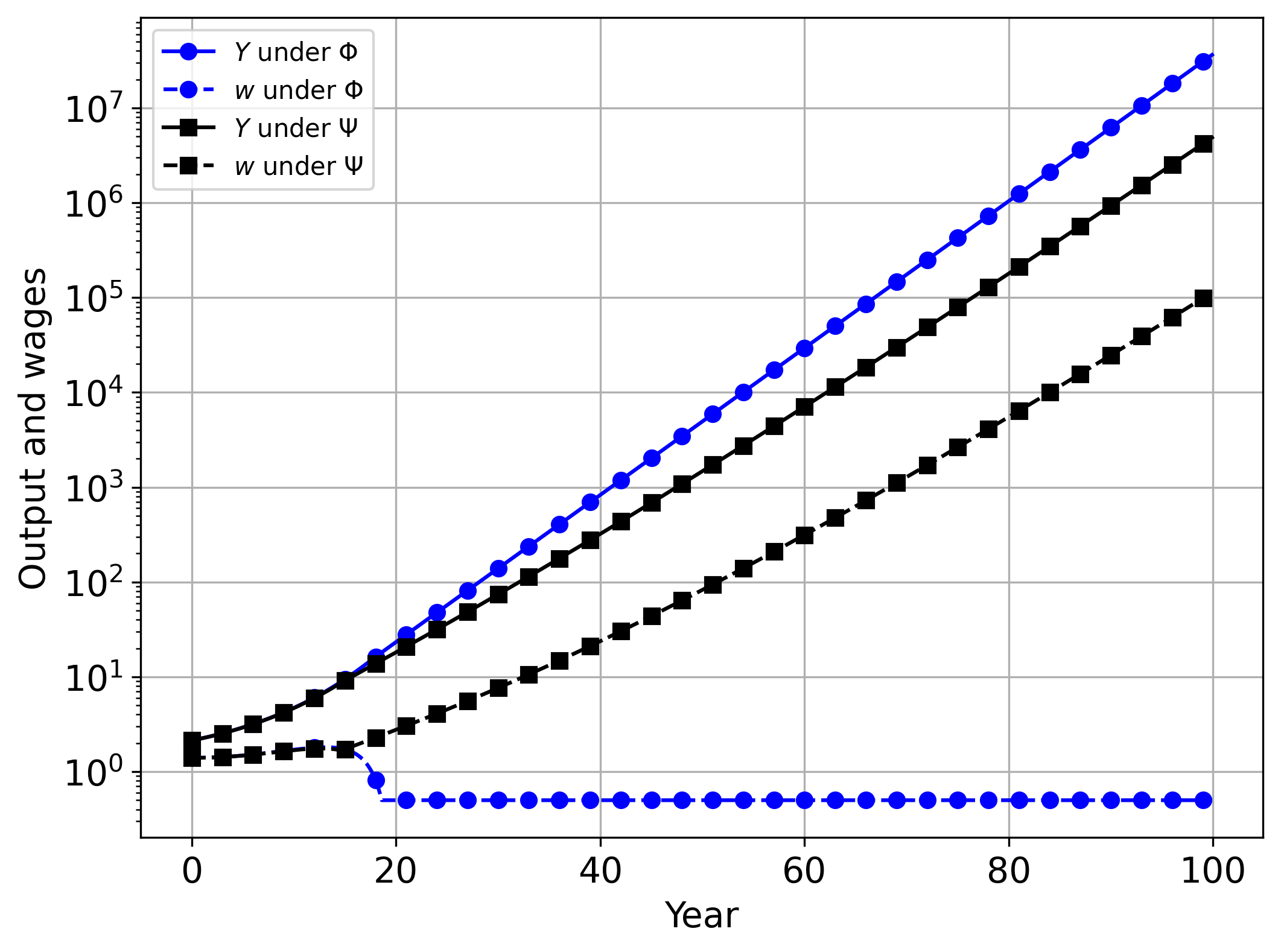}\includegraphics[width=0.48\columnwidth]{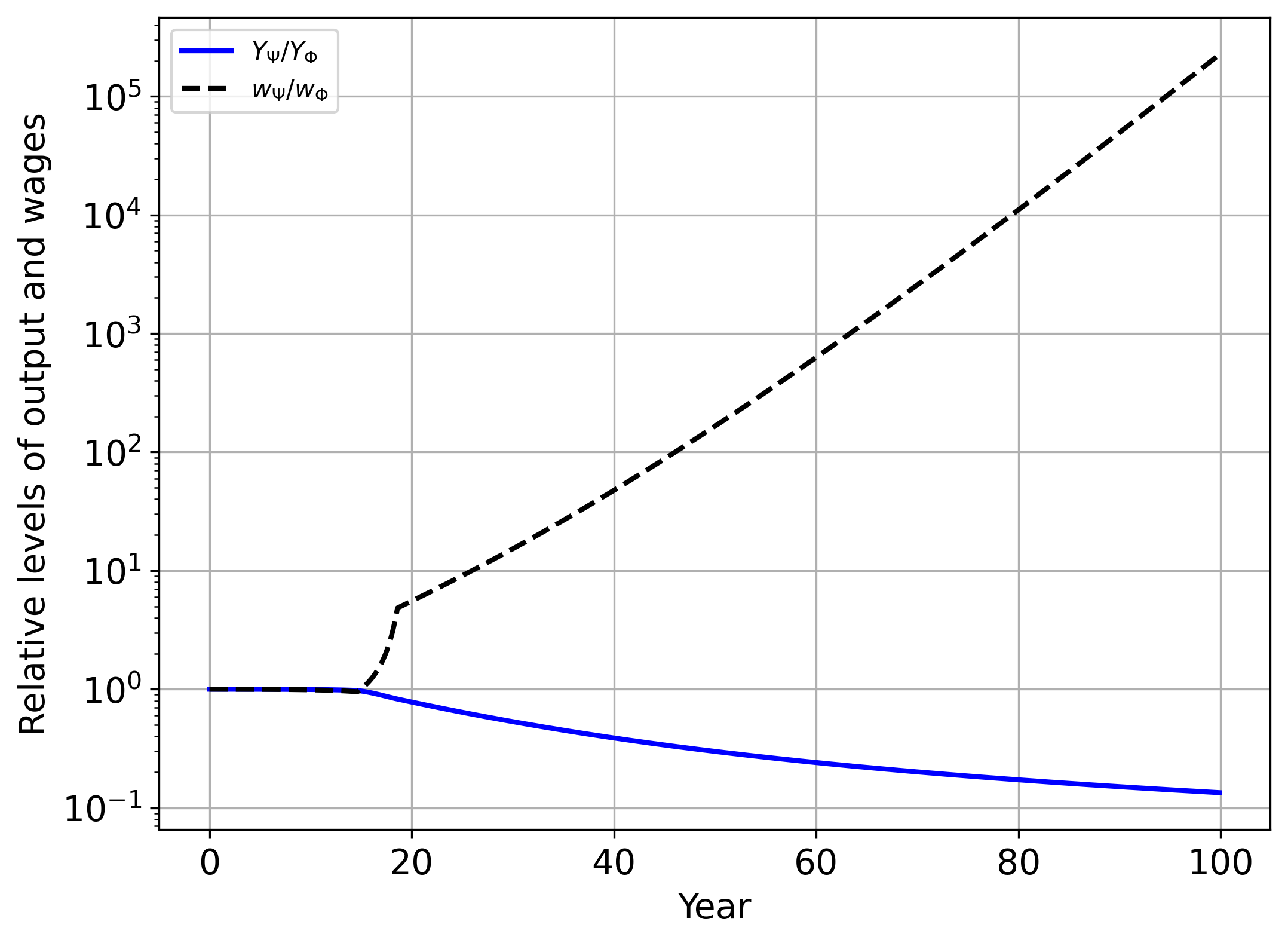}\caption{\label{fig:limits_Y_w}Comparison of output and wages under $\Phi$
and $\Psi$}
\end{figure}

Our finding illustrates that slowing down automation may be a powerful
tool to increase wages, albeit it comes at the cost of reducing output
growth. The described policy is feasible under both of the AGI scenarios
simulated in the previous section and always results in exponentially
growing wages instead of the collapse wihin a matter of years that
would otherwise occur when AGI automates human tasks too quickly.

\subsection{Heterogeneous Worker Skills}

When labor is heterogeneous, individuals are hit by the effects of
automation at different times, depending on the extent to which their
skills are automated. In practice, workers differ along many different
dimensions, and each worker's labor may be complemented or substituted
for in different ways by technological advances. One of the classical
ways of accounting for heterogeneity in the labor market, going back
to \citet{katzm1992changes}, is to split workers into skilled and
unskilled based on a threshold level of educational attainment. An
additional distinction, introduced by \citet{autoral2003}, was to
categorize workers according to whether they hold cognitive or manual
jobs performing routine or non-routine activities. Under the described
paradigm, we could capture the distribution of tasks in compute space
separately for each of the resulting buckets (e.g., routine cognitive
workers), and analyze how advances in computing capabilities will
affect that type of workers. Recent advances in AI have raised the
possibility that many cognitive tasks, including non-routing tasks,
may be automated relatively soon \citep[e.g.][]{korinek23llm_wp}.
However, ongoing advances in robotics make it likely that non-routine
manual jobs will be similarly affected to cognitive tasks by the recent
wave of progress in foundation models \citep{ahn2022robotic}.

For our purposes here, we found it useful to consider labor that differs
in uni-dimensional but continuous manner. We assume that workers differ
in an exogenous parameter that we label skill $J$, which reflects
the maximum level of task complexity that the worker can perform.
Workers' skill levels are described by the distribution function $\Upsilon(J)$.
For analytical simplicity, we assume that $\Phi(I)\geq\Upsilon(I)$.\footnote{This assumption implies that for any level of automation $I$, there
are sufficiently many skilled workers at all unautomated complexity
levels left so that we can treat unautomated workers as perfect substitutes
for each other.}

For a level of the automation index $I$, a fraction $\Upsilon(I)$
of workers are perfectly substitutable by machines and earn wage $w_{j}=A$.
A fraction $1-\Upsilon(I)$ is not substitutable, but given that the
remaining workers are sufficiently skilled, they are all effective
substitutes for each other and earn wage $w_{j}=F_{L}\left(K+\Upsilon(I),1-\Upsilon(I)\right)$.
In contrast to our baseline model, this captures the concern that
automation may make workers on the lower rungs of the skill distribution
redundant, whereas workers who are able to perform at higher levels
of skill may benefit from automation. 

In the long run, assuming less than full automation ($\Phi(I)<1$
for any finite $I$), the share of workers who are gainfully employed
will decline over time and will asymptote to $1-\lim_{J\rightarrow\infty}\Upsilon(J)$,
i.e., only workers who can perform unautomated tasks with arbitrary
computational complexity will earn higher returns than capital. If
$\Upsilon(J)=1$ for finite $J$, then the role of all human labor
will lose its scarcity value in the same manner as in the AGI scenarios
in our baseline model. Conversely, if $\Upsilon(J)$ asymptotes to
1, then there may be ever-growing inequality among workers: an ever-declining
fraction of workers at the top may see incomes rise without bounds,
whereas a fraction of the population that asymptotes towards one will
see wages collapse to the level that equates the return on capital
$A$.

\paragraph{Heterogeneity in both skill and productivity}

The described setup could easily be extended to include heterogeneity
in individual worker productivity in addition to heterogeneity in
skill. Assume that workers not only have different skill levels $J_{j}$
but are also endowed with different efficiency units of labor $L_{j}$
per time period. This may capture, for example, that there may be
two economists who can both write papers up to complexity $J$, but
one of them is twice as fast at it than the other. This could explain
the empirial observation that workers in the same occupation sometimes
earn significantly different wages.

\paragraph{Complementary human capital}

An alternative lens that may be relevant in the current era of cognitive
automation is that workers possess different levels of human capital
that is affected by automation. To keep our discussion simple, assume
again that each worker $j$ is characterized by a skill level $J_{j}$
as well as an exogenous amount of human capital $H_{j}>0$, which
enables them to supply $L_{j}=H_{j}$ efficiency units of labor per
time period. As the automation index $I$ surpasses a given worker
with skill level $J_{j}$, the human capital that they possessed is
fully devalued. The loss is greater and more painful for workers with
more human capital.

\subsection{Compute as Specific Capital}

An important feature of the ongoing AI take-off is the scarcity of
compute. In our baseline model, we followed the standard neoclassical
practice of modeling capital as uniform, capable of being deployed
in the production of any task. As illustrated in Figure \ref{fig:water_levels},
automation of new tasks then implies that the existing capital stock
can be smoothly allocated to a larger number of tasks, unlocking immediate
productivity gains. 

In practice, however, many types of capital are specific to the task
for which they were created and difficult or impossible to reallocate,
corresponding to what the literature has traditionally called putty-clay
capital.\footnote{The notion was first introduced by Leif Johansen (1959) -- once putty
has been turned into clay, it cannot be turned into another shape
-- and expanded by Solow (1962) and others.} In the current context, the most salient type of specific capital
on which AI systems rely is compute, which is in very limited supply,
slowing down the deployment of AI systems for new tasks. Another example
of specific capital is organizational capital, including the capital
derived from investments into developing new processes for deploying
new technologies in firms.

We expand our framework by assuming that each unit of capital investment
is specific to a task $i$ and can only be invested once the task
is automated, i.e., once $I\geq i$. This leaves the task production
function (\ref{eq:y(i)}) unaffected but modifies the capital accumulation
constraint: instead of a single law of motion for capital (\ref{eq:Kdot}),
the consumer needs to separately keep track of each type of capital
$k\left(i\right)$ since capital that is deployed for one task cannot
be redeployed later. In an economy in which automation is proceeding
slowly and steadily, the consumer problem is unchanged as the resulting
constraints on capital redeployment are slack---every instant of
time, a density $\phi(I_{t})$ of new tasks is automated, and sufficient
capital for those tasks is instantaneously accumulated.  By contrast,
if the economy experiences a bout of progress that leads to a discrete
mass of tasks suddenly being amenable to automation, the accumulation
of the relevant specific capital may lag behind. The rapid rise of
LLMs at the time of writing may be an example of such a bout.

\begin{figure}
\centering{}\subfloat{\includegraphics[width=1\columnwidth]{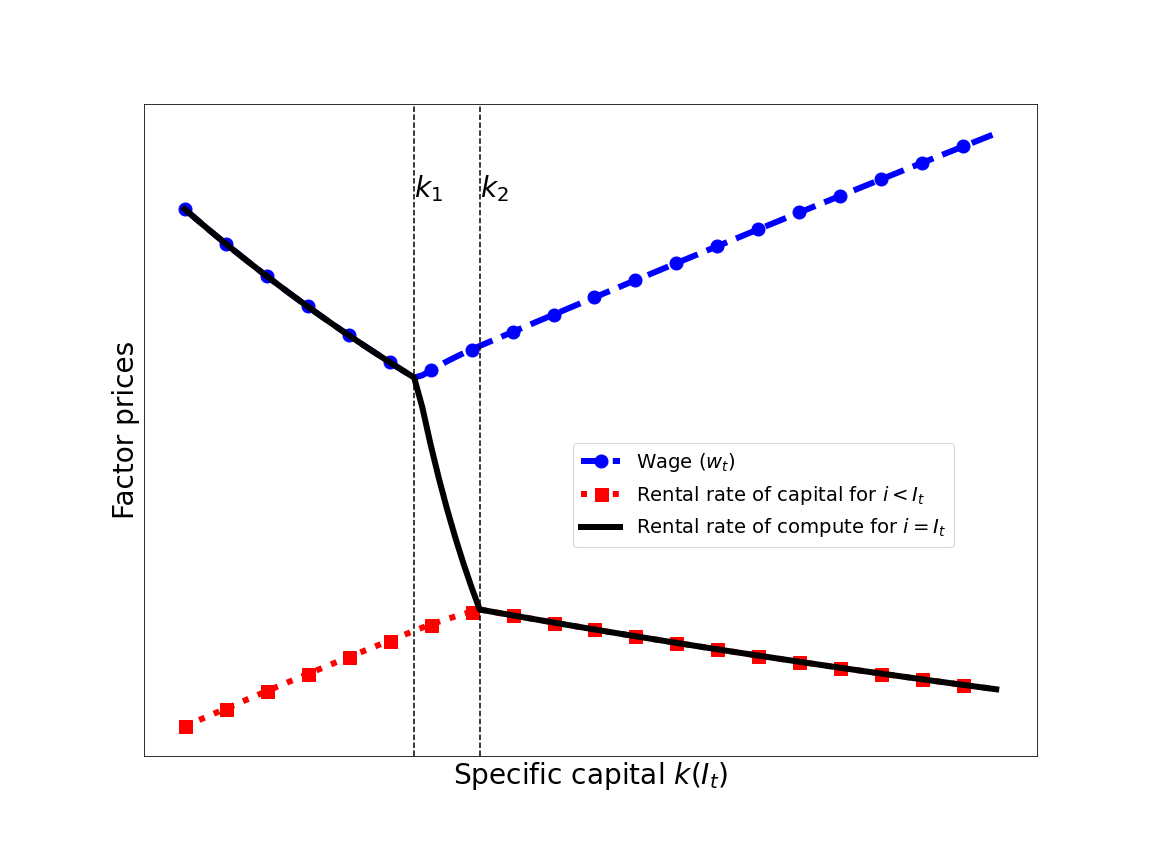}}\caption{\label{fig:specificK}Specific capital and factor prices}
\end{figure}

To illustrate this analytically, assume that a discrete mass of tasks
$\Delta_{t}=\Phi(I_{t})-\Phi(I_{t^{-}})>0$ is automated at time $t$,
and let us interpret the specific capital $k(I_{t})$ required for
these tasks as compute. At time $t$, no compute has been accumulated
yet, $k(I_{t})=0$, so all type-$I_{t}$ tasks are performed by humans
at wage $w$, even though they could technically be automated. Figure
\ref{fig:specificK} illustrates the factor returns as a function
of the accumulation of compute $k(I_{t})$ while holding the inputs
of labor and other capital constant: at first, $k(I_{t})=0$, and
the economy starts out at the left side of the figure where labor
and compute are, at the margin, perfect substitutes so the returns
on the two are equated. The rental rate on traditional capital is
comparatively low.

Over time, compute capital $k(I_{t})$ is accumulated and progressively
substitutes for labor. As long as compute remains below the first
threshold $k(I_{t})<k_{1}$, illustrated by the first vertical line
in the figure, labor and compute remain perfect substitutes at the
margin, but wages $w$ decline with the addition of more compute,
whereas the return on traditional capital rises. Within this region,
all capital investment goes into compute. Once sufficient compute
($k_{1}$) is accumulated so that all humans are replaced from type-$I_{t}$
tasks, all labor is allocated to the remaining unautomated tasks with
$i>I_{t}$, and the marginal product of compute decouples from wages.
All capital investment continues to be devoted to compute; wages $w_{t}$
and the returns to traditional capital rise whereas the return on
compute $k(I_{t})$ declines sharply until it reaches the marginal
product of all other types of capital. This is the middle region between
$k_{1}$ and $k_{2}$ where only the return on compute, captured by
the dotted curve, is decreasing. Once the second threshold is passed,
the marginal product of compute and other capital is equated. Any
additional capital investment is spread proportionately across all
types of specific capital $k(i),i\leq I$, and leads to a decline
in the return on capital. In summary, the race between automation
and capital accumulation leads to a non-monotonic response of wages,
depicted by the blue curve marked with dots, and the returns to traditional
capital, depicted by the red curve marked with squares. 

The following proposition characterizes the thresholds for the amount
of specific capital and summarizes the non-monotonic response of factor
prices to the accumulation of this specific capital analytically.
\begin{prop}[Specific capital and factor returns]
\label{prop:specificK}Suppose that the current amount of the specific
capital is given by $k(I_{t})$. There are threshold values $k_{1}$
and $k_{2}>k_{1}$ such that (i) if $k(I_{t})<k_{1}$ then the wage
decreases and the rental rate of the traditional capital increases
with $k(I_{t})$, (ii) if $k_{1}\leq k(I_{t})<k_{2}$ then both the
wage and the rental rate of traditional capital increase with $k(I_{t})$,
and (iii) if $k(I_{t})\geq k_{2}$ then specific capital $k(I_{t})$
is only accumulated alongside traditional capital, and the wage increases
with capital accumulation.
\end{prop}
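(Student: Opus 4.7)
The plan is to analyze the static allocation as a function of $k(I_t)$, holding fixed the automation index $I_t$, the labor endowment $L$, and the already-accumulated task-specific capitals $\{k(i)\}_{i<I_t}$, and to trace the comparative statics of $w$ and of the return on traditional capital. Because $\sigma<1$ makes tasks gross complements, the CES first-order conditions require that whenever a factor is deployed in several tasks, the task outputs $y(i)$ be equalized across those tasks. This collapses the allocation to a single ``water level'' $\ell^*$ on the tasks that draw on labor, together with the firm-optimality conditions $w=A^{(\sigma-1)/\sigma}Y^{1/\sigma}(\ell^*)^{-1/\sigma}$ and $R(i)=A^{(\sigma-1)/\sigma}Y^{1/\sigma}y(i)^{-1/\sigma}$ applied task-by-task, exactly as in the proof of Lemma~\ref{lem:scarcity-of-labor}.

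In region (i), some labor remains at the newly automated tasks, so labor and compute are perfect substitutes there with $\ell^*=\ell(I_t)+k(I_t)$ equal to the labor input at every unautomated task. Labor-market clearing gives $\ell^* = (L+\Delta_t k(I_t))/(1-\Phi(I_{t^-}))$, strictly increasing in $k(I_t)$. Writing $S\equiv \int_{i<I_t}y(i)^{(\sigma-1)/\sigma}\,d\Phi(i)$, which is constant in $k(I_t)$, and $M\equiv 1-\Phi(I_{t^-})$, the CES aggregator yields the closed form
\begin{equation*}
(w/A)^{1-\sigma} \;=\; \frac{1}{S\,(\ell^*)^{(1-\sigma)/\sigma}\,+\,M},
\end{equation*}
from which $w$ is manifestly decreasing in $\ell^*$, hence in $k(I_t)$. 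The same aggregator delivers $(Y/A)^{(\sigma-1)/\sigma}=S+M(\ell^*)^{(\sigma-1)/\sigma}$, so $Y$ rises with $\ell^*$, which in turn lifts every $R(i)$ on earlier specific capitals $i<I_t$. Setting $\ell(I_t)=0$ pins down the upper end of region (i) as $k_1=L/(1-\Phi(I_t))$.

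In region (ii), all labor has been pushed into the mass $1-\Phi(I_t)$ of unautomated tasks, so $\ell^*=L/(1-\Phi(I_t))$ is constant while $y(I_t)=k(I_t)$ enters the CES aggregator on its own and drives $Y$ up. Both $w$ and the $R(i)$ on traditional specific capital therefore increase through their shared $Y^{1/\sigma}$ factor, while a parallel rearrangement of $R_{\mathrm{compute}}=A^{(\sigma-1)/\sigma}Y^{1/\sigma}k(I_t)^{-1/\sigma}$ as the reciprocal of a sum monotone in $k(I_t)$ shows that $R_{\mathrm{compute}}$ falls. The second threshold $k_2$ is then defined by $R_{\mathrm{compute}}(k_2)=R$, i.e., the compute return has dropped to the common level of the older specific capitals. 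In region (iii), an imbalance in returns would be suboptimal, so any incremental investment is spread proportionally across all task-specific capitals; the system then behaves as in the baseline one-sector model with uniform capital, and Lemma~\ref{lemma1} immediately gives $w$ monotonically rising with capital accumulation.

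The main obstacle is the wage comparative static in region (i), where the productivity effect (larger $\ell^*$ raises $Y$) and the displacement effect (larger $\ell^*$ depresses the marginal product at the water level) pull in opposite directions. The plan circumvents this by collapsing both into the single closed-form expression above for $(w/A)^{1-\sigma}$, in which $\ell^*$ enters only through the monotone term $S(\ell^*)^{(1-\sigma)/\sigma}$. The remaining bookkeeping---characterizing $k_1$ and $k_2$ from the indifference conditions $\ell(I_t)=0$ and $R_{\mathrm{compute}}=R$, and verifying the monotonicities in regions (ii) and (iii)---is routine differentiation of these closed-form expressions.
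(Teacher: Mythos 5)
Your proposal is correct and follows essentially the same route as the paper's proof: both identify the regimes by comparing the marginal products of labor, compute, and traditional capital, yielding the same thresholds ($k_1=L/(1-\Phi(I_t))$ from $F_L=F_{K(I_t)}$, and $k_2$ from equating the compute return with the return on pre-existing capital, i.e.\ $k_2=K_t/\Phi(I_{t^-})$) and the same three regime-specific CES production functions. You in fact go slightly further than the paper's terse appendix argument by exhibiting the closed form $(w/A)^{1-\sigma}=1/\bigl(S(\ell^{*})^{(1-\sigma)/\sigma}+M\bigr)$ that makes the region-(i) wage decline immediate; the only slip is that $S$ should integrate over $i<I_{t^-}$ (tasks served by pre-existing capital) rather than $i<I_t$ for it to be constant in $k(I_t)$, as your own usage of $M=1-\Phi(I_{t^-})$ already presumes.
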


\begin{proof}
See appendix.
\end{proof}
In summary, rapid advances in automation may lead to episodes in which
certain types of specific capital (like compute) may exhibit very
high returns, but since capital is reproducible, the resulting accumulation
of specific capital will ultimately dissipate the excess returns.
The implication is that after an adjustment period, specific capital
for newly automated processes will be just another form of capital
earning the market rate of return.

\section{Conclusions}

This paper models the economic impact of the transition torwards artificial
general intelligence on output and wages. We develop a compute-centric
framework that represents work as consisting of tasks that vary in
their computational complexity and study how exponential growth in
computing power will affect automation and the advent of artificial
general intelligence (AGI).

The paper illuminates how different plausible assumptions about the
complexity distribution of tasks across \textquotedbl compute space\textquotedbl{}
translate into dramatically different scenarios for economic outcomes.
If the task distribution has an infinite Pareto tail, reflecting unlimited
complexity of human work, then the we show that wages can rise indefinitely
if the tail is sufficiently thick, as capital accumulation automates
ever more complex tasks but there always remains enough for human
labor. However, if the Pareto tail is too thin, then automation ultimately
outpaces capital accumulation and causes a collapse in wages.

Moreover, if the complexity of tasks humans can perform is bounded,
mirroring computational limits on human cognition, then we demonstrate
that wages would at first surge as machines displace more and more
human labor, but would eventually collapse, even before full AGI is
reached. 

Beyond these scenarios, the paper provides several powerful general
insights. Using the economy's factor price frontier, we show that
the effects of automation follow an inverse U-shape, first increasing
wages by utilizing abundant capital but eventually decreasing wages
due to labor displacement. We show that sufficient capital accumulation
is essential to prevent automation from depressing wages. Adding fixed
factors like land causes wages to eventually decline. Yet automating
innovation itself can restart wage growth after an initial automation-driven
collapse.

The novel compute-centric approach opens up a new perspective for
analyzing the economic impact of artificial intelligence. Interesting
next steps include incorporating labor and capital adjustment costs,
modeling endogenous innovation, analyzing distributional impacts more
fully, studying macroeconomic dynamics and policies, and evaluating
the possibility of an intelligence explosion with AGI.

By presenting several rigorous scenarios for how the transition to
AGI may unfold, we hope that this paper will make an important contribution
to enabling economists, policymakers and the public to examine alternative
futures and to prepare for the technological transformations on the
horizon.

\bibliographystyle{apalike}
\bibliography{AGI}

\appendix
\newpage{}

\section{Proofs and Additional Results}

\subsection{Proof of Proposition \ref{prop:race-between-I-and-K}\label{subsec:Proofs-of-Proposition-race-I-and-K}}

To begin with, we show that the growth of capital stock is approximately
exponential at some constant rate. If the economy is in region 1,
the production function is CES. After some algebra, the growth rate
of capital is 
\begin{align*}
\frac{\dot{K}_{t}}{K_{t}} & =\frac{s_{t}A\bigg(K_{t}^{\frac{\sigma-1}{\sigma}}\Phi_{t}^{\frac{1}{\sigma}}+L^{\frac{\sigma-1}{\sigma}}(1-\Phi_{t})^{\frac{1}{\sigma}}\bigg)^{\frac{\sigma}{\sigma-1}}}{K_{t}}-\delta\\
 & =s_{t}A\bigg(\Phi_{t}^{\frac{1}{\sigma}}+K_{t}{}^{-\frac{\sigma-1}{\sigma}}L{}^{\frac{\sigma-1}{\sigma}}(1-\Phi_{t})^{\frac{1}{\sigma}}\bigg)^{\frac{\sigma}{\sigma-1}}-\delta\\
 & =s_{t}A\bigg(1+L^{\frac{\sigma-1}{\sigma}}\bigg/K_{t}{}^{\frac{\sigma-1}{\sigma}}\bigg(\frac{\Phi_{t}}{1-\Phi_{t}}\bigg)^{\frac{1}{\sigma}}\bigg)^{\frac{\sigma}{\sigma-1}}\cdot\Phi_{t}^{\frac{1}{\sigma-1}}-\delta
\end{align*}
Note that the growth rate of capital stock depends on the behavior
of $\Omega_{t}\equiv K_{t}{}^{\frac{\sigma-1}{\sigma}}\bigg(\frac{\Phi_{t}}{1-\Phi_{t}}\bigg)^{\frac{1}{\sigma}}$.
In particular, $\frac{\Phi_{t}}{1-\Phi_{t}}$ grows approximately
at an exponential rate under the Pareto assumption because
\begin{align*}
\frac{\Phi_{t}}{1-\Phi_{t}} & =\frac{1-I_{t}^{-\lambda}}{I_{t}^{-\lambda}}\\
 & I_{t}^{\lambda}-1\\
 & =I_{0}^{\lambda}e^{\lambda gt}-1\\
 & \approx I_{0}^{\lambda}e^{\lambda gt}
\end{align*}
We consider three cases and see whether each case is consistent with
derivation above. First, suppose $\Omega_{t}\rightarrow\infty$. Then
capital stock must grow at a sufficiently low rate at least for large
$t$. That is, the long-run growth rate of capital $g_{K}$ must satisfy
\[
g_{K}<\frac{\lambda g}{1-\sigma}
\]
In this case, the growth rate of capital is 
\begin{align*}
\frac{\dot{K}_{t}}{K_{t}} & \rightarrow sA\bigg(1+0\bigg)^{\frac{\sigma}{\sigma-1}}\cdot1-\delta\\
 & =s^{\infty}A-\delta
\end{align*}
where $s^{\infty}$ is the long-run savings rate defined in Lemma
\ref{lem:long-run-Ramsey}. Then we have $g_{K}=s^{\infty}A-\delta<\frac{\lambda g}{1-\sigma}$
and thus the following upper bound on the long-run savings rate
\[
s^{\infty}<\frac{1}{A}\bigg(\frac{\lambda g}{1-\sigma}+\delta\bigg)
\]

Secondly, consider the case where $\Omega_{t}\rightarrow0$. Then
capital stock must grow at a rate such that 
\[
g_{K}>\frac{\lambda g}{1-\sigma}
\]
In this case, the growth rate of capital converges to a negative value
\begin{align*}
\frac{\dot{K}_{t}}{K_{t}} & \rightarrow sA\cdot0\cdot1-\delta\\
 & =-\delta
\end{align*}
But this contradicts the lower bound on $g_{K}$ and puts an upper
bound on the growth rate of capital stock.

Lastly, suppose $\Omega_{t}$ converges to a nonzero constant. Then
it must be the case that
\[
g_{K}=\frac{\lambda g}{1-\sigma}
\]
which requires $\Omega_{t}\rightarrow\Omega$ where $\Omega$ is some
constant satisfying $\frac{\dot{K}_{t}}{K_{t}}\rightarrow s^{\infty}A\bigg(1+L^{\frac{\sigma-1}{\sigma}}\bigg/\Omega\bigg)^{\frac{\sigma}{\sigma-1}}-\delta=\frac{\lambda g}{1-\sigma}$.
In the three cases, capital stock grows asymptotically at either $s^{\infty}A-\delta$
or $\lambda g/(1-\sigma)$.

To characterize the long-run labor income share, note that the labor
share is 
\begin{align*}
LS_{t} & =\frac{w_{t}L}{Y_{t}}\\
 & =\frac{A^{\frac{\sigma-1}{\sigma}}(Y_{t}/L)^{\frac{1}{\sigma}}(1-\Phi_{t})^{\frac{1}{\sigma}}L}{Y_{t}}\\
 & =A^{\frac{\sigma-1}{\sigma}}Y_{t}^{\frac{1}{\sigma}-1}L^{1-\frac{1}{\sigma}}(1-\Phi_{t})^{\frac{1}{\sigma}}\\
 & =A^{\frac{\sigma-1}{\sigma}}Y_{t}^{-\frac{\sigma-1}{\sigma}}L^{\frac{\sigma-1}{\sigma}}(1-\Phi_{t})^{\frac{1}{\sigma}}\\
 & =A^{\frac{\sigma-1}{\sigma}}A^{-\frac{\sigma-1}{\sigma}}\bigg(K_{t}^{\frac{\sigma-1}{\sigma}}\Phi_{t}^{\frac{1}{\sigma}}+L^{\frac{\sigma-1}{\sigma}}(1-\Phi_{t})^{\frac{1}{\sigma}}\bigg)^{-1}L^{\frac{\sigma-1}{\sigma}}(1-\Phi_{t})^{\frac{1}{\sigma}}\\
 & =\bigg(K_{t}^{\frac{\sigma-1}{\sigma}}\Phi_{t}^{\frac{1}{\sigma}}+L^{\frac{\sigma-1}{\sigma}}(1-\Phi_{t})^{\frac{1}{\sigma}}\bigg)^{-1}L^{\frac{\sigma-1}{\sigma}}(1-\Phi_{t})^{\frac{1}{\sigma}}\\
 & =\frac{L^{\frac{\sigma-1}{\sigma}}(1-\Phi_{t})^{\frac{1}{\sigma}}}{K_{t}^{\frac{\sigma-1}{\sigma}}\Phi_{t}^{\frac{1}{\sigma}}+L^{\frac{\sigma-1}{\sigma}}(1-\Phi_{t})^{\frac{1}{\sigma}}}\\
 & =\frac{L^{\frac{\sigma-1}{\sigma}}}{K_{t}^{\frac{\sigma-1}{\sigma}}\bigg(\frac{\Phi_{t}}{1-\Phi_{t}}\bigg){}^{\frac{1}{\sigma}}+L^{\frac{\sigma-1}{\sigma}}}\\
 & =\frac{L^{\frac{\sigma-1}{\sigma}}}{\Omega_{t}+L^{\frac{\sigma-1}{\sigma}}}
\end{align*}
In the first case, $\Omega_{t}\rightarrow1$ and so $LS_{t}\rightarrow0$.
In the second case, $\Omega_{t}\rightarrow0$ and so $LS_{t}\rightarrow1$.
Lastly, in the third case, $\Omega_{t}\rightarrow\bar{K}^{\frac{\sigma-1}{\sigma}}\bar{I}^{\frac{1}{\sigma}}$.
Since \textbf{$s^{\infty}A\bigg(1+L^{\frac{\sigma-1}{\sigma}}\bigg/\Omega\bigg)^{\frac{\sigma}{\sigma-1}}-\delta=\frac{\lambda g}{1-\sigma}$},
it follows that
\begin{align*}
s^{\infty}A\bigg(1+L^{\frac{\sigma-1}{\sigma}}\bigg/\Omega\bigg)^{\frac{\sigma}{\sigma-1}}-\delta & =\frac{\lambda g}{1-\sigma}\\
\bigg(1+L^{\frac{\sigma-1}{\sigma}}\bigg/\Omega\bigg)^{\frac{\sigma}{\sigma-1}} & =\frac{1}{s^{\infty}A}\bigg(\frac{\lambda g}{1-\sigma}+\delta\bigg)\\
\bigg(\frac{\Omega+L^{\frac{\sigma-1}{\sigma}}}{\Omega}\bigg)^{\frac{\sigma}{\sigma-1}} & =\frac{1}{s^{\infty}A}\bigg(\frac{\lambda g}{1-\sigma}+\delta\bigg)\\
\bigg(\frac{\Omega}{\Omega+L^{\frac{\sigma-1}{\sigma}}}\bigg)^{\frac{\sigma}{\sigma-1}} & =\frac{s^{\infty}A}{\frac{\lambda g}{1-\sigma}+\delta}\\
\bigg(1-\frac{L^{\frac{\sigma-1}{\sigma}}}{\Omega+L^{\frac{\sigma-1}{\sigma}}}\bigg)^{\frac{\sigma}{\sigma-1}} & =\frac{s^{\infty}A}{\frac{\lambda g}{1-\sigma}+\delta}\\
1-\frac{L^{\frac{\sigma-1}{\sigma}}}{\Omega+L^{\frac{\sigma-1}{\sigma}}} & =\bigg[\frac{s^{\infty}A}{\frac{\lambda g}{1-\sigma}+\delta}\bigg]^{\frac{\sigma-1}{\sigma}}\\
\frac{L^{\frac{\sigma-1}{\sigma}}}{\Omega+L^{\frac{\sigma-1}{\sigma}}} & =1-\bigg[\frac{s^{\infty}A}{\frac{\lambda g}{1-\sigma}+\delta}\bigg]^{\frac{\sigma-1}{\sigma}}
\end{align*}
Therefore, $LS_{t}\rightarrow1-\bigg[\frac{s^{\infty}A}{\frac{\lambda g}{1-\sigma}+\delta}\bigg]^{\frac{\sigma-1}{\sigma}}=1-\bigg[\frac{(A-\rho-\delta+\eta\delta)/\eta}{\frac{\lambda g}{1-\sigma}+\delta}\bigg]^{\frac{\sigma-1}{\sigma}}$.

If the economy starts in region 1 then the economy stays in region
1 as long as
\[
\frac{\dot{\hat{I}}_{t}}{\hat{I}_{t}}\geq\frac{\dot{I}_{t}}{I_{t}}
\]
That is, the threshold grows faster than the automation index. Under
the Pareto assumption, the inequality is equivalent to
\[
\frac{\dot{K}_{t}}{K_{t}}\geq\frac{1+K_{t}/L}{K_{t}/L}\cdot\lambda g
\]
where $\frac{1+K_{t}/L}{K_{t}/L}$ converges to one from above. Thus,
the above inequality delivers a lower bound on the savings rate for
the economy to asymptotically stay in region 1:
\begin{equation}
s^{\infty}A-\delta>\lambda g\label{eq:lower-bound-on-s}
\end{equation}
The inequality ensures that capital accumulation is sufficiently fast
compared to automation. If it is violated then $I_{t}$ crosses $\hat{I}_{t}$
eventually and wages collapse to $A$.

To further examine how capital accumulation and automation shape the
asymptotic behavior of wages, consider the growth rate of wages
\[
\frac{\dot{w}_{t}}{w_{t}}=\frac{1}{\sigma}\frac{\dot{Y}_{t}}{Y_{t}}-\frac{1}{\sigma}\frac{\dot{\Phi}_{t}}{1-\Phi_{t}}
\]
derived from $w_{t}=F_{L}$ and
\begin{align*}
\log F_{L} & =\log A+\frac{1}{\sigma}\log(Y/A)-\frac{1}{\sigma}\log L+\frac{1}{\sigma}\log(1-\Phi(I))\\
\frac{d\log F_{L}}{dt} & =\frac{1}{\sigma}\frac{d\log Y}{dt}+\frac{1}{\sigma}\frac{d\log(1-\Phi(I))}{dt}\\
 & =\frac{1}{\sigma}\frac{d\log Y}{dt}+\frac{1}{\sigma}\frac{1}{1-\Phi}(-\dot{\Phi})
\end{align*}
. The above equation shows that the growth rate consists of output
growth and the displacement effect of automation. Note that the growth
rate of output is
\[
\frac{\dot{Y}_{t}}{Y_{t}}=S_{K}\frac{\dot{K}_{t}}{K_{t}}+\frac{1}{1-\sigma}\frac{\dot{\Phi}_{t}}{1-\Phi_{t}}\bigg(S_{L}-S_{K}\frac{1-\Phi_{t}}{\Phi_{t}}\bigg)
\]
where $S_{K}\equiv\frac{K_{t}^{\frac{\sigma-1}{\sigma}}\Phi_{t}^{\frac{1}{\sigma}}}{K_{t}^{\frac{\sigma-1}{\sigma}}\Phi_{t}^{\frac{1}{\sigma}}+L^{\frac{\sigma-1}{\sigma}}(1-\Phi_{t})^{\frac{1}{\sigma}}}$
and $S_{L}\equiv\frac{L^{\frac{\sigma-1}{\sigma}}(1-\Phi_{t})^{\frac{1}{\sigma}}}{K_{t}^{\frac{\sigma-1}{\sigma}}\Phi_{t}^{\frac{1}{\sigma}}+L^{\frac{\sigma-1}{\sigma}}(1-\Phi_{t})^{\frac{1}{\sigma}}}$,
omitting time subscripts for notational simplicity. The first term
is growth due to capital accumulation and the second term is growth
due to the productivity effect of automation. The wage growth rate
is then
\begin{equation}
\frac{\dot{w}_{t}}{w_{t}}=\frac{1}{\sigma}\bigg[S_{K}\frac{\dot{K}_{t}}{K_{t}}+\frac{1}{1-\sigma}\frac{\dot{\Phi}_{t}}{1-\Phi_{t}}\bigg(S_{L}-S_{K}\frac{1-\Phi_{t}}{\Phi_{t}}\bigg)-\frac{\dot{\Phi}_{t}}{1-\Phi_{t}}\bigg]\label{eq:wage-growth-1}
\end{equation}
That is, wages rise as long as capital accumulation and the productivity
effect dominate the displacement effect. In fact, this is another
version of (\ref{eq:balanced-savings-rate}), which can be seen by
setting $\dot{w}_{t}=0$, and tells us what determines the growth
rate of wages. Under the Pareto assumption, we have
\[
\frac{\dot{w}_{t}}{w_{t}}=\frac{1}{\sigma}\bigg[S_{K}\frac{\dot{K}_{t}}{K_{t}}+\frac{1}{1-\sigma}\cdot\lambda g\cdot\bigg(S_{L}-S_{K}\frac{1-\Phi_{t}}{\Phi_{t}}\bigg)-\lambda g\bigg]
\]
where $\lambda g$ is the rate of automation adjusted by the decay
rate of the fraction of tasks for labor. 

Notice that 
\[
S_{K}=\frac{(*)}{(*)+L^{\frac{\sigma-1}{\sigma}}}
\]
If $\lambda g>(1-\sigma)(s^{\infty}A-\delta)$ (i.e. the first case
in the beginning of the proof) then $S_{K}\rightarrow1$ since $(*)\rightarrow\infty$.
Note that
\begin{align*}
\frac{\dot{w}_{t}}{w_{t}} & =\frac{1}{\sigma}\bigg[S_{K}\frac{\dot{K}_{t}}{K_{t}}+\frac{1}{1-\sigma}\cdot\lambda g\cdot\bigg(S_{L}-S_{K}\frac{1-\Phi_{t}}{\Phi_{t}}\bigg)-\lambda g\bigg]
\end{align*}
 As $t\rightarrow\infty$, the growth rate of wages converges as
follows
\begin{align*}
\frac{\dot{w}_{t}}{w_{t}} & \rightarrow\frac{1}{\sigma}\bigg[1\cdot(s^{\infty}A-\delta)+\frac{1}{1-\sigma}\cdot\lambda g\cdot\bigg(0-1\cdot0\bigg)-\lambda g\bigg]\\
 & =\frac{1}{\sigma}\bigg[s^{\infty}A-\delta-\lambda g\bigg]
\end{align*}
Thus, if $s^{\infty}A-\delta>\lambda g$ (but $s^{\infty}A-\delta<\lambda g/(1-\sigma)$)
then wages grow exponentially at an asymptotic rate $\frac{1}{\sigma}\bigg[s^{\infty}A-\delta-\lambda g\bigg]=\frac{1}{\sigma}\bigg[\frac{A-\rho-\delta}{\eta}-\lambda g\bigg]$. 

In the case where capital stock asymptotically grows at $\lambda g/(1-\sigma)$,
$S_{K}$ converges to one. As a result, the growth rate of wages converges
as follows
\begin{align*}
\frac{\dot{w}_{t}}{w_{t}} & \rightarrow\frac{1}{\sigma}\bigg[S_{K}\frac{\lambda g}{1-\sigma}+\frac{1}{1-\sigma}\cdot\lambda g\cdot\bigg(1-S_{K}-S_{K}\cdot0\bigg)-\lambda g\bigg]\\
 & =\frac{1}{\sigma}\bigg[S_{K}\frac{\lambda g}{1-\sigma}+\frac{\lambda g}{1-\sigma}\cdot(1-S_{K})-\lambda g\bigg]\\
 & =\frac{1}{\sigma}\bigg[\frac{\lambda g}{1-\sigma}-\lambda g\bigg]\\
 & =\frac{\lambda g}{1-\sigma}
\end{align*}
\begin{align*}
\therefore\frac{\dot{w}_{t}}{w_{t}} & \rightarrow\frac{\lambda g}{1-\sigma}
\end{align*}
If $s^{\infty}A-\delta\leq\lambda g$ then wages decline until the
automation index crosses the threshold and collapse to $A$, as (\ref{eq:lower-bound-on-s})
indicates.

\subsection{Proof of Proposition \ref{prop:specificK}}
\begin{proof}[Proof of Proposition \ref{prop:specificK}]
The production function can be written as
\begin{align*}
Y_{t} & =A\bigg(K_{t}^{\frac{\sigma-1}{\sigma}}\Phi(I_{t-})^{\frac{1}{\sigma}}+K(I)_{t}^{\frac{\sigma-1}{\sigma}}\Delta_{t}^{\frac{1}{\sigma}}+L^{\frac{\sigma-1}{\sigma}}(1-\Phi(I_{t}))^{\frac{1}{\sigma}}\bigg)^{\frac{\sigma}{\sigma-1}}\\
F_{K} & =A\bigg(K_{t}^{\frac{\sigma-1}{\sigma}}\Phi(I_{t-})^{\frac{1}{\sigma}}+K(I)_{t}^{\frac{\sigma-1}{\sigma}}\Delta_{t}^{\frac{1}{\sigma}}+L^{\frac{\sigma-1}{\sigma}}(1-\Phi(I_{t}))^{\frac{1}{\sigma}}\bigg)^{\frac{1}{\sigma-1}}K_{t}^{-\frac{1}{\sigma}}\Phi(I_{t-})^{\frac{1}{\sigma}}\\
F_{K(I_{t})} & =A\bigg(K_{t}^{\frac{\sigma-1}{\sigma}}\Phi(I_{t-})^{\frac{1}{\sigma}}+K(I)_{t}^{\frac{\sigma-1}{\sigma}}\Delta_{t}^{\frac{1}{\sigma}}+L^{\frac{\sigma-1}{\sigma}}(1-\Phi(I_{t}))^{\frac{1}{\sigma}}\bigg)^{\frac{1}{\sigma-1}}K(I)_{t}^{-\frac{1}{\sigma}}\Delta_{t}^{\frac{1}{\sigma}}\\
F_{L} & =A\bigg(K_{t}^{\frac{\sigma-1}{\sigma}}\Phi(I_{t-})^{\frac{1}{\sigma}}+K(I)_{t}^{\frac{\sigma-1}{\sigma}}\Delta_{t}^{\frac{1}{\sigma}}+L^{\frac{\sigma-1}{\sigma}}(1-\Phi(I_{t}))^{\frac{1}{\sigma}}\bigg)^{\frac{1}{\sigma-1}}L^{-\frac{1}{\sigma}}(1-\Phi(I_{t}))^{\frac{1}{\sigma}}
\end{align*}
If $F_{L}<F_{K(I_{t})}$, then the specific capital and labor are
perfectly substitutable. That is,
\begin{align*}
\frac{K(I_{t})}{L} & <\frac{\Delta_{t}}{1-\Phi(I_{t})}\\
k(I_{t})=\frac{K(I_{t})}{\Delta_{t}} & <\frac{L}{1-\Phi(I_{t})}
\end{align*}
in which case, the production function can be written as
\[
Y_{t}=A\bigg(K_{t}^{\frac{\sigma-1}{\sigma}}\Phi(I_{t-})^{\frac{1}{\sigma}}+(K(I_{t})+L)^{\frac{\sigma-1}{\sigma}}(1-\Phi(I_{t-}))^{\frac{1}{\sigma}}\bigg)^{\frac{\sigma}{\sigma-1}}
\]
If $F_{K}>F_{K(I_{t})}$, then the specific capital and the traditional
capital are perfectly substitutable. That is,
\begin{align*}
\frac{K(I_{t})}{K_{t}} & >\frac{\Delta_{t}}{\Phi(I_{t-})}\\
k(I_{t})=\frac{K(I_{t})}{\Delta_{t}} & >\frac{K_{t}}{\Phi(I_{t-})}
\end{align*}
in which case, the production function can be written as
\[
Y_{t}=A\bigg((K_{t}+K(I_{t}))^{\frac{\sigma-1}{\sigma}}\Phi(I_{t})^{\frac{1}{\sigma}}+L{}^{\frac{\sigma-1}{\sigma}}(1-\Phi(I_{t}))^{\frac{1}{\sigma}}\bigg)^{\frac{\sigma}{\sigma-1}}
\]
\end{proof}

\end{document}